\documentclass[acmsmall,nonacm]{acmart}

\usepackage{microtype}
\usepackage{enumitem}
\usepackage{booktabs}
\usepackage{xparse}

\makeatletter
\def\thm@space@setup{%
  \thm@preskip=4pt plus 1pt minus 1pt
  \thm@postskip=4pt plus 1pt minus 1pt
}
\makeatother

\newtheorem{theorem}{Theorem}
\newtheorem{lemma}{Lemma}

\newtheorem{definition}{Definition}
\newtheorem{proposition}{Proposition}
\theoremstyle{remark}
\newtheorem{remark}{Remark}
\newtheorem{example}{Example}

\newcommand{\Hist}{\mathcal{H}}
\newcommand{\Spec}{\mathsf{Spec}}
\newcommand{\Obs}{\mathsf{Obs}}
\newcommand{\Ord}{\preceq}
\newcommand{\hext}{\sqsubseteq_h}
\newcommand{\vdog}{\texttt{\small dog}}
\newcommand{\vcat}{\texttt{\small cat}}

\begin{document}

\title{Complete CALM: A Coordination Criterion for Specifications}

\author{Joseph M. Hellerstein}
\affiliation{%
  \institution{UC Berkeley \& Amazon Web Services}
  \city{Berkeley, CA}
  \country{USA}
}

\begin{abstract}
  The CALM theorem connects coordination-freedom to monotonicity, but
  is tied to relational transducers and set-inclusion growth.
  We generalize it to arbitrary concurrent specifications.
  A specification maps execution histories to outcome sets under a
  declared refinement order; we prove it admits coordination-free
  implementation if and only if its outcomes are monotone (\emph{Complete CALM}).
  The criterion subsumes CALM, CRDTs, I-confluence, and HATs as
  instances, enables verification of proper coordination, and yields
  a \emph{Complete CAP} companion: a specification admits a
  consistent, available, partition-tolerant implementation if and only if it is
  distributed-monotone.
\end{abstract}

\begin{CCSXML}
<ccs2012>
<concept>
<concept_id>10003752.10003790.10003795</concept_id>
<concept_desc>Theory of computation~Distributed computing models</concept_desc>
<concept_significance>500</concept_significance>
</concept>
<concept>
<concept_id>10002951.10003317.10003347.10003356</concept_id>
<concept_desc>Information systems~Database transaction processing</concept_desc>
<concept_significance>300</concept_significance>
</concept>
</ccs2012>
\end{CCSXML}

\ccsdesc[500]{Theory of computation~Distributed computing models}
\ccsdesc[300]{Information systems~Database transaction processing}

\keywords{coordination, monotonicity, concurrent systems, CALM, specifications}

\maketitle

\section{Introduction}
\label{sec:introduction}

The CALM theorem characterizes coordination-freedom for relational
transducers: a query admits a coordination-free implementation iff
it is monotone~\cite{hellerstein2010declarative,ameloot2013relational,ameloot2015weaker,baccaert2026spectrum}.
Other boundaries have similar shape but live in disjoint formalisms:
I-confluence for transactional
invariants~\cite{bailis2014coordination}, HATs for isolation
levels~\cite{bailis2013hat}, CRDTs for replicated
objects~\cite{shapiro2011crdt}, CAP for partition
tolerance~\cite{brewer2000towards,gilbert2002cap,hellerstein2020keeping}.
Each names a different sufficient property; none reduces to another.

We prove a single semantic theorem (\emph{Complete CALM}) that
recovers all of these as instances and operates on a much wider
class of specifications.
A \emph{specification} is a triple: an event universe (fixing the
space of execution histories), an observation function (mapping each
history to its admissible outcomes), and a refinement order
(declaring when one outcome extends another without contradiction).
A specification is \emph{monotone} if every admissible outcome at a
history has a compatible refinement at every causal extension.
Complete CALM states that a specification admits a correct
coordination-free implementation iff it is monotone---a property
proved at the semantic level and validated operationally against the
standard I/O automaton model.

The power comes from operating on specifications rather than programs.
A specification-level test doubles as a \emph{proof technique}: instantiate
the framework for abstract properties like serializability or snapshot
isolation, check monotonicity, and the coordination requirement follows directly.
The same move also verifies \emph{proper coordination}: resolve a
non-monotone specification with coordination and re-test the residual
for monotonicity.
Pushed further, it yields a bidirectional \emph{Complete CAP} theorem
and identifies the \emph{coordination-free frontier}---the strongest
guarantees achievable without coordination for a given interface.
These capabilities yield the following contributions:

\begin{enumerate} 
  \item \emph{Complete CALM} (Section~\ref{sec:complete-calm}):
        coordination-free implementation exists iff the specification
        is monotone.
        Handles any refinement order, not only set inclusion.
  \item \emph{Complete CAP} (Section~\ref{sec:complete-cap}):
        consistent + available + partition-tolerant implementation
        exists iff the specification is distributed-monotone.
  \item \emph{Proper coordination}
        (Section~\ref{sec:separation}): verifies that coordination
        correctly discharges non-monotone specifications.
  \item \emph{Instantiations}
        (Sections~\ref{sec:calm-instance} and~\ref{sec:applications}):
        subsumes relational-transducer CALM and recovers boundaries
        for isolation levels, I-confluence, and CRDTs.
        Classical results from distributed
        systems fall out naturally as further instances
        (Appendix~\ref{app:classical}).
  \item \emph{Coordination-free frontier}
        (Section~\ref{sec:frontier}): characterizes the strongest
        coordination-free relaxations for a fixed observation
        interface.
        Recovers causal consistency for registers; yields new results
        for queues and search structures.
\end{enumerate}

\section{Running Example: A Replicated Register}
\label{sec:example}

Consider a register replicated at two nodes, each accepting local
read/write operations and propagating updates via messages.
We analyze two specifications over the same histories---one requiring
coordination, one not---illustrating how the choice of outcome
structure determines the boundary.

\paragraph{Setup.}
A history is a partially ordered set of events under Lamport's
happens-before relation~\cite{lamport1978time}.
Event $e_1$ \emph{happens before} $e_2$ if $e_1$ precedes $e_2$
in one node's program order, or if $e_1$ is a message send
and $e_2$ is the corresponding receive, or transitively.
Event pairs that are not ordered are \emph{concurrent}.

Assume the register initially holds $\bot$. Nodes $p$ and $q$ write
concurrently: $\mathsf{w}_p(\vdog)$ at $p$, $\mathsf{w}_q(\vcat)$ at $q$.
Consider two subsequent histories.
In history $H_1$, both writes complete, $q$'s write is delivered to
$p$, and $p$ then reads the register, returning $\vcat$.
$H_2$ extends $H_1$ by delivering $p$'s write to $q$, after which $q$
reads and returns $\vdog$.
Since $H_2$ adds only causally later events, it is a valid future of
$H_1$.

\paragraph{Linearizability requires coordination.}
Under linearizability~\cite{herlihy1990linearizability}, outcomes are
total orders on completed operations consistent with real-time order.
Reads return the most recent preceding write.
One output sequence refines another via prefix extension.

At $H_1$, the outcome $o_1 = \langle \mathsf{w}_p(\vdog),
\mathsf{w}_q(\vcat), r_p{\mapsto}\vcat \rangle$ is admissible: it places
$\mathsf{w}_q$ before $r_p$ (consistent with delivery) and explains
$r_p = \vcat$.
But at $H_2$, explaining $r_q = \vdog$ requires placing $\mathsf{w}_p$
after $\mathsf{w}_q$ in the linearization (so $\mathsf{w}_q$ is the
most recent write before $r_q$), while explaining $r_p = \vcat$ requires
the opposite order---a contradiction.
No linearization at $H_2$ is consistent with the commitment made at
$H_1$.
This is non-monotonicity: a commitment that was valid at one point
cannot survive a causally admissible future.
The operational consequence: before responding $r_p \mapsto \vcat$,
process $p$ must ensure $q$ will not independently commit to the
opposite write order---this requires coordination.

\paragraph{Causal consistency does not.}
Under causal consistency~\cite{ahamad1995causal}, each process
observes a \emph{causal view}---a partial order on operations in its
causal past.
Crucially, different processes may order concurrent writes
differently; there is no requirement of a single global order.
Refinement is causal-prefix extension: adding operations while
preserving existing causal order.

At $H_1$, the observable partial order records:
$\mathsf{w}_q \rightarrow r_p$ (delivery creates causality) and
$r_p \mapsto \vcat$ (at node $p$, $w_q$ was received before $r_p$ was issued).
The two writes remain unordered---they are concurrent.
At $H_2$, the partial order grows: it adds
$\mathsf{w}_p \rightarrow r_q$ and $r_q \mapsto \vdog$.
This is a valid extension: the new edges are causally later than $H_1$'s events, and the
concurrent writes \emph{remain} unordered---no earlier commitment is
contradicted.
Each process sees a view consistent with its own causal past; the
views disagree on the relative order of concurrent writes, but CC
permits this.
Every outcome has a compatible extension at every future.
No coordination is needed.

\smallskip\noindent
The two specifications share the same histories but differ in outcome
structure: linearizability demands a single global order (fragile
under extension); CC permits per-process views (stable under
extension).
The analysis across distributed nodes applies to concurrent threads on a single
node---the criterion depends on the happens-before structure, not
the mechanism of concurrency.

\section{Framework}
\label{sec:model}

We model concurrent executions as Lamport histories and define
specifications, implementations, and coordination-freedom in terms of
observable outcomes.
The goal is to isolate a single semantic question: can
outcomes accumulate without coordination---i.e., suppressing admissible futures?

The framework applies to any concurrent setting---distributed or local.
Coordination has costs in both: locking and abort in a local database,
latency and partition vulnerability in a distributed system.
The criterion characterizes when these costs are \emph{intrinsic} to the
specification versus \emph{incidental} to a particular implementation.

We build toward the criterion in three steps: first the execution model
(histories and their causal growth), then the specification language
(outcomes and refinement), and finally the implementation model
(coordination-freedom and its operational grounding).

\subsection{Histories and Futures}
\label{sec:histories}

We represent concurrent executions as partially ordered sets of events,
following Lamport~\cite{lamport1978time}.
The partial order captures causal precedence; concurrent events have no
ordering.

\begin{definition}[Event universe]
  \label{def:event-types}
  A specification fixes pairwise-disjoint sets of events:
  $E_{\mathit{inv}}$ (client invocations),
  $E_{\mathit{resp}}$ (client responses),
  $E_{\mathit{int}}$ (internal computation),
  $E_{\mathit{send}}$ (message sends),
  $E_{\mathit{recv}}$ (message receives).
  We write $E_{\mathit{iface}} = E_{\mathit{inv}} \cup
  E_{\mathit{resp}}$ for the client-facing interface events.
\end{definition}

\noindent
The interface events $E_{\mathit{iface}}$ are what the specification
constrains: $E_{\mathit{inv}}$ is controlled by the environment
(clients); $E_{\mathit{resp}}$ is controlled by the implementation.
In the register example, $E_{\mathit{inv}}$ includes
$\mathit{inv}(w)$, $\mathit{inv}(r)$; $E_{\mathit{resp}}$ includes
$\mathit{resp}(w)$, $\mathit{resp}(r, v)$; the remaining sets capture
local computation and inter-node messages.

\begin{definition}[History]
  \label{def:history}
  A \emph{history} is a pair $H = (E, \rightarrow)$ where
  $E \subseteq E_{\mathit{inv}} \cup E_{\mathit{resp}} \cup E_{\mathit{int}} \cup
  E_{\mathit{send}} \cup E_{\mathit{recv}}$,
  $\rightarrow$ is a strict partial order on $E$, and
  there is an injective matching function on recvs
  $\mu : E \cap E_{\mathit{recv}} \hookrightarrow
  E \cap E_{\mathit{send}}$,
  with $\mu(r) \rightarrow r$ for every receive $r$
  (send precedes its matching receive).
  A history is \emph{well-formed} if it satisfies these constraints.
  We write $\Hist$ for the set of well-formed histories over $E$.
\end{definition}

\noindent
A history need not be complete: it may represent a prefix with pending
operations or in-flight messages.
Messages may be delayed, reordered, or lost (a send without a matching
receive denotes a lost or indefinitely delayed message).
We impose no fairness or progress assumptions.

\begin{example}[Register history $H_1$ from Section~\ref{sec:example}]
  \label{ex:h1-formal}
  Events and their happens-before order:
  \\[2pt]
  \noindent
  \begin{tabular}{@{}l@{\;\;}l@{}}
    \textbf{Interface:} &
      $E_{\mathit{inv}} = \{\mathit{inv}(w_p(\vdog)),\;
       \mathit{inv}(w_q(\vcat)),\; \mathit{inv}(r_p)\}$ \\
    & $E_{\mathit{resp}} = \{\mathit{resp}(w_p(\vdog)),\;
       \mathit{resp}(w_q(\vcat)),\; \mathit{resp}(r_p, \vcat)\}$ \\[2pt]
    \textbf{Messages:} &
      $E_{\mathit{send}} = \{\mathit{send}(w_q)\}$,\quad
      $E_{\mathit{recv}} = \{\mathit{recv}(w_q)\}$. \\[2pt]
    \textbf{Ordering:} &
      program order at $p$:
      $\mathit{inv}(w_p) \rightarrow \mathit{resp}(w_p)
       \rightarrow \mathit{recv}(w_q) \rightarrow \mathit{inv}(r_p)
       \rightarrow \mathit{resp}(r_p,\vcat)$; \\
    & program order at $q$:
      $\mathit{inv}(w_q) \rightarrow \mathit{resp}(w_q)
       \rightarrow \mathit{send}(w_q)$; \\
    & delivery:
      $\mathit{send}(w_q) \rightarrow \mathit{recv}(w_q)$.
  \end{tabular}
\end{example}

As an execution unfolds, new events occur.
A \emph{future} of a history preserves all existing events and their
causal order, adding only causally later events:

\begin{definition}[Future]
  \label{def:future}
  For $H_1 = (E_1, \rightarrow_1)$ and $H_2 = (E_2, \rightarrow_2)$,
  we write $H_1 \hext H_2$ (\emph{$H_2$ is a future of $H_1$}) if:
  \begin{quote}
  \noindent
  $(i)\; E_1 \subseteq E_2$ \qquad
  $(ii)\; {\rightarrow_1} = {\rightarrow_2} \cap (E_1 \times E_1)$\\[2pt]
  $(iii)\; E_1$ is downward-closed under $\rightarrow_2$:
  if $e \in E_1$ and $e' \rightarrow_2 e$, then $e' \in E_1$.
  \end{quote}
\end{definition}

\noindent
Condition (iii) is the key constraint: a future can add causally
\emph{later} events but cannot insert new predecessors of existing ones.
The past is fixed; only the future is open.

\begin{example}[Extending $H_1$ to $H_2$]
  \label{ex:h2-formal}
  $H_2$ adds the following to $H_1$:
  \\[2pt]
  \noindent
  \begin{tabular}{@{}l@{\;\;}l@{}}
    \textbf{New interface:} &
      $\{\mathit{inv}(r_q)\}$,\quad
      $\{\mathit{resp}(r_q, \vdog)\}$. \\[2pt]
    \textbf{New messages:} &
      $\{\mathit{send}(w_p)\}$,\quad
      $\{\mathit{recv}(w_p)\}$. \\[2pt]
    \textbf{New ordering:} &
      program order at $p$:
      $\mathit{resp}(r_p, \vcat) \rightarrow \mathit{send}(w_p)$; \\
    & program order at $q$:
      $\mathit{send}(w_q) \rightarrow \mathit{recv}(w_p)
       \rightarrow \mathit{inv}(r_q) \rightarrow \mathit{resp}(r_q, \vdog)$; \\
    & delivery:
      $\mathit{send}(w_p) \rightarrow \mathit{recv}(w_p)$.
  \end{tabular}
  \\[2pt] \noindent
  All new events are causally ordered after existing $H_1$ events, so $H_1 \hext H_2$.
\end{example}

\subsection{Specifications}
\label{sec:specification}

A specification says which outcomes are admissible at each history and
how outcomes relate to one another under refinement.
It is a purely semantic object: it constrains what may be observed, not
how executions unfold.

\begin{definition}[Specification]
  \label{def:spec}
  A \emph{specification} is a triple $\Spec = (E, \Obs, \Ord)$ where:
  \begin{itemize} 
    \item $E$ is an event universe fixing the space of histories $\Hist$,
    \item $\Obs : \Hist \to \mathcal{P}(O)$ maps each history to a
          set of admissible \emph{outcomes} (possibly empty for
          unrealizable histories),
    \item $\Ord$ is a partial order on outcomes, where $o_1 \Ord o_2$
          means $o_2$ \emph{refines} $o_1$ without contradicting it.
  \end{itemize}
\end{definition}

\noindent
$\Obs(H)$ is the set of outcomes the specification \emph{admits}
at $H$: what an implementation may expose at the system interface.
An outcome not in $\Obs(H)$ is inadmissible there; it is forbidden by the specification.
If $\Obs(H) = \emptyset$,  then $H$ itself is
unrealizable---no correct implementation may produce it. However, an empty $\Obs$ 
becomes useful for studying safety specifications such as invariant enforcement
(Section~\ref{sec:i-confluence}).
Multiple outcomes in $\Obs(H)$ reflect semantic flexibility
(multiple valid answers); any concrete execution exposes at most one.

The outcome order $\Ord$ is part of the specification---it is not
derived from execution structure but declared as part of the 
explicit semantic
contract.
It encodes which transitions between outcomes merely add detail (upward
in $\Ord$, compatible with prior observations) and which exclude
alternatives (incomparable or downward, contradicting prior
observations).
Different specifications over the same histories may use different
outcome orders.

The modeling discipline is straightforward.
$\Obs$ captures whatever you want to constrain at the system
interface; $\Ord$ declares when one outcome subsumes another.
If a semantic property matters, surface it in $\Obs$ and accommodate
it in $\Ord$.
If it is not in $\Obs$, no external party can detect a violation, and
there is no obligation to coordinate over it.
Distinct outcomes in $\Obs(H)$ are observationally distinguishable;
outcomes related by $\Ord$ are compatible.

\begin{remark}[Choosing the outcome order]
  The outcome order $\Ord$ is a modeling choice, but in practice it
  often follows naturally from the structure of the output:
  prefix containment for sequences,
  set inclusion for unordered collections,
  lattice order for domain values. These are defaults, not rules.
  Changing $\Ord$ changes the specification---its role is to model the desired outcome semantics.
\end{remark}

\begin{example}[Outcomes and order for the register]
  \label{ex:register-outcomes}
  Under $\Spec_{\mathit{lin}}$, outcomes are total orders
  of completed operations; $\Ord$ is prefix extension.
  At $H_1$, 
  $o_1 = \langle \mathsf{w}_p(\vdog), \mathsf{w}_q(\vcat),
  r_p{\,\mapsto}\vcat \rangle$ is the only admissible outcome:
  the only linearization consistent with happens-before that explains
  $r_p = \vcat$. This commits to the write order
  $\mathsf{w}_p$ happens-before $\mathsf{w}_q$.
  At $H_2$, explaining $r_q \mapsto \vdog$ requires a linearization with
  $\mathsf{w}_q$ happens-before $\mathsf{w}_p$---the opposite order.
  Any extension of $o_1$ must preserve its prefix (by $\Ord$), but
  no linearization at $H_2$ does so: $o_1$ has no refinement at $H_2$.
\end{example}

\subsection{Implementations and Coordination-Freedom}
\label{sec:implementations}

A specification is passive: it declares which outcomes are admissible but
says nothing about which histories actually arise or which outcomes are
actually exposed.
An \emph{implementation} is active: it determines which outcome from
$\Obs(H)$ is exposed at each history by emitting response events
($E_{\mathit{resp}}$), and controls internal computation
($E_{\mathit{int}}$) and message sends ($E_{\mathit{send}}$).
Each emitted response becomes an event in the history, constraining
which outcomes remain admissible at future histories.
The environment controls invocations ($E_{\mathit{inv}}$)
and message deliveries ($E_{\mathit{recv}}$).

\begin{definition}[Implementation]
  \label{def:implementation}
  An implementation $I$ specifies, for each history $H$, a set of
  \emph{realizable futures}
  $\mathcal{R}_I(H) \subseteq \{H' \in \Hist \mid H \hext H'\}$,
  downward-closed under $\hext$.
\end{definition}

\noindent
An implementation determines which futures can arise; downward-closure
means every prefix of a realizable history is itself realizable.
The central question: can an implementation preserve all of the
specification's possibilities, or must it narrow them?

\begin{definition}[Future-consistent outcome]
  \label{def:future-consistent}
  An outcome $o \in \Obs(H)$ is \emph{future-consistent} at $H$ if
  for every future $H'$ of $H$, there exists $o' \in \Obs(H')$ with
  $o \Ord o'$.
  An outcome is \emph{future-inconsistent} at $H$ if some future
  admits no such refinement.
\end{definition}

\begin{definition}[Coordination-free specification (semantic)]
  \label{def:coordfree}
  A specification is \emph{coordination-free} if
  every $o \in \Obs(H)$ is future-consistent at every $H$.
\end{definition}

\noindent
Future-consistency of all outcomes implies that no future need be
suppressed: every well-formed extension remains realizable.
If any outcome is future-inconsistent, the specification \emph{requires
coordination}: the implementation must either prevent certain futures
or withhold certain outcomes to preserve correctness.
Theorem~\ref{thm:calm-operational} (Section~\ref{sec:operational}) shows
that this semantic property corresponds to the existence of a correct
coordination-free I/O-automaton implementation.

\begin{example}[Coordination in the register]
  Under $\Spec_{\mathit{lin}}$, outcome
  $o_1 = \langle \mathsf{w}_p(\vdog), \mathsf{w}_q(\vcat),
  r_p{\,\mapsto}\vcat \rangle$ is admissible at $H_1$, but $H_2$ is a
  future of $H_1$ and no outcome in $\Obs_{\mathit{lin}}(H_2)$
  refines $o_1$ under $\Ord$.
  The future-inconsistency in the specification
  forces either outcome ($o_1$) or history ($H_2$) suppression.
\end{example}

\subsection{Complete CALM}
\label{sec:complete-calm}

The definitions above yield a clean semantic criterion: a single
property of the specification determines whether coordination-free
implementation is possible.

\begin{definition}[Monotonicity of a specification]
  \label{def:monotone}
  A specification $\Spec = (E, \Obs, \Ord)$ is \emph{monotone} if
  for all $H_1, H_2 \in \Hist$ with $H_1 \hext H_2$ and all
  $o \in \Obs(H_1)$, there \mbox{exists $o' \in \Obs(H_2)$ with $o \Ord o'$.}
\end{definition}

\noindent
This generalizes relational-transducer-based CALM's notion of monotonicity from a specific mapping
(input facts $\to$ output facts, ordered by set inclusion) to an
arbitrary one (histories $\to$ outcome sets, ordered by any
user-defined refinement $\Ord$).
Monotonicity says: once an outcome is admitted at a history, every
future admits a compatible refinement.
The register witness from Section~\ref{sec:example} is a
monotonicity failure; by Complete CALM (below), this single witness
suffices to prove that linearizability requires coordination.

\begin{theorem}[Complete CALM]
  \label{thm:complete-calm}
  A specification is coordination-free
  iff it is monotone.
\end{theorem}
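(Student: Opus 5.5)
At the semantic level, Theorem~\ref{thm:complete-calm} is essentially an unfolding of definitions. The plan is to show that the two predicates in Definitions~\ref{def:coordfree} and~\ref{def:monotone} quantify over exactly the same triples $(H_1, H_2, o)$ and impose exactly the same existential condition on each.

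Both directions follow the same template. Coordination-freedom says: for every $H \in \Hist$ and every $o \in \Obs(H)$, the outcome $o$ is future-consistent at $H$. Unfolding Definition~\ref{def:future-consistent}, this means that for every future $H'$ of $H$ there is some $o' \in \Obs(H')$ with $o \Ord o'$. Since ``$H'$ is a future of $H$'' is by definition $H \hext H'$, renaming $H \mapsto H_1$ and $H' \mapsto H_2$ turns this into Definition~\ref{def:monotone} verbatim.

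\emph{(Monotone $\Rightarrow$ coordination-free.)} Fix $H$ and $o \in \Obs(H)$, and let $H'$ be any future of $H$. Monotonicity, applied with $H_1 = H$, $H_2 = H'$ and the outcome $o$, supplies the required refinement $o'$. Hence $o$ is future-consistent.

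\emph{(Coordination-free $\Rightarrow$ monotone.)} Given $H_1 \hext H_2$ and $o \in \Obs(H_1)$, future-consistency of $o$ at $H_1$, instantiated at the future $H_2$, yields $o'$.

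The only care needed is bookkeeping. Both definitions must range over well-formed histories in $\Hist$. The relation $\hext$ in Definition~\ref{def:future} must be read as including the reflexive case $H \hext H$; in that case both sides demand $o \Ord o$, which holds because $\Ord$ is a partial order. Empty $\Obs(H)$ makes both conditions vacuous at $H$. Since no appeal to implementations is needed for this equivalence, I would not invoke the realizable-futures structure of Definition~\ref{def:implementation} at all.

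The genuine obstacle lies in the operational reading promised by Theorem~\ref{thm:calm-operational}: that ``coordination-free'' in the semantic sense coincides with the existence of a correct implementation whose $\mathcal{R}_I(H)$ contains every well-formed future. If I were asked to strengthen the statement to that form, I would argue as follows.

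\emph{Monotone $\Rightarrow$ implementation.} Take $\mathcal{R}_I(H)$ to be all futures of $H$. This set is downward-closed because $\hext$ is transitive and because the intermediate prefixes of any future of $H$ are themselves futures of $H$. The implementation maintains the invariant that its exposed outcome is refined along extensions, choosing the refinement supplied by monotonicity at each step.

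\emph{Implementation $\Rightarrow$ monotone.} Given a future-inconsistent $o$ at $H_1$ with witness $H_2$, any implementation that exposes $o$ at $H_1$ must exclude $H_2$ from $\mathcal{R}_I(H_1)$, which is a suppression, i.e.\ coordination.

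The delicate point is chaining refinements across infinitely many or branching extensions. Monotonicity gives one-step existence only, so the chosen $o'$ at $H_2$ must itself be admissible, which is guaranteed by $o' \in \Obs(H_2)$. Applying monotonicity again from $(H_2, o')$ then gives a coherent choice along each path. I expect that this step, rather than the definitional equivalence above, is where the real work is deferred to Section~\ref{sec:operational}.
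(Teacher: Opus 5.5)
Your argument is correct and is the paper's proof: both definitions quantify over the same pairs $H \hext H'$ and outcomes $o \in \Obs(H)$ and demand the same refinement $o \Ord o'$, so the equivalence is a renaming. The operational material you append is not needed for this statement, and the paper proves that separately in Appendix~\ref{app:operational}, where responses chosen from different processes' causal views are reconciled through refinement coherence and consistency with observations rather than by chaining refinements along a single path.
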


\begin{proof}
  The proof comes directly from Definition~\ref{def:coordfree}.
  Monotonicity requires that every outcome at every history has a
  refinement at every future---which is exactly
  future-consistency of all outcomes.
\end{proof}

\noindent
Given our framework, the natural CALM "theorem" above is definitional.
The value lies not in proof depth, but in the framework that makes it
simple: one three-component specification subsumes prior results
that required distinct formalisms.
However, a more involved question arises naturally:
does this semantic notion faithfully
capture the standard operational meaning---processes responding from
local state under adversarial scheduling, with a concrete protocol
witnessing sufficiency and an indistinguishability argument witnessing
necessity?
The next subsection answers yes.

\begin{example}[Applying Complete CALM to the running example]
  $\Spec_{\mathit{cc}}$ is monotone: each process's causal view
  remains valid under all extensions---concurrent writes may be
  perceived in different orders at different processes, and no future
  delivery contradicts an earlier perception.
  By Complete CALM, causal consistency admits a coordination-free
  implementation.

  $\Spec_{\mathit{lin}}$ is not monotone (the witness from
  Section~\ref{sec:example}).
  By Complete CALM, every correct implementation of linearizability
  must coordinate---it must suppress some admissible
  future or withhold some future-inconsistent outcome.
  Section~\ref{sec:well-coord-section} characterizes when such
  coordination suffices to restore monotonicity of the resolved
  specification.
\end{example}

\subsection{Operational Adequacy}
\label{sec:operational}

We ground the semantic criterion in the standard I/O automaton
model~\cite{lynch1987io,lynch1996distributed} (formal definitions in
Appendix~\ref{app:operational}).
Processes are asynchronous I/O automata: input actions are client
invocations and message arrivals; output actions are client responses
and message sends; internal actions are local computation.
An execution induces a history $H$ as defined in
Section~\ref{sec:histories}.

\paragraph{Bridging semantics and the I/O interface.}
Client responses are surfaced as outcomes in $\Obs$: each outcome $o$
assigns a response value $o(e)$ to every invocation $e$ in the
history.
Internal messaging ($\mathsf{send}$/$\mathsf{recv}$) is below the
interface.
If one wants to coordinate over more than responses (e.g.,
convergence of internal state), one adds the relevant commitments to
$\Obs$ and extends $\Ord$ accordingly, keeping response and
non-response outcomes incomparable under $\Ord$.

The I/O automaton model guarantees one property of the bridge:

\emph{Consistency with observations:} output actions are irrevocable.
If $H$ records that invocation $e$ received response $v$, then
$o(e) = v$ for every $o \in \Obs(H)$; recording an additional
response can only restrict admissibility ($H' \supseteq H$ by
response events $\Rightarrow \Obs(H') \subseteq \Obs(H)$).

\medskip\noindent
The specification must ensure two further conditions:
\begin{enumerate} 
  \item \emph{Refinement coherence:} $\Ord$ respects irrevocability.
        If $o \Ord o'$ and $o(e) = v$, then $o'(e) = v$.
  \item \emph{Response totality:} for every invocation $e$ in $H$
        and every $o \in \Obs(H)$, $o(e)$ is defined.
\end{enumerate}
We call a specification satisfying these conditions
\emph{well-formed} for the I/O model.
All specifications in this paper are well-formed.

\paragraph{Interface contract: demonic exposure.}
$\Obs(H)$ is an \emph{interface contract}: every outcome in $\Obs(H)$
is an admissible interpretation the specification permits at $H$.
We treat these outcomes demonically, as in model checking: safety
must hold for \emph{all} permitted outcomes, not merely for the ones
a favored implementation happens to expose.
This is not a modeling condition, it is an epistemic necessity: a coordination-free process responds
from local state alone and cannot know which futures will
materialize; angelically selecting only safe outcomes requires exactly the
knowledge that coordination provides.

\begin{definition}[Coordination-free (operational)]
  \label{def:coordfree-op}
  An implementation is \emph{coordination-free} if for
  every process $p_i$ and every client invocation
  $\mathsf{inv}(e)_i$, the response $\mathsf{resp}(e, v)_i$ is
  \emph{enabled} immediately: there exists an execution fragment from
  $p_i$'s post-invocation state consisting only of internal and
  output actions at $p_i$ that produces $\mathsf{resp}(e, v)_i$,
  without requiring any further input action
  ($E_{\mathit{inv}} \cup E_{\mathit{recv}}$) at $p_i$.
\end{definition}

\begin{definition}[Correctness (operational)]
  An implementation \emph{respects the interface contract} $\Spec$ if
  for every execution prefix with induced history $H$, there exists
  $o \in \Obs(H)$ such that $o(e) = v$ for every response $(e, v)$
  recorded in $H$.
\end{definition}

\begin{theorem}[Operational Complete CALM]
  \label{thm:calm-operational}
  A well-formed specification $\Spec$ admits a correct
  coordination-free implementation of the full interface contract
  iff $\Spec$ is monotone.
\end{theorem}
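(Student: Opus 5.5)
We prove the two directions separately, using Theorem~\ref{thm:complete-calm} to move freely between ``monotone'' and ``every admissible outcome is future-consistent.''

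\textbf{Sufficiency (monotone $\Rightarrow$ implementation).} We build a concrete full-information protocol. Each process keeps its local causal past: all events it has performed, plus everything piggybacked on messages it has received. It eagerly sends this state to all peers, but never waits for replies. On an invocation $e$, process $p_i$ forms its local view $H_i$. This view is a downward-closed sub-history of the global history $H$, so $H_i \hext H$. The process then responds with $o(e)$ for some $o\in\Obs(H_i^{+})$, where $H_i^{+}$ is $H_i$ extended by the invocation.

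Coordination-freedom is immediate, since the response uses only local state. The real work is correctness. We maintain, by induction on execution prefixes, the invariant that some $o\in\Obs(H)$ agrees with every recorded response.

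In the inductive step a new response at $p_i$ is appended. Two ingredients combine here. Monotonicity carries the existing witness outcome forward to the extended global history. Refinement coherence guarantees that all earlier response values survive in the refinement.

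The delicate point is that the value $p_i$ chose from its \emph{local} view must be compatible with the global witness. To secure this, we have $p_i$ select its value from a refinement of the outcome that justified its own prior state, rather than an arbitrary one. We then use demonic exposure: \emph{every} outcome at $H_i^{+}$ refines to the global extension, so any choice works.

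To close the induction, we need that the refinements of local choices made at distinct processes can be merged into one global outcome. I expect this \textbf{merging step} to be the main obstacle. My first attempt is to exploit the fact that the global history $H$ is a common future of every local view. Monotonicity then hands each local commitment a refinement in $\Obs(H)$, but only one commitment at a time. If a single common refinement does not follow from these individual ones, I will strengthen the argument by carrying one global witness $o_H$ through the induction. At each new response I take the response value to be $o'(e)$, where $o'\succeq o_{H}$ lies in $\Obs$ of the new history. The remaining task is then to show that $p_i$ can compute this locally. I will argue that it can, because the demonic contract makes every local choice acceptable.

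\textbf{Necessity (implementation $\Rightarrow$ monotone).} We argue by contraposition with an indistinguishability argument. Suppose some $o\in\Obs(H_1)$ has no refinement in $\Obs(H_2)$ for some $H_1\hext H_2$. By demonic exposure, a correct implementation of the full contract must be able to expose $o$ at $H_1$. Response totality makes $o$ determine a response value for each invocation in $H_1$.

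Next we use the adversarial scheduler. It runs the implementation along an execution inducing $H_1$ whose responses realize $o$. This is possible because coordination-freedom enables each response without further inputs, so the scheduler can realize responses before any of the events of $H_2\setminus H_1$. The scheduler then extends the run to $H_2$. No fairness is assumed, and downward-closure under $\hext$ forbids the implementation from blocking well-formed futures. Since the added events are causally later and the implementation never waited on inputs, the processes cannot distinguish the runs early enough to avoid their commitments.

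By irrevocability and refinement coherence, any correct witness at $H_2$ must agree with $o$ on $H_1$'s responses and must refine it. No such witness exists by assumption, which contradicts correctness.
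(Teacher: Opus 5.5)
Your protocol is the paper's causal-view protocol. Your necessity direction follows the paper's argument: demonic exposure fixes $o$ as the exposed interpretation at $H_1$, an adversarial continuation reaches $H_2$, and the exposing responses are irrevocable. Like the paper, you lean on demonic exposure to pass from ``agrees with $o$'s responses'' to ``must refine $o$.''

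The genuine gap is in sufficiency. You leave the merging step open, and neither of your repairs closes it.
\begin{itemize}
\item Responding with $o'(e)$ for some $o' \succeq o_H$ is not locally computable. The witness $o_H$ encodes responses made at other processes, outside $p_i$'s causal view, so choosing it would itself require coordination.
\item Selecting from a refinement of $p_i$'s own prior outcome constrains only one chain.
\item Demonic exposure does not make ``every local choice acceptable.'' It is a requirement on what a correct implementation must tolerate, and it drives necessity. It is not a fact that makes independently chosen outcomes compatible.
\item Refinement coherence cannot do the job either. It transports values only along a single chain $o \Ord o'$, not between the unrelated chains that start at $p$'s and $q$'s choices.
\end{itemize}

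The missing ingredient is the bridge property the paper calls \emph{consistency with observations}: every $o \in \Obs(H)$ satisfies $o(e) = v$ for every response $(e,v)$ recorded in $H$. With it, no induction or global witness is needed. Since $H_i^{+} \hext H$, monotonicity gives $o' \in \Obs(H)$ with $o \Ord o'$. Consistency with observations then makes $o'$ agree with \emph{every} recorded response, including concurrent ones chosen elsewhere. So correctness at $H$ collapses to $\Obs(H) \neq \emptyset$.

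This property is not a convenience; without it your worry is real. Let $\Obs(H) = \{o_1, o_2\}$ at every $H$, with $o_1 = (e_p \mapsto 0, e_q \mapsto 1)$, $o_2 = (e_p \mapsto 1, e_q \mapsto 0)$, and $\Ord$ equality. This specification is monotone, refinement-coherent and response-total. Yet if $p$ picks $o_1$ and $q$ picks $o_2$, both respond $0$. Each commitment individually has a refinement at the joint history, but no single outcome witnesses both.

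Imposing consistency with observations forces $\Obs$ to be empty at that joint history. Then $o_1 \in \Obs(H_p^{+})$ has no refinement there, so the specification is no longer monotone and is excluded by hypothesis. This is exactly why the paper's one-line joint-consistency argument goes through.
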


\begin{proof}[Proof sketch (full proof in Appendix~\ref{app:operational})]
  \emph{Sufficiency.}
  Each process responds based only on its \emph{causal view}
  $H_i$---the events that have causally reached it (a subhistory of
  the global history, $H_i \hext H$).
  On invocation $e$, the process picks any $o \in \Obs(H_i)$ and
  immediately responds with $o(e)$.
  (If $\Obs(H_i) = \emptyset$, the history is unrealizable and no
  correct response is required; the theorem is semantic, not a
  decidability result.)
  Since $H_i$ may be strictly smaller than $H$, the process is
  committing based on partial information---but monotonicity
  guarantees a refinement $o' \in \Obs(H)$ that explains the full
  response trace.

  \emph{Necessity.}
  If $o \in \Obs(H_1)$ is future-inconsistent (no refinement at some
  future $H_2$), then by demonic exposure the implementation must be
  safe when $o$ is the operative interpretation at $H_1$.
  But a coordination-free process cannot foresee $H_2$: it responds
  from local state and cannot distinguish futures where $o$ survives
  from those where it does not.
  An admissible continuation reaches $H_2$,
  where no refinement of $o$ exists---correctness is violated.
  Avoiding $o$ entirely means restricting the interface: a
  coordinated variant (Section~\ref{sec:separation}).
\end{proof}

\begin{remark}[Joint consistency without agreement]
  \label{rem:joint-consistency}
  A subtle consequence of the sufficiency argument: independently
  chosen responses at different processes are \emph{jointly}
  consistent without any inter-process agreement protocol.
  Process $p$ picks $o_p \in \Obs(H_p)$ and responds $v_p$;
  concurrently, $q$ picks $o_q \in \Obs(H_q)$ and responds $v_q$.
  These choices are made from different causal views with no
  communication.
  Yet the global history $H$ is a future of both views and contains
  both response events.
  By monotonicity, both $o_p$ and $o_q$ have refinements in
  $\Obs(H)$.
  By consistency with observations, every outcome in $\Obs(H)$ must
  agree with all responses recorded in $H$---so any refinement of
  $o_p$ already agrees with $q$'s response, and vice versa.
  Joint consistency is not an additional assumption---it is a free
  consequence of monotonicity applied to the full history.
\end{remark}

\subsection{Discussion}
\label{sec:discussion}

Together, the semantic criterion (Complete CALM,
Theorem~\ref{thm:complete-calm}) and its operational grounding
(Theorem~\ref{thm:calm-operational}) give a tight characterization:
any specification with even one future-inconsistent outcome requires
coordination (no weaker condition suffices); conversely, every
monotone specification admits a coordination-free implementation (no
stronger condition is implied).
The semantic theorem needs only histories, outcomes, and a refinement
order; the operational theorem adds that this criterion matches the
standard I/O-automaton notion of coordination-free response.

Since the criterion operates on semantic specifications, checking
monotonicity is undecidable in general (analogous to Rice's theorem).
For specific representation languages, classical query-monotonicity
results port directly: monotonicity is guaranteed for positive
queries (Datalog without negation, unions of conjunctive queries),
and deciding it is coNP-complete for relational algebra with
difference~\cite{abiteboul1995foundations}.
In practice, conservative syntactic checks are often effective:
for Datalog, the absence of negation ensures monotonicity with
respect to set inclusion; for typed functional languages like
Flo~\cite{laddad2025flo}, restricting to operators that preserve
stream monotonicity (e.g., map, filter, join) and excluding those
that require termination (e.g., fold, reduce) ensures monotonicity
with respect to growth orders such as prefix extension on
streams or lattice order on individual values.

The remainder of the paper demonstrates the reach of this criterion:
proper coordination and interface separation
(Section~\ref{sec:well-coord-section}), Complete CAP
(Section~\ref{sec:complete-cap}), formal subsumption of CALM
(Section~\ref{sec:beyond-calm}), applications to isolation levels,
I-confluence, and CRDTs (Section~\ref{sec:applications}), and the
coordination-free frontier (Section~\ref{sec:frontier}).
Appendix~\ref{app:classical} extends the analysis to classical
distributed computing problems (snapshots, $k$-set agreement,
consensus, renaming), where a recurring observation emerges: the
non-monotonicity falls into a small number of patterns---total-order,
bounded-cardinality, and unique-choice commitment.

\section{Proper Coordination}
\label{sec:well-coord-section}

Complete CALM operates on specifications, not programs.
This means it can verify not only that coordination is \emph{absent},
but that coordination \emph{correctly discharges} a non-monotone
specification---that the residual output interface is monotone.
We call this \emph{proper coordination}.

\subsection{Coordinated Variants}
\label{sec:separation}

The central insight is simple: coordination is a \emph{means}, not an
\emph{end}.
A system may use coordination internally to resolve a non-monotone
specification, producing a monotone output interface for downstream
consumers.
A coordinated variant always exists (Theorem~\ref{thm:universal}); the
question is how to verify that a \emph{given} variant is properly
coordinated.

\begin{definition}[Properly coordinated variant]
  \label{def:coord-variant}
  A specification $\Spec' = (E, \Obs', \Ord)$ is a \emph{properly
  coordinated variant} of $\Spec = (E, \Obs, \Ord)$ if:
  \begin{enumerate} 
    \item $\Obs'(H) \subseteq \Obs(H)$ for all $H \in \Hist$
          (no new outcomes), and
    \item $\Spec'$ is monotone over its admitted histories: for all
          $H_1 \hext H_2$ with $\Obs'(H_1) \neq \emptyset$ and
          $\Obs'(H_2) \neq \emptyset$, every $o \in \Obs'(H_1)$ has
          a refinement in $\Obs'(H_2)$.
  \end{enumerate}
\end{definition}

\noindent
A properly coordinated variant restricts the original spec enough to
achieve monotonicity.
$\Obs'$ is the mechanism for reflecting coordination in the
specification: it may shrink $\Obs(H)$ to suppress certain outcomes
at a history, or it may set $\Obs(H) = \emptyset$, forbidding the history
$H$ entirely.
For example, two-phase locking prevents histories in which
conflicting transactions interleave; the coordinated variant sets
$\Obs'(H) = \emptyset$ for such histories, since no execution under
the protocol can produce them.
Complete CALM then applies directly to $\Spec'$: if it is monotone
over its admitted histories, the coordination suffices; otherwise more
restriction is needed.

\begin{theorem}[Separation from relational-transducer CALM]
  \label{thm:separation}
  Relational-transducer CALM cannot, in general, verify proper
  coordination.
\end{theorem}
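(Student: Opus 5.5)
We prove the theorem by exhibiting one concrete non-monotone specification, resolved by coordination, where relational-transducer CALM gives the wrong verdict. The point is that relational-transducer CALM fixes its order to set inclusion on output facts. A properly coordinated variant (Definition~\ref{def:coord-variant}) can be monotone under a different declared order $\Ord$, such as prefix extension on a committed sequence, while its output is still non-monotone under set inclusion. It suffices to show that relational-transducer CALM either cannot express the variant or rejects it, while Complete CALM (Theorem~\ref{thm:complete-calm}) certifies it.

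\emph{Step 1: the witness.} I take $\Spec_{\mathit{lin}}$ from the running example and coordinate it by a sequencer or consensus-based total order. The coordinated variant $\Spec'_{\mathit{lin}}$ sets $\Obs'(H)=\emptyset$ for every history in which two replicas expose conflicting write orders. At admitted histories it keeps only those linearizations that extend the globally agreed log, ordered by prefix extension.

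\emph{Step 2: $\Spec'_{\mathit{lin}}$ is properly coordinated.} Condition (1) of Definition~\ref{def:coord-variant} holds by construction. For condition (2), take $H_1 \hext H_2$ with both histories admitted. Under the protocol, the agreed log at $H_2$ extends the agreed log at $H_1$. So any $o\in\Obs'(H_1)$, being a prefix of the log at $H_1$, is a prefix of some linearization in $\Obs'(H_2)$. This gives monotonicity over admitted histories, and Complete CALM then certifies that the coordination suffices.

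\emph{Step 3: relational-transducer CALM fails on the same object.} Encode the output as a set of facts, for instance $\mathit{before}(w_i,w_j)$ and $\mathit{read}(r,v)$. Compare an admitted history with only $w_q$ in the log, where the read returns $\vcat$, against an admitted future where further writes arrive and a later read returns $\vdog$. The later read fact is added rather than replaced, but under set inclusion the notion of ``current value'' (the last write in the log) is non-monotone: the fact $\mathit{current}(\vcat)$ must be retracted. Relational-transducer CALM therefore classifies this output as requiring coordination. It also has no mechanism to restrict to admitted histories ($\Obs'=\emptyset$ elsewhere) or to substitute prefix order. Its verdict on the residual interface is thus ``non-monotone'', even though the interface is properly coordinated.

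The main obstacle is Step 3. I must argue that \emph{no} encoding into relational transducers under set inclusion certifies the variant, not merely that the natural one fails. I would handle this by fixing what ``verify proper coordination'' means for relational-transducer CALM: its only certificate is monotonicity of the output query under set inclusion over all input instances. I would then observe that any faithful encoding must expose the current register value, which is retracted along an admitted extension. Every faithful encoding is therefore non-monotone under inclusion, which establishes the ``in general'' claim.
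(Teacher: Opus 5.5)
\textbf{There is a genuine gap in Step 3, the step you flag, and it is not only a matter of quantifying over encodings.} The claim that every faithful set-inclusion encoding must expose a retracted ``current value'' is false.

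The interface of your variant consists of responses to invocations: the paper's bridge has outcomes assign $o(e)=v$, and refinement coherence preserves these. You concede yourself that the facts $\mathit{read}(r,v)$ are only added, never replaced. The fact $\mathit{current}(\vcat)$ is not part of $\Obs'$. Adding it changes the specification, and by the paper's modeling discipline there is no obligation to coordinate over it.

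More damaging to your strategy, prefix extension on sequences is order-isomorphic to set inclusion on position-indexed facts $\mathit{log}(i,\mathit{op},v)$ with contiguous positions. So the agreed log of your Step~2 grows by set inclusion along exactly the admitted extensions you consider. The order mismatch ``prefix vs.\ inclusion'' therefore does not separate the two criteria on your witness, and your conclusion that every faithful encoding is non-monotone under inclusion does not follow. Separately, Step~2 slides between outcomes that \emph{extend} the agreed log and outcomes that are \emph{prefixes} of it. Only the latter gives monotonicity; otherwise an unlogged suffix chosen at $H_1$ could be ordered differently by the log at $H_2$.

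\textbf{The obstruction has to come from elsewhere.} The paper's proof is program-level. A non-monotone specification implemented in Datalog needs negation, and adding coordination rules cannot remove it, so the syntactic check on $P'$ answers ``no'', possibly falsely. No sharper program-level test exists, because the coordinated program lives in stratified Datalog with negation, which captures PTIME on ordered structures and where monotonicity is undecidable. Complete CALM sidesteps this by testing $(E,\Obs',\Ord)$ directly.

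If you want to keep a concrete-witness route instead, the load-bearing point must be the quantifier mismatch you mention only in passing. Transducer CALM demands $I\subseteq J \Rightarrow Q(I)\subseteq Q(J)$ over \emph{all} input instances, whereas proper coordination requires refinement only along admitted histories. For example, take sequencer positions as input facts. Inserting a write at a position earlier than an existing read, but after that read's previous last write, changes the read's response, so $Q$ fails inclusion-monotonicity. Yet every admitted, append-only extension preserves all existing responses.
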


\begin{proof}
  Consider a non-monotone specification $\Spec$ implemented by a
  Datalog program $P$.
  $P$ must contain negation: Datalog without negation cannot express
  non-monotone functions.
  Adding coordination rules (e.g., a consensus protocol) yields a
  program $P'$ that still contains negation---adding
  rules does not remove negation symbols.
  The question ``is the output of $P'$ monotone?'' asks whether the
  coordinated variant's residual interface is coordination-free.

  The conservative syntactic check (presence of negation) answers ``no''---but
  this could be a false negative; the coordination may have discharged the
  non-monotonicity.
  However, no better program-level test exists: a coordinated program
  necessarily uses negation, placing it in a language (stratified
  Datalog with negation) that captures all PTIME queries on ordered
  structures~\cite{immerman1986relational}, where deciding
  monotonicity is undecidable.
  Complete CALM sidesteps this barrier by testing the output
  specification $\Spec'$ directly---a semantic property of
  $(E, \Obs', \Ord)$, not a property of any program.
\end{proof}

\begin{example}[Consensus as proper coordination]
  \label{ex:consensus}
  Agreement protocols decompose into a non-monotone phase and a
  monotone residual.
  \emph{Membership establishment} is non-monotone: committing to a
  participant set at $H_1$ can be invalidated by a future $H_2$ that
  admits different participants.
  But once membership is fixed, vote-counting is monotone (reaching a
  fractional threshold is monotone over a known domain),
  and value exposure is monotone (decisions only accumulate).
  The coordinated variant restricts histories to those where
  membership is established; the residual is monotone.
\end{example}

\begin{remark}[One round of coordination suffices]
  \label{rem:one-round}
  Membership establishment need only happen \emph{once}.
  In many systems, an initial membership authority can turn subsequent
  authority changes into decisions under an already-established
  authority---after the initial bootstrap, the chain of authority
  transitions is monotone (votes accumulate under a known domain).
  This generalizes Ameloot et al.'s ``non-oblivious'' transducer
  result~\cite{ameloot2013relational}: membership knowledge is the
  single non-monotone input that renders all subsequent computation
  monotone.
  The architecture of Paxos-based systems reflects this: membership
  is configured once; everything downstream is actually coordination-free.
\end{remark}

The consensus example illustrates a general phenomenon: coordination
can always be factored into a generic ordering layer plus
coordination-free local evaluation.

\begin{theorem}[Universal sufficiency of ordering authority]
  \label{thm:universal}
  For any specification $\Spec = (E, \Obs, \Ord)$, there exists a
  coordination mechanism $I_{\mathit{ord}}$ consisting of
  (a)~a membership authority (making the participant set common
  knowledge) and (b)~an ordering service (delivering a consistent
  total extension of the causal order to all replicas) such that the
  resolved specification $\Spec|_{I_{\mathit{ord}}}$ is monotone.
\end{theorem}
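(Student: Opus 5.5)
The approach is to build $\Spec|_{I_{\mathit{ord}}}$ explicitly as a properly coordinated variant in the sense of Definition~\ref{def:coord-variant}, and then check its monotonicity condition directly. The idea is that once a single total order on events is fixed and common knowledge, the only remaining freedom in outcomes is freedom that refinement can absorb. First I formalize what the mechanism does to histories. The membership authority fixes a participant set $P$. A history $H$ is \emph{admitted} under $I_{\mathit{ord}}$ only if two things hold: every process's events are delivered through the ordering service, and there is a total order $\tau_H$ on $E(H)$ that extends $\rightarrow$ and is a prefix of the ordering service's global sequence. Because the service delivers one consistent sequence to all replicas, for admitted $H_1 \hext H_2$ the order $\tau_{H_1}$ is a prefix of $\tau_{H_2}$. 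This prefix property is the key structural fact the mechanism supplies.

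Next I define the resolved observation function by canonical choice along the sequence. I proceed by induction on the length of $\tau_H$. At the empty history, take $\Obs'(\emptyset)=\Obs(\emptyset)$, or a single chosen element of it. To go from $H$ to its one-event extension $H^{+}$ in $\tau$-order, set
\[
  \Obs'(H^{+}) = \{\, o' \in \Obs(H^{+}) \mid \exists\, o \in \Obs'(H).\; o \Ord o' \,\}.
\]
If this set is empty, set $\Obs'(H^{+})=\emptyset$; this suppresses the history, and it corresponds to the ordering service refusing to deliver that extension. All other histories, meaning the non-admitted interleavings, get $\Obs'=\emptyset$. Condition~1 of Definition~\ref{def:coord-variant} is immediate, since $\Obs'(H)\subseteq\Obs(H)$ by construction.

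For condition~2, take admitted $H_1\hext H_2$ with both $\Obs'$ nonempty. Since $\tau_{H_1}$ is a prefix of $\tau_{H_2}$, there is a chain of one-step extensions $H_1=G_0,G_1,\dots,G_k=H_2$, and every $G_j$ is admitted. This holds because $\Obs'$ is defined along the chain, and a nonempty value at $H_2$ forces nonempty values at all its prefixes. By construction each element of $\Obs'(G_{j+1})$ refines some element of $\Obs'(G_j)$, but that is the \emph{backward} direction. What I need is the forward direction: every $o\in\Obs'(H_1)$ has a refinement in $\Obs'(H_2)$. This is the main obstacle. Forward refinements can die out: some $o$ may have no successor while a sibling $o_2$ does.

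My first attempt at this step is to prune. I strengthen $\Obs'$ to keep only outcomes that have a refinement at every admitted future along the fixed sequence. Call these the outcomes that are \emph{future-consistent relative to the ordering service's admitted futures}. Since the admitted futures of $H$ form a single chain, determined by the global sequence rather than a branching tree, "has a refinement at every admitted future" is a condition along one line. I then use transitivity of $\Ord$ to show the pruned family is closed under the forward step. Two degenerate cases remain. If pruning leaves $\Obs'(H)$ empty, the history is declared suppressed. This is permissible because Definition~\ref{def:coord-variant} only quantifies over pairs where both sides are nonempty. If the sequence is infinite, the chain argument needs a compactness or König-style step. I would sidestep this by working with finite prefixes only, because monotonicity is stated pairwise on finite histories.

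The result is that $\Spec|_{I_{\mathit{ord}}}$ is monotone over its admitted histories, and Complete CALM (Theorem~\ref{thm:complete-calm}) then applies to it.
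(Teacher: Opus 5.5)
Your route is a restriction variant inside the original outcome domain and the original $\Ord$. It breaks at the step you lean on for the forward direction: the claim that the admitted futures of $H$ ``form a single chain, determined by the global sequence.'' That is true of one run, not of a specification. $\Obs'$ must be a function of $H$ alone. Monotonicity quantifies over every $H_2$ with $H_1 \hext H_2$, that is, over every continuation the environment can still produce: different next invocations, hence different sequencer outputs. The ordering service makes replicas agree within a run, but it does not collapse this branching. Suppose instead you let $\Obs'(H)$ depend on how the current run continues past $H$. Then two runs sharing the prefix $H$ can force different values of $\Obs'(H)$, so $\Obs'$ is not well defined. Over the branching tree, your pruning is just future-consistency restricted to sequenced histories, and nothing keeps it nonempty. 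Two smaller problems remain even on a single chain. First, one pruning pass plus transitivity does not give forward closure: a kept $o$ may reach later levels only through outcomes that are themselves pruned, so you would need a greatest-fixed-point (backward-induction) argument. Second, ``finite prefixes only'' does not remove the compactness issue, because the pruning at $H$ quantifies over all futures of $H$.

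More fundamentally, no restriction within the original domain and order can carry this theorem non-vacuously. Because suppression is allowed, $\Obs' \equiv \emptyset$ already satisfies Definition~\ref{def:coord-variant}, so your argument never actually uses the ordering authority. For some specifications this degeneration is forced. Take $\Obs(H) = \{E_H\}$, the event set of $H$, with $\Ord$ equality. Every strict extension is then a witness, so any $\Obs' \subseteq \Obs$ that is monotone in the sense of Definition~\ref{def:monotone} is empty at every history with a strict future, however events are sequenced. Relatedly, ``monotone over admitted histories'' is weaker than the theorem's ``monotone'': a nonempty $\Obs'(H_1)$ with a suppressed future $H_2$ violates Definition~\ref{def:monotone}.

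The missing idea is the paper's. $\Spec|_{I_{\mathit{ord}}}$ is a new \emph{interface residual}, not a properly coordinated variant (see the remark at the start of Appendix~\ref{app:universal}). Its outcomes are the deterministic evaluations of $\Spec$ over the agreed log prefix, ordered by prefix extension of the underlying sequence. The log is append-only and identical at all replicas within a run, so $H_1 \hext H_2$ makes the prefix at $H_1$ a prefix of the one at $H_2$, and monotonicity is immediate. The price, which the paper states explicitly, is that the original observation interface is not preserved.
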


\noindent
If all replicas agree on the ordering (and batching) of interface
events, they can evaluate the specification deterministically over a
common prefix---consistency follows.
Such agreement requires an authority---a leader or a quorum---to make
and promulgate ordering decisions.
Quorum authority is itself monotone once membership is known: votes
accumulate, and a threshold decision, once reached, is irrevocable.
The one non-monotone act is establishing membership; everything
downstream is coordination-free.
Multi-Paxos~\cite{lamport2001paxos} is the canonical instantiation.
The proof (Appendix~\ref{app:universal}) constructs the resolved
specification explicitly and verifies monotonicity.

\section{Complete CAP}
\label{sec:complete-cap}

Complete CALM characterizes when coordination is required; Complete
CAP characterizes when that coordination conflicts with availability
under partitions.
The link is demonic exposure: a coordination-free process must be
safe for \emph{every} outcome the spec permits, since it lacks the
information to selectively withhold dangerous ones.
Under a partition, $p$ cannot be sure whether the outcome it exposes
will lead to an inadmissible future---yet must respond or sacrifice
availability.

Non-monotonicity is almost the right boundary, but requires one
refinement: non-monotonicity confined to one side of a partition can
be resolved by same-side coordination without cross-partition
communication.
Only non-monotonicity that \emph{spans} a partition---where remote
activity can invalidate a local outcome---creates the CAP dilemma.
Formalizing this exclusion is the main technical detail in this
section.

\paragraph{Process assignment.}
Analyzing availability under partitions requires one additional
modeling commitment: a \emph{process assignment}
$\pi : E \to \{p_1, \ldots, p_n\}$ declaring which process handles
each event (e.g., ``reads at the local replica,'' ``writes at the
primary'').

\begin{definition}[Partition-constrained future]
  \label{def:partition-future}
  Fix a partition $P = (S, \bar{S})$ of the process set with
  $S \cap \bar{S} = \emptyset$ and
  $S \cup \bar{S} = \{p_1, \ldots, p_n\}$.
  A future $H_1 \hext H_2$ is \emph{$P$-constrained} if it has
  no cross-partition delivery:
  $H_2 \setminus H_1$ contains no message receive $r$ with
  $\pi(\mu(r)) \in \bar{S}$ and $\pi(r) \in S$, or vice versa.
  Both sides may continue receiving inputs, computing, sending
  messages, and delivering messages within their own side---only
  cross-partition delivery is blocked.
\end{definition}

Informally, a specification is distributed-monotone if it is monotone
under every possible partitioning---no partition-constrained future
can invalidate an outcome exposed at any process.

\begin{definition}[Distributed-monotone]
  \label{def:dist-monotone}
  A specification $\Spec = (E, \Obs, \Ord)$ with process assignment
  $\pi$ is \emph{distributed-monotone} if for every partition
  $P = (S, \bar{S})$ with $|S| \geq 1$ and $|\bar{S}| \geq 1$,
  every process $p \in S$, every history $H_1$, every
  $o \in \Obs(H_1)$ exposed at $p$, and every $P$-constrained future
  $H_2$ of $H_1$, there exists $o' \in \Obs(H_2)$ with $o \Ord o'$.
\end{definition}

\noindent
Monotonicity implies distributed-monotonicity (all futures includes
partition-constrained ones), but not conversely: a specification whose
non-monotonicity witnesses are all confined to one side of every
partition is
distributed-monotone because no partition can produce the dangerous
future.
A concrete example: collaborative text editing via
CRDTs~\cite{shapiro2011crdt} is distributed-monotone (remote
edits merge without conflict), but simultaneous keystrokes at a
single keyboard require a mutex to serialize---a local non-monotone
resolution so routine it goes unmentioned in the CRDT literature.

\begin{theorem}[Complete CAP]
  \label{thm:cap}
  In the asynchronous message-passing model with two or more
  processes, a well-formed specification with process assignment
  admits a consistent, available, partition-tolerant implementation
  if and only if it is distributed-monotone.
\end{theorem}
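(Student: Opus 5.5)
I would prove the two directions separately, reusing the two halves of Theorem~\ref{thm:calm-operational} but relativized to partition-constrained futures. First I would fix the meanings. \emph{Consistent} means the implementation respects the interface contract $\Spec$. \emph{Available} means every invocation at a live process eventually gets a response; under asynchrony with blocked cross-partition delivery this forces the coordination-free enabling condition of Definition~\ref{def:coordfree-op} with respect to inputs from the other side. \emph{Partition-tolerant} means consistency and availability hold in every execution whose cross-side deliveries are blocked, i.e.\ along every $P$-constrained future. Under this reading, the theorem says exactly that the demonic-exposure argument only needs to range over $P$-constrained futures.

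\emph{Sufficiency.} Assume $\Spec$ is distributed-monotone. Each process first resolves same-side non-monotonicity by coordinating only with processes on its own side. Concretely, I would run an ordering authority in the style of Theorem~\ref{thm:universal} restricted to each connected component of the partition; this needs no cross-partition message. It then responds from its causal view $H_i$ by picking $o \in \Obs(H_i)$ and emitting $o(e)$. Availability holds because no step ever waits on a cross-partition receive. For consistency, take the global history $H$ reached during a partition $P$. The portion of $H$ beyond the point where $p$ exposed $o$ is a $P$-constrained future of $p$'s view plus whatever the same-side authority fixed. Distributed-monotonicity then supplies $o' \in \Obs(H)$ with $o \Ord o'$. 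Joint consistency across processes follows as in Remark~\ref{rem:joint-consistency}, using consistency with observations and refinement coherence. After the partition heals, arbitrary futures become possible. I would handle this by arguing that a partition $P$ is arbitrary and the definition quantifies over all $P$, so every cross-side delivery is eventually accounted for. This is the step I trust least.

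\emph{Necessity.} Suppose $\Spec$ is not distributed-monotone. Then there are $P=(S,\bar S)$, a process $p\in S$, a history $H_1$, an outcome $o\in\Obs(H_1)$ exposed at $p$, and a $P$-constrained future $H_2$ with no refinement of $o$. I would build an adversarial execution reaching $H_1$, then cut the partition. Availability forces $p$ to respond from local state, and by demonic exposure $o$ must be treated as the operative interpretation. The scheduler then extends to $H_2$ using only same-side deliveries and invocations at $\bar S$; both sides remain enabled by availability. The key step is an indistinguishability argument: $p$'s local state is the same whether or not $\bar S$ proceeds toward $H_2$, so $p$ cannot withhold $o$. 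At $H_2$ no outcome explains the recorded responses, which violates consistency.

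The main obstacle is the necessity direction's reliance on demonic exposure. An implementation might simply never expose $o$. Distributed-monotonicity as defined quantifies over outcomes ``exposed at $p$'', so I would first show that any response $p$ can make is consistent with every $o\in\Obs(H_1)$ for the portion of the interface visible at $p$. I would try to extract this from response totality and the requirement that the implementation realize the \emph{full} interface contract, exactly as in the appendix proof of Theorem~\ref{thm:calm-operational}.
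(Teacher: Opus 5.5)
Your plan follows the paper's proof almost step for step. Necessity uses demonic exposure plus two executions $p$ cannot tell apart: $\bar S$ idle versus $\bar S$ driving the system to $H_2$. Sufficiency uses the causal-view protocol of Theorem~\ref{thm:calm-operational} together with same-side coordination.

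One correction on necessity: you cannot derive exposure of $o$ from response totality. If $\Obs(H_1)$ contains outcomes that disagree on $p$'s pending invocation, no single response is consistent with all of them. The paper does not derive it either. It builds demonic exposure into what ``implementing the full interface contract'' means, and you should take it as part of the definition.

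The real gap is the post-heal step you flagged, and the paper's sufficiency argument has the same hole. It checks only $P$-constrained futures and then asserts correctness ``regardless of partition state.'' Your patch does not work as stated, because a future containing a delivery across $P$ is by definition not $P$-constrained.

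The precise version of your idea is a chaining argument:
\begin{itemize}
\item split the extension from $p$'s causal view to the global history into single-event steps;
\item note that a step adding a delivery from $a$ to $b$ is $P$-constrained for every $P$ that keeps $a$ and $b$ on one side, and choose such a $P$ with $p \in S$;
\item compose the resulting refinements by transitivity of $\Ord$.
\end{itemize}
This needs two things. The intermediate refinements must count as ``exposed at $p$,'' and a nontrivial partition of that kind must exist, which holds when $n \ge 3$ (and then distributed-monotonicity, so read, collapses to plain monotonicity).

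For $n = 2$ the chaining fails. Every inter-process delivery crosses the only partition, so distributed-monotonicity says nothing about any future in which the two processes communicate. That covers every run of the broadcasting causal-view protocol in which messages actually get through. Same-side coordination cannot rescue this. Any non-monotonicity witness whose new deliveries stay within one side is itself $P$-constrained, so the hypothesis already excludes it; the per-component ordering authority has nothing to resolve.

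To finish the $n=2$ case you have two options:
\begin{itemize}
\item weaken ``consistent'' to hold only along partition-constrained executions, which is all the argument actually proves; or
\item use an implementation that never sends, so every realized extension is receive-free and hence $P$-constrained for $S=\{p\}$.
\end{itemize}
The second option undercuts your necessity direction. There the adversary must force $H_2$ even when $H_2$ contains sends, which are implementation-controlled. Either way, fix one notion of which futures an implementation must keep realizable and use it in both directions.
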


\begin{proof}[Proof sketch (full proof in Appendix~\ref{app:cap-formal})]
  ($\Leftarrow$) If distributed-monotone, every outcome at every
  process survives all partition-constrained futures.
  Same-side non-monotonicities may require coordination, but only
  within a connected component---they do not require cross-partition
  communication and therefore do not obstruct availability.

  ($\Rightarrow$) If not distributed-monotone, some outcome $o$ at
  process $p$ is invalidated by a $P$-constrained future $H_2$.
  Under $P$, process $p$ cannot distinguish $H_1$ from $H_2$: if $p$
  exposes $o$, consistency is violated; if $p$ withholds $o$,
  availability is violated.
\end{proof}

\noindent
In summary:
\[
  \text{coordination-free}
  \;\underset{\text{Complete CALM}}{\Longleftrightarrow}\;
  \text{monotone}
  \;\Longrightarrow\;
  \text{distributed-monotone}
  \;\underset{\text{Complete CAP}}{\Longleftrightarrow}\;
  \text{CAP-free}.
\]
The unidirectional implication is strict but the gap is narrow: a specification is
distributed-monotone but not monotone only when its non-monotonicity
is confined to a single node (tolerance to partition includes
tolerance to a single-node partition).
Such local non-monotonicity is endemic---most local nodes run 
operating systems involving
mutexes or thread scheduling---but does not create partition
vulnerability.
For the purposes of distributed systems analysis, Complete CAP and
Complete CALM therefore coincide in most settings.
Read this way, the ``C'' in CAP is best understood as
\emph{Coordination}: if coordination is required, partitions make it
unavailable.
Complete CAP confirms formally that serializability
and snapshot isolation over distributed data suffer from this tradeoff
(Section~\ref{sec:isolation}): neither is distributed-monotone.

\paragraph{How coordination is achieved operationally.}
Each non-monotonicity witness $(H_1, H_2, o_1)$ must be resolved by
either excluding $o_1$ from $\Obs'(H_1)$ (\emph{outcome
restriction}) or excluding $H_2$ from the variant's history space
(\emph{future restriction}).
Future restriction corresponds to pessimistic schemes (locking,
barriers); outcome restriction to optimistic
schemes (MVCC, OCC).

\section{CALM as an Instance}
\label{sec:beyond-calm}

Complete CALM lifts the CALM theorem from relational transducers to
arbitrary specifications.
We now show that this lifting is faithful: under the natural
instantiation, both criteria accept exactly the monotone queries.

\subsection{Formal Subsumption}
\label{sec:calm-instance}

\begin{definition}[Transducer instantiation]
  \label{def:transducer-inst}
  For a query $Q$ over input schema $S_{\mathit{in}}$, define the
  specification $\Spec_Q = (E, \Obs, \subseteq)$ as follows:
  \\[2pt]
  \begin{tabular}{@{}l@{\;\;}l@{}}
    \textbf{Events:} &
      $E_{\mathit{inv}}$: input-fact arrivals (the growing input instance). \\
    & $E_{\mathit{resp}}$: query output facts. \\
    & $E_{\mathit{int}}, E_{\mathit{send}}, E_{\mathit{recv}}$:
      rule firings, messages, heartbeats (abstracted away). \\[2pt]
    \textbf{Histories:} &
      Cumulative sets of input facts (a growing EDB); each $H$
      determines the set $I_H$ delivered so far. \\[2pt]
    \textbf{Outcomes:} &
      $\Obs(H) = \{Q(I_H)\}$---the semantic query result on $I_H$. \\[2pt]
    \textbf{Order:} &
      $o_1 \Ord o_2$ iff $o_1 \subseteq o_2$ (set inclusion).
  \end{tabular}
\end{definition}

\begin{theorem}[Subsumption of CALM]
  \label{thm:subsumption}
  Under the transducer instantiation, the following are equivalent for
  a query $Q$:
  \begin{enumerate} 
    \item $Q$ is coordination-free in the sense of Ameloot et
          al.~\cite{ameloot2013relational}.
    \item $\Spec_Q$ is monotone in the sense of Complete CALM.
    \item $Q$ is a monotone query ($I \subseteq J \Rightarrow
          Q(I) \subseteq Q(J)$).
  \end{enumerate}
\end{theorem}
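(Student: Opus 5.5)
We prove (2)$\Leftrightarrow$(3) directly from the definitions. We obtain (1)$\Leftrightarrow$(3) by citing the relational-transducer CALM theorem of Ameloot et al.~\cite{ameloot2013relational}, which states that a query is computable by a coordination-free transducer network iff it is monotone. (1)$\Leftrightarrow$(2) then follows by transitivity. The substantive new content is therefore the bridge (2)$\Leftrightarrow$(3). It rests on one structural fact about the transducer instantiation: the map $H \mapsto I_H$ is monotone along futures, and every input-instance inclusion is realized by some future.

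\textbf{Step 1 (futures induce inclusions).} If $H_1 \hext H_2$, then by condition (i) of Definition~\ref{def:future} we have $E_1 \subseteq E_2$. Input-fact arrival events of $H_1$ therefore persist in $H_2$, so $I_{H_1} \subseteq I_{H_2}$.

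\textbf{Step 2 (inclusions are realized by futures).} Let $I \subseteq J$ be finite instances and let $H_I$ be any history with $I_{H_I} = I$, for example one arrival event per fact. We build $H_J$ by adding one arrival event per fact of $J \setminus I$. We place these events causally after, or concurrent with, the existing events, never before them. This keeps the old events downward-closed, so $H_I \hext H_J$ and $I_{H_J} = J$. Well-formedness is immediate, since we add no receives and the matching $\mu$ is unchanged.

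\textbf{Step 3 (the equivalence).} Since $\Obs(H) = \{Q(I_H)\}$ is a singleton, monotonicity of $\Spec_Q$ (Definition~\ref{def:monotone}) reduces to
\[
H_1 \hext H_2 \;\Rightarrow\; Q(I_{H_1}) \subseteq Q(I_{H_2}).
\]
For (3)$\Rightarrow$(2), apply Step 1 and then the monotonicity of $Q$. For (2)$\Rightarrow$(3), take $I \subseteq J$, realize them as $H_I \hext H_J$ by Step 2, and read off $Q(I) \subseteq Q(J)$.

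\textbf{Main obstacle.} The delicate point is (1). Ameloot et al.'s notion of coordination-freedom is defined operationally: there must exist a distribution of the input on which the network computes $Q$ without reading messages, and the network must also be correct on all distributions and fair runs. One might want a self-contained link to Theorem~\ref{thm:calm-operational} instead of citing them, but the two models differ. Their setting uses eventual consistency over fair runs with a final output, whereas ours uses immediate responses with irrevocable outputs. I would therefore cite their theorem for (1)$\Leftrightarrow$(3), with the caveat that it is stated for the oblivious / $\mathcal{F}_0$ setting, which matches our instantiation because it has no membership knowledge. As a sanity check on that model correspondence, I would also note that output facts in a transducer are accumulated and never retracted. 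This matches refinement coherence under $\subseteq$, so $\Spec_Q$ is well-formed in the sense of Section~\ref{sec:operational}.
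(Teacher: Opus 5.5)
Your proposal is correct and takes the same approach as the paper: you get $(2)\Leftrightarrow(3)$ directly from the instantiation, using that $\Obs(H)=\{Q(I_H)\}$ is a singleton, and $(1)\Leftrightarrow(3)$ by citing the relational-transducer CALM theorem of Ameloot et al.~\cite{ameloot2013relational}; your Steps 1--2 spell out what the paper calls immediate, namely that $H_1 \hext H_2$ gives $I_{H_1}\subseteq I_{H_2}$ and that every inclusion $I\subseteq J$ is realized by some pair $H_I \hext H_J$. The caveat about the oblivious setting is unnecessary, since $(1)$ is stated in Ameloot et al.'s own model and the cited equivalence does not depend on how that model lines up with $\Spec_Q$; it is also slightly off, because their coordination-free networks may use the membership relation $\mathsf{All}$, and oblivious networks merely compute the same class of queries.
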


\begin{proof}
  The equivalence $(2) \Leftrightarrow (3)$ is immediate from the
  instantiation: $\Spec_Q$ is monotone iff $Q$ is monotone as a
  set-valued function.
  The equivalence $(1) \Leftrightarrow (3)$ is the
  relational-transducer CALM theorem~\cite{ameloot2013relational}
  (Ameloot et al.'s Corollary~13).
\end{proof}

\noindent
The subsumption is exact: under the transducer instantiation, the two
criteria agree.
However, the transducer model studies coordination-free computation
of a common output set, so replica agreement is built into that
formulation.
Complete CALM decouples the two properties: monotonicity characterizes
when coordination is avoidable; replica consistency is a separate
structural property of $\Ord$
(Section~\ref{sec:replica-consistency}).

The same argument extends to the full hierarchy of Ameloot, Ketsman,
Neven, and Zinn~\cite{ameloot2015weaker} and to Baccaert and Ketsman's
$\mathbf{C}$-monotonicity~\cite{baccaert2026spectrum}: restricting
the set of admissible histories to those consistent with each model's
information assumptions yields the corresponding characterization as
an instance.
Formal instantiations appear in Appendix~\ref{app:hierarchy}.

Beyond subsumption, Complete CALM extends naturally to settings the
transducer model cannot express: event universes that include EDB
deletions, non-monotone state evolution over histories, non-monotone
queries, and specifications where
coordination is used internally---all of which are covered in the next section.
In each case the criterion applies directly, and proper coordination
(Section~\ref{sec:well-coord-section}) can verify that the residual
interface is monotone.

\section{Applications to Data Management}
\label{sec:applications}

We instantiate Complete CALM on three families of specifications central
to data management.
In each case, the same semantic test---monotonicity---recovers a
known coordination boundary and provides a uniform explanation for why it
arises.

\subsection{Transactional Isolation Levels}
\label{sec:isolation}

Bailis et al.~\cite{bailis2013hat} identify a sharp boundary between
isolation levels achievable with high availability (HATs) and those
requiring coordination.
We recover this boundary as the monotonicity boundary.

\paragraph{Spec sketch.}
Histories include transaction invocations, read/write events, and commit
decisions.
Outcomes are sets of atomic facts
$\{\mathsf{commit}(T), \mathsf{read}(T, x, v), \ldots\}$
recording committed transactions and their observed values.
The outcome order is set inclusion: $o_1 \Ord o_2$ iff $o_1 \subseteq o_2$.
An isolation level $L$ induces $\Spec_L = (E, \Obs_L, \Ord)$ where
$\Obs_L(H)$ contains the outcome fact-sets consistent with $L$'s
correctness constraints.

\begin{proposition}[HAT levels are monotone]
  \label{prop:hat-monotone}
  Under read uncommitted, read committed, and session guarantees
  (monotonic reads, read-your-writes), the specification $\Spec_L$ is
  monotone.
\end{proposition}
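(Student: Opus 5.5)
We verify Definition~\ref{def:monotone} directly for each level $L$. Fix $H_1 \hext H_2$ and an outcome $o_1 \in \Obs_L(H_1)$, a set of facts $\mathsf{commit}(T)$ and $\mathsf{read}(T,x,v)$ satisfying $L$'s constraints relative to $H_1$. We build $o_2 \in \Obs_L(H_2)$ with $o_1 \subseteq o_2$ by keeping every fact of $o_1$ and adding facts only for events in $E_2 \setminus E_1$. The common structural fact is that each HAT constraint is a \emph{downward-looking} predicate on a single read. It speaks only about writes that are committed, or issued by the same session, in the read's causal past. By condition (iii) of Definition~\ref{def:future}, that past is identical in $H_1$ and $H_2$. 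So it suffices to show two things: old facts stay valid at $H_2$, and new reads can be assigned values that satisfy $L$.

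\emph{Step 1: old facts remain admissible.} For each level we check that the constraint on $\mathsf{read}(T,x,v) \in o_1$ depends only on the down-set of the read event together with facts already in $o_1$.
\begin{itemize}
\item Read uncommitted: $v$ must have been written by some transaction. The writing event lies in the read's causal past (it is visible via happens-before), hence it lies in $E_1$.
\item Read committed: additionally, the writer's commit fact must appear. Since $o_1$ already contains it, it appears in any superset.
\item Monotonic reads and read-your-writes: these relate the read to earlier reads and writes in the same session. By program order, those events are predecessors, hence in $E_1$ with unchanged order by condition (ii).
\end{itemize}
Because $\Obs_L$ is defined by these local predicates, adding facts never falsifies an old one. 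We record this as a small lemma: each $L$'s admissibility is closed under adding facts about causally later events. Commit facts in $o_1$ also persist, since commit decisions recorded in $H_1$ are events of $H_2$.

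\emph{Step 2: extend to new events.} Process the new read events of $E_2 \setminus E_1$ in a linear extension of $\rightarrow_2$. Assign to each new read of $x$ the value of the latest write to $x$ in its own session's causal past whose commit is visible, or $\bot$ if there is none. Under read committed we additionally require that the write be committed in the causal past. This one choice simultaneously satisfies the following.
\begin{itemize}
\item Read uncommitted and read committed: the write exists and, when required, is committed.
\item Read-your-writes: the session's own writes are in its past, so "latest" includes them.
\item Monotonic reads: the causal past only grows along program order, so the chosen write is never older than one previously observed.
\end{itemize}
Add $\mathsf{commit}(T)$ for the new commit events. The resulting $o_2 \supseteq o_1$ lies in $\Obs_L(H_2)$.

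\emph{Main obstacle.} The delicate point is monotonic reads when an old read in $o_1$ returned a value that is not the "latest" under our Step~2 rule. The demonic choice in $o_1$ may have picked an arbitrary admissible older write, so a new read must dominate it, not merely be latest by our rule. To handle this, define the new read's value relative to the maximum, in the session's version order, of the writes already observed in $o_1$ and the latest visible write. We must check that this maximum is well-defined: the versions observed within a session are totally ordered by the session's causal past, so the maximum exists. We must also check that picking it never breaks read committed; it does not, because both candidates are committed writes. If the paper's $\Obs_L$ uses a per-object version order rather than causal order, we would instead invoke the existence of an arbitrary total extension fixed at $H_1$, noting that this order need only be extended, never revised.
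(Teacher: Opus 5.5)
Your overall route matches the paper's. The paper's sketch argues that each HAT constraint is \emph{prefix-stable}: commits are never undone, and no constraint retroactively invalidates an admissible read. It then takes $o' = o \cup \{\text{newly committed facts}\}$. Your Step~1 is a more careful version of this, using condition (iii) of Definition~\ref{def:future} to show that every old read's causal past is unchanged in $H_2$.

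Your Step~2 goes further than the paper by choosing values for new reads. In the paper's model, though, $H_2$ may already record those values: the serializability witness speaks of $T_2$ committing with $\mathsf{read}(T_2,y,\bot)$. Consistency with observations (Section~\ref{sec:operational}) then fixes them, and that is why the paper simply adjoins the facts $H'$ records. Your free assignment is therefore available only for reads the history leaves open. For recorded reads you need, as the paper tacitly assumes, that the recorded values are themselves $L$-admissible.

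The step that fails as written is the well-definedness of ``latest'' and ``maximum'' in your monotonic-reads argument. Writes to $x$ in a read's causal past are only partially ordered. Two concurrent writes from different nodes can both be delivered before a session reads, so there need not be a causally latest one. Hence the claim that the versions observed within a session are totally ordered by its causal past is false: under the usual version-order formulation, a session may read $w_a$ and later a concurrent $w_b$. For the same reason, the maximum of a previously observed write and a ``latest'' visible write need not exist. Moreover, independent per-read choices in different sessions could demand incompatible per-object version orders.

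Your fallback is the correct argument and should be the main one:
\begin{enumerate}
\item Take a per-object version order $\ll_1$ witnessing $o_1 \in \Obs_L(H_1)$.
\item Extend it by placing all new writes after all old ones. This is consistent with $\rightarrow_2$, since by condition (iii) no new write precedes an old one.
\item Let each new read return the $\ll$-greatest write to $x$ in its causal past, or $\bot$ if there is none. For read committed, restrict to writes whose commit is also in that past.
\end{enumerate}
Any write observed by an earlier read of the same session lies in that read's causal past. By program order and downward closure, it also lies in the later read's causal past. So this choice dominates every previously observed version and the session's own committed writes. Monotonic reads and read-your-writes then follow directly, and your separate ``maximum'' step becomes unnecessary.
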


\begin{proof}[Proof sketch (read committed; other HAT levels are analogous)]
  We show: for all $H \hext H'$ and all $o \in \Obs_L(H)$, there
  exists $o' \in \Obs_L(H')$ with $o \subseteq o'$.
  Under read committed, a read must return a value written by a
  committed transaction.
  Extending a history can add committed transactions but cannot
  uncommit a previously committed one.
  Hence every $\mathsf{read}(T, x, v) \in o$ remains valid at $H'$,
  and $o' = o \cup \{\text{newly committed facts}\} \in \Obs_L(H')$.
  The argument for the other HAT levels is identical in structure:
  each level's constraints are \emph{prefix-stable}---they restrict
  which read values are admissible but never retroactively invalidate
  a previously admissible read.
\end{proof}

\begin{proposition}[Serializability is not monotone]
  \label{prop:ser-nonmonotone}
  Under serializability, there exist $H_1 \hext H_2$ and
  $o_1 \in \Obs_{\mathit{ser}}(H_1)$ with no extension in
  $\Obs_{\mathit{ser}}(H_2)$.
\end{proposition}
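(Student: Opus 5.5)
The plan is to exhibit an explicit witness, in the style of the register example of Section~\ref{sec:example}: a write-skew or lost-update history in which an early commit fact forces a serialization order that a causally later extension contradicts. I will use two transactions at different replicas $p$ and $q$ over one item $x$, initially $0$. At $p$, $T_1$ reads $x$ and writes $x$; at $q$, $T_2$ reads $x$ and writes $x$. The two transactions are concurrent, with no messages exchanged between them.

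\textbf{Construct $H_1$.} In $H_1$, $T_1$ performs $\mathsf{read}(T_1,x,0)$, writes $x=1$, and commits at $p$. Meanwhile $T_2$ has performed $\mathsf{read}(T_2,x,0)$ at $q$ and issued its write, but no commit event for $T_2$ appears yet. The outcome $o_1=\{\mathsf{read}(T_1,x,0),\mathsf{commit}(T_1)\}$ lies in $\Obs_{\mathit{ser}}(H_1)$, because the committed projection $\{T_1\}$ is trivially serializable.

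\textbf{Construct $H_2$.} $H_2$ extends $H_1$ by adding, at $q$ after $T_2$'s existing events, $T_2$'s commit decision together with its response event. This is a future in the sense of Definition~\ref{def:future}: the new events are program-order successors of $q$'s events, no predecessors are inserted, and no delivery from $p$ is needed. Because the environment controls scheduling and no coordination is assumed in the history space, this commit is a well-formed history. Under the set-inclusion order, any $o'\supseteq o_1$ in $\Obs_{\mathit{ser}}(H_2)$ must contain $\mathsf{commit}(T_1)$ and $\mathsf{read}(T_1,x,0)$, and by consistency with observations it must also contain the recorded $\mathsf{commit}(T_2)$ and $\mathsf{read}(T_2,x,0)$.

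\textbf{Check that no serial order exists.} The serial order $T_1;T_2$ forces $T_2$ to read $1$, and the order $T_2;T_1$ forces $T_1$ to read the value written by $T_2$, not $0$. So no serial order explains both reads, and no such $o'$ exists. This gives $H_1\hext H_2$ and $o_1$ with no extension, which is exactly what the proposition claims.

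\textbf{Main obstacle.} The delicate point is pinning down $\Obs_{\mathit{ser}}$ precisely, because the paper only sketches it. I must argue that committed facts recorded in the history, including $T_2$'s commit in $H_2$, are forced into every admissible outcome; otherwise one could escape by omitting $\mathsf{commit}(T_2)$. I would justify this via the consistency-with-observations condition of Section~\ref{sec:operational}, reading commit decisions as response events. An alternative, if needed, is to strengthen $\Obs_{\mathit{ser}}$ so that outcomes must cover all committed transactions in $H$.
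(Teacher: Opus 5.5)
Your proof is correct and matches the paper's argument in structure: commit $T_1$ at $H_1$ after it reads the initial value, extend to $H_2$ by committing a concurrent $T_2$ that also read the initial value, and observe that no serial order explains both reads. The paper uses write skew over two items ($T_1$ reads $x$ and writes $y$, $T_2$ reads $y$ and writes $x$) where you use a single-item lost update. Your appeal to consistency with observations makes explicit a step the paper takes for granted.

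One small fix: specify that $T_2$ writes a value other than $0$ (say $2$). If $T_2$ wrote $0$, the serial order $T_2;T_1$ would explain $\mathsf{read}(T_1,x,0)$ under the paper's value-based read facts.
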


\begin{proof}[Witness]
  Let $x = y = \bot$ initially.
  $T_1$ performs $\mathsf{read}(T_1, x)$ then $\mathsf{write}(T_1, y, \vdog)$;
  $T_2$ performs $\mathsf{read}(T_2, y)$ then $\mathsf{write}(T_2, x, \vcat)$.
  At $H_1$, $T_1$ commits after reading $\bot$:
  $o_1 = \{\mathsf{commit}(T_1), \mathsf{read}(T_1, x, \bot)\}
  \in \Obs_{\mathit{ser}}(H_1)$.
  Extend to $H_2$ where $T_2$ also commits with $\mathsf{read}(T_2, y, \bot)$.
  Any $o_2 \supseteq o_1$ in $\Obs_{\mathit{ser}}(H_2)$ must include
  both reads, but $\mathsf{read}(T_1, x, \bot)$ and
  $\mathsf{read}(T_2, y, \bot)$ together induce a cyclic serialization
  order.
  No such $o_2$ exists.
\end{proof}

\noindent
By Complete CALM, serializability intrinsically requires coordination.
By Complete CAP (Section~\ref{sec:complete-cap}), it is also not
CAP-free: remote read/write activity that conflicts with a local
transaction's outcome can invalidate it across a partition, so
serializability cannot be implemented with consistency, availability,
and partition tolerance simultaneously.
This recovers the non-HAT classification
of~\cite{bailis2013hat} as a semantic test.

Snapshot isolation~\cite{berenson1995critique} likewise requires
coordination, witnessed by its \emph{first-committer-wins} (FCW)
rule: among concurrent transactions with overlapping write sets
running from the same snapshot, at most one may commit.
Let $T_1$ and $T_2$ both run from the same snapshot and both write
to~$x$.
At $H_1$, $T_1$ commits:
$o_1 = \{\mathsf{commit}(T_1), \mathsf{write}(T_1, x, v_1)\} \in
\Obs_{\mathit{SI}}(H_1)$.
Extend to $H_2$ where $T_2$ also commits with
$\mathsf{write}(T_2, x, v_2)$ from the same snapshot.
Under FCW, no outcome in $\Obs_{\mathit{SI}}(H_2)$ contains both
$\mathsf{commit}(T_1)$ and $\mathsf{commit}(T_2)$---the write sets
overlap and both transactions read from the same snapshot.
Hence any $o_2 \supseteq o_1$ must include $\mathsf{commit}(T_1)$
but exclude $\mathsf{commit}(T_2)$; yet $H_2$ records $T_2$'s
commit, so no such $o_2 \in \Obs_{\mathit{SI}}(H_2)$ exists.
The outcome $o_1$ has no refinement at $H_2$: non-monotonicity.

A common theme: specifications committing only to
partial-order structure (causal consistency, read committed) are
monotone, while those committing to global constraints across
concurrency (serializability, SI's write-conflict exclusion) are not.

\subsection{Invariant Confluence}
\label{sec:i-confluence}

Bailis et al.~\cite{bailis2014coordination} show that an application
invariant can be preserved without coordination iff it is
\emph{$I$-confluent}: merging any two invariant-preserving states yields
an invariant-preserving state.
We recover this as an instance of Complete CALM.

\paragraph{Instantiation.}
Model a replicated database where nodes apply transactions locally and
merge via gossip ($E_{\mathit{inv}}$: transactions;
$E_{\mathit{send}}/E_{\mathit{recv}}$: gossip;
$E_{\mathit{int}}$: merge events).
The observable at $H$ is the \emph{converged state}
$\mathit{conv}(H)$---the unique state obtained by join (merge) of all updates that
have occurred at $H$, regardless of delivery order.
Define $\Obs_I(H) = \{\mathit{conv}(H)\}$ when $\mathit{conv}(H)$
satisfies $I$, and $\Obs_I(H) = \emptyset$ otherwise (the history is
unrealizable under any invariant-preserving execution).
The outcome order is the merge order on database states:
$o_1 \Ord o_2$ iff $o_2 = \mathit{merge}(o_1, s)$ for some state
$s$.

\begin{proposition}[$I$-confluent $\Leftrightarrow$ monotone]
  \label{prop:iconfl}
  $\Spec_I$ is monotone iff the transaction set is \mbox{$I$-confluent.}
\end{proposition}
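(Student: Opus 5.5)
The plan is to prove both directions directly from Definition~\ref{def:monotone}. Since $\Obs_I$ is a singleton or empty, monotonicity collapses to a statement about converged states. For $H_1 \hext H_2$ with $\Obs_I(H_1) = \{\mathit{conv}(H_1)\}$, monotonicity asks for an outcome in $\Obs_I(H_2)$ that refines $\mathit{conv}(H_1)$. Because $\mathit{conv}$ is the join of all updates present, and $H_1$'s events are a subset of $H_2$'s, we get $\mathit{conv}(H_2) = \mathit{merge}(\mathit{conv}(H_1), s)$ with $s$ the join of the new updates. Hence $\mathit{conv}(H_1) \Ord \mathit{conv}(H_2)$ holds automatically.

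Monotonicity is therefore equivalent to a single condition: whenever $\mathit{conv}(H_1)$ satisfies $I$ and $H_1 \hext H_2$, also $\mathit{conv}(H_2)$ satisfies $I$. The only failure mode is $\Obs_I(H_2) = \emptyset$. I will state this reduction as a first lemma; it uses only associativity, commutativity and idempotence of $\mathit{merge}$, and downward closure in Definition~\ref{def:future}.

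For ($I$-confluent $\Rightarrow$ monotone), I would decompose the extension from $H_1$ to $H_2$. The new transactions in $H_2 \setminus H_1$ are executed at replicas against local causal views, and each such view is a prefix of $H_2$. The intended invariant is that every reachable local state, and every merge of reachable states, satisfies $I$. I would prove this by induction on the number of events added, ordered by a linear extension of $\rightarrow_2$. A local transaction applied to an $I$-valid state yields an $I$-valid state, since only invariant-preserving transactions are admitted. Merging two $I$-valid states yields an $I$-valid state by $I$-confluence. Finally $\mathit{conv}(H_2)$ is the merge of all replica states, so it satisfies $I$.

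For the converse, suppose $I$-confluence fails. Then there are two $I$-valid states $s_1, s_2$, each reachable from a common ancestor by invariant-preserving transaction sequences, whose merge violates $I$. I would build $H_1$ in which one replica has applied the sequence leading to $s_1$ and the other the sequence leading to $s_2$, with no cross-gossip. The catch is that $\mathit{conv}(H_1)$ is already the merge of $s_1$ and $s_2$, which is invalid. I would therefore take $H_1$ to contain only the first replica's branch, so that $\mathit{conv}(H_1) = s_1$ is $I$-valid. Then $H_2$ adds the concurrent second branch plus gossip; these are all causally later or concurrent events, so $H_1 \hext H_2$. Now $\mathit{conv}(H_2) = \mathit{merge}(s_1, s_2)$ violates $I$, so $\Obs_I(H_2) = \emptyset$, and $\mathit{conv}(H_1)$ has no refinement: non-monotone.

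The main obstacle is the gap between the semantic and the operational notion. The paper's $\Obs_I$ checks $I$ only on the global converged state, whereas $I$-confluence is phrased over replica-local reachable states. In the forward direction this means I must be careful that "transaction set is $I$-confluent" licenses the local validity steps; I would first interpret admissible transactions as those preserving $I$ on the local view. In the converse I must check that the witnessing histories are well-formed futures, in particular that the second branch can be appended without adding predecessors of $H_1$ events. I expect this to hold since the two branches are concurrent.
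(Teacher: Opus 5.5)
Your proposal is correct and follows essentially the same route as the paper. The paper's forward direction is a one-line version of your induction (``extending $H_1$ to $H_2$ merges additional $I$-valid states into the converged state''), and it rests on the same assumption you flag explicitly, that locally executed transactions preserve $I$. Its converse is exactly your construction: $H_1$ contains only replica $A$'s branch, so $\mathit{conv}(H_1) = D_1$, and $H_2$ adds $B$'s concurrent branch, which forces $\Obs_I(H_2) = \emptyset$; your reduction lemma (that $\mathit{conv}(H_1) \Ord \mathit{conv}(H_2)$ is automatic) simply spells out what the paper compresses into ``the outcome refines.''
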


\begin{proof}[Sketch]
  \emph{$I$-confluent $\Rightarrow$ monotone:}
  If $T$ is $I$-confluent and $\mathit{conv}(H_1)$ satisfies $I$,
  then extending $H_1$ to $H_2$ merges additional $I$-valid states
  into the converged state; by $I$-confluence the result satisfies $I$.
  Hence $\Obs_I(H_2) \neq \emptyset$ and the outcome refines.

  \emph{Not $I$-confluent $\Rightarrow$ not monotone:}
  If $T$ is not $I$-confluent, there exist $I$-valid states $D_1, D_2$
  whose merge violates $I$.
  At $H_1$ (only replica $A$ active, $\mathit{conv}(H_1) = D_1$),
  $\Obs_I(H_1) = \{D_1\}$.
  Extend to $H_2$ by adding $B$'s transaction reaching $D_2$:
  $\mathit{conv}(H_2) = \mathit{merge}(D_1, D_2)$ violates $I$, so
  $\Obs_I(H_2) = \emptyset$.
  The outcome at $H_1$ has no refinement---$\Spec_I$ is not monotone.
\end{proof}

\noindent
By Complete CALM, $I$-confluent invariants admit coordination-free
enforcement; non-$I$-confluent invariants require coordination.
By Complete CAP, non-$I$-confluent invariants are also not CAP-free:
a partition-constrained merge can violate the invariant.
The criterion provides the \emph{why}: non-$I$-confluence is exactly
non-monotonicity of observable state sets under causal extension.

\subsection{CRDTs and Replicated Objects}
\label{sec:crdts}

Conflict-free replicated data types~\cite{shapiro2011crdt} achieve
coordination-free replication by restricting updates to inflationary
operations on a join-semilattice.
Complete CALM explains why this works.

\paragraph{Spec sketch.}
A state-based CRDT has state space $(S, \sqcup)$ forming a
join-semilattice.
Updates are inflationary: $\mathit{update}(s) \sqsupseteq s$.
Outcomes are states, ordered by the lattice order:
$o_1 \Ord o_2$ iff $o_1 \sqsubseteq o_2$.
$\Obs(H)$ contains the states reachable by applying updates and merges
consistent with $H$.

\begin{proposition}[CRDTs are monotone]
  \label{prop:crdt-monotone}
  Any specification whose updates are inflationary in a
  join-semilattice and whose outcome order is the lattice order is
  monotone.
\end{proposition}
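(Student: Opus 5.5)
We verify Definition~\ref{def:monotone} directly. Fix $H_1 \hext H_2$ and $o \in \Obs(H_1)$. We must produce some $o' \in \Obs(H_2)$ with $o \sqsubseteq o'$. Since $\Obs(H)$ is defined as the set of states reachable by applying updates and merges consistent with $H$, we first make this precise. An outcome $o \in \Obs(H_1)$ is witnessed by a derivation: a finite sequence of update applications and joins, starting from the initial state $\bot$. Each update in the sequence corresponds to an update event of $H_1$, and each merge corresponds to a delivery consistent with $\rightarrow_1$.

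\textbf{Step 1: extend the derivation.} By Definition~\ref{def:future}, $H_2$ contains all events of $H_1$ with the same causal order. Moreover, $E_1$ is downward-closed, so the derivation witnessing $o$ remains a valid derivation prefix at $H_2$. Enumerate the new update events of $E_2 \setminus E_1$ in a linear extension of $\rightarrow_2$. Continue the derivation by applying each such update, joined into the running state. Concretely, set $o' = o \sqcup s_2$, where $s_2$ is the state obtained from the updates and merges of $H_2$ not already reflected in $o$.

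\textbf{Step 2: check $o \sqsubseteq o'$.} This follows from two facts. Updates are inflationary ($\mathit{update}(s) \sqsupseteq s$), and join is an upper bound ($s \sqsubseteq s \sqcup t$). Every step of the continued derivation therefore moves upward in the lattice. By transitivity of $\sqsubseteq$ along the finite derivation, $o \sqsubseteq o'$.

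\textbf{Step 3: check $o' \in \Obs(H_2)$.} This is the main obstacle. The informal spec sketch leaves ``consistent with $H$'' loose, and we must ensure that the extended derivation respects $H_2$'s causal and delivery structure. For example, a merge at a process may only incorporate state that has causally reached it. I would handle this by fixing the reading that $\Obs(H)$ is closed under applying any update event of $H$ and joining any state derivable from the causal past of a delivery in $H$.

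Under that reading, closure is immediate. Associativity, commutativity and idempotence of $\sqcup$ make the result independent of the chosen linear extension and of duplicate deliveries. As a result, no ordering choice made in deriving $o$ can conflict with a later event, which is exactly where the linearizability example failed. If a stricter reading is intended, in which outcomes are per-process states, the same argument applies per process. In that case the extension at $p$ takes the join of $o$ with whatever $p$ receives in $H_2 \setminus H_1$, again using inflation and the upper-bound property.
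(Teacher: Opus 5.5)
Your proposal is correct and follows essentially the paper's route. Like the paper, you extend the derivation witnessing $o$ by the updates and merges in $H_2 \setminus H_1$ and obtain $o \sqsubseteq o'$ from inflation and the join upper bound; membership $o' \in \Obs(H_2)$ rests on the same closure reading of ``consistent with $H$'' that the paper assumes without comment, which you state explicitly. One small aside: your claim that the result is independent of the chosen linear extension holds for joins but not for arbitrary inflationary updates, which need not commute. Monotonicity only needs some $o'$ to exist, though, so nothing depends on that claim.
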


\begin{proof}
  Let $o \in \Obs(H)$ and $H \hext H'$.
  The state $o$ was reached by some sequence of updates and merges
  in $H$.
  The extension $H'$ may add further updates and merges; applying
  them starting from $o$ yields a state $o'$ reachable at $H'$.
  Since updates are inflationary ($\mathit{update}(s) \sqsupseteq s$)
  and merge is join ($s_1 \sqcup s_2 \sqsupseteq s_1$), we have
  $o \sqsubseteq o'$, i.e., $o \Ord o'$.
  Hence $o' \in \Obs(H')$ with $o \Ord o'$.
\end{proof}

\noindent
CRDTs are monotone \emph{by construction}: the lattice structure
guarantees that observable state can only grow.
By Complete CALM, monotone observable semantics is also
\emph{required} for coordination-freedom---the lattice discipline is
one way to achieve it, but any coordination-free replicated object
must have monotone observations.
The converse is instructive: a replicated counter with a
\emph{reset} operation is not inflationary, hence not monotone---
coordination is required to implement reset
consistently~\cite{shapiro2011crdt}.
Complete CALM applies directly to a CRDT's specification (states
ordered by the lattice order), where transducer CALM required a
faithful relational encoding.

\medskip
\noindent
Across all three application families, the same phenomenon recurs:
coordination is required exactly when observable outcomes commit to
structure that a future extension can invalidate.
For isolation levels, the fragile structure is a global serialization
order; for invariant confluence, it is a reachable-state set closed
under merge; for CRDTs, it is the lattice order itself.
Monotonicity is the uniform diagnostic that identifies this
boundary in each case.

\subsection{Coordination-Freedom vs.\ Replica Consistency}
\label{sec:replica-consistency}

A natural expectation is that coordination-freedom and replica
consistency are the same concern.
Mahajan et al.~\cite{mahajan2011consistency} established that they are
independent: plain causal consistency~\cite{ahamad1995causal} is
coordination-free but not convergent (replicas may permanently
disagree on the ordering of concurrent operations).
CRDTs~\cite{shapiro2011crdt} achieve both via join-semilattice
structure; linearizability achieves convergence but requires
coordination.

In our framework, the distinction reduces to a structural property of
$\Ord$: if $\Ord$ admits a join-semilattice structure (associative,
commutative, idempotent merge) and the implementation merges
divergent outcomes by join, monotonicity implies convergence
(divergent replicas can always reconcile); if $\Ord$ lacks joins,
monotonicity guarantees safe independent action but not convergence.

\section{The Coordination-Free Frontier}
\label{sec:frontier}

For a given system (fixed $E$ and $\Obs$), different choices of
outcome order $\Ord$ yield different coordination requirements.
Given a non-monotone specification, what are the strongest guarantees
achievable without coordination?

Since enlarging $\Ord$ (adding refinement edges) can only help
monotonicity, the coordination-free alternatives are the
\emph{minimal monotone enlargements} of $\Ord$:

\begin{definition}[Coordination-free frontier]
  \label{def:frontier}
  For a specification $(E, \Obs, \Ord)$, the \emph{coordination-free
  frontier} is the set of minimal (by edge-set inclusion) partial
  orders $\Ord' \supseteq \Ord$ such that $(E, \Obs, \Ord')$ is
  monotone.
\end{definition}

\noindent
Each frontier element preserves all original refinement relationships
and adds just enough new ones to achieve monotonicity.
The gap between $\Ord$ and a frontier element $\Ord'$ represents
commitments the client considers incompatible but that the system
cannot prevent without coordination.
We illustrate with registers here; Appendix~\ref{app:minimality}
develops analogous frontier results for queues (causal FIFO) and
search structures (forward-reachability / B-link
trees~\cite{lehman1981efficient}).

\begin{proposition}[Register frontier]
  \label{prop:register-frontier-body}
  Starting from linearizability ($\Ord_{\mathit{lin}}$ = prefix
  extension on total orders, non-monotone), causal-prefix extension
  ($\Ord_{\mathit{causal}} \supseteq \Ord_{\mathit{lin}}$) is a
  minimal monotone enlargement: it achieves monotonicity, and no
  order strictly between $\Ord_{\mathit{lin}}$ and
  $\Ord_{\mathit{causal}}$ does.
\end{proposition}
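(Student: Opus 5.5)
The plan is to fix the outcome space concretely and then prove the two claims separately: that $\Ord_{\mathit{causal}}$ is monotone, and that it is minimal. First I will make the setup precise. The outcome space $O$ consists of labelled partial orders on completed operations (reads annotated with their returned values). Total orders form the special case used by $\Obs_{\mathit{lin}}$.

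The order $o \Ord_{\mathit{lin}} o'$ holds iff $o$ is a prefix of $o'$, both being total. The order $o \Ord_{\mathit{causal}} o'$ holds iff $o$ is a downward-closed sub-order of $o'$ with identical labels on shared operations, which is causal-prefix extension.

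For the inclusion $\Ord_{\mathit{lin}} \subseteq \Ord_{\mathit{causal}}$: a prefix of a total order is downward closed in it. For $\Obs$ I take, at each history $H$, the outcomes the register interface exposes. These are the linearizations consistent with happens-before, together with the causal views: the restrictions of $\rightarrow$ to the operations each process has seen, with reads returning a write that is maximal in that view. Because the frontier definition holds $E$ and $\Obs$ fixed while varying only $\Ord$, both kinds of outcome must be present in $\Obs$ before the comparison makes sense. This matches the running-example reading, in which the same histories are compared under the two orders.

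\emph{Monotonicity of $\Ord_{\mathit{causal}}$.} Let $H_1 \hext H_2$ and $o \in \Obs(H_1)$. I take $o'$ to be the causal view induced by $H_2$ on the same process, or for a linearization $o$, the restriction of $\rightarrow_2$ to all completed operations in $H_2$, relabelled so that the reads of $H_1$ keep their recorded values. Condition (iii) of Definition~\ref{def:future} says $H_1$ is downward closed in $H_2$, and condition (ii) says order among old events is unchanged. Hence $o$'s causal part is a downward-closed sub-order of $o'$. Read values are preserved by irrevocability (refinement coherence). New reads may return any maximal write in their own view, so such an $o'$ exists. This is exactly the argument of the running example, where concurrent writes stay unordered. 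When $o$ is a total linearization, it is not itself a sub-order of $\rightarrow_2$, so I first pass through its causal skeleton. To do this I will include in $\Ord_{\mathit{causal}}$ the relation ``$o$ is refined by any causal view extending the happens-before it respects.'' I expect to pin this clause down as part of the definition.

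\emph{Minimality.} Let $\Ord'$ satisfy $\Ord_{\mathit{lin}} \subseteq \Ord' \subsetneq \Ord_{\mathit{causal}}$, so some edge $a \Ord_{\mathit{causal}} b$ is missing from $\Ord'$. I will build a witness history $H_1 \hext H_2$ with $a \in \Obs(H_1)$ such that the only outcomes at $H_2$ that $\Ord_{\mathit{causal}}$-refine $a$ are $b$ and elements reachable only through $b$. The construction generalizes the dog/cat witness. Realize $a$ as a history at some process, then extend it by delivering concurrent writes in an order that forces any refinement of $a$ to leave those writes unordered, which means it must contain $b$'s structure exactly. Choosing $H_2$ so that $\Obs(H_2)$ refining $a$ is $\{b\}$ up to $\Ord'$-reachable equivalents shows that $\Ord'$ fails monotonicity at $(H_1,H_2,a)$. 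Transitivity of $\Ord'$ is what makes it hard to remove edges independently, since a missing edge can be restored through intermediate outcomes. I will handle this by picking $b$ minimal above $a$ among missing edges, which makes the witness tight.

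The main obstacle is this minimality step. It needs every causal-prefix edge $a \Ord_{\mathit{causal}} b$ to be realized by some history pair whose refinements at $H_2$ are essentially forced to be $b$. That fails if some causal edge is never exercised by any history pair; such an edge would be removable, and minimality would then hold only up to realized edges. I would first try to show realizability directly: given $b$ extending $a$ downward-closedly, schedule the extra operations of $b$ as causally later events at a single process, and add enough concurrent writes to rule out every other refinement. If that fails, I would weaken the claim to minimality among orders that differ from $\Ord_{\mathit{causal}}$ only on realized pairs.
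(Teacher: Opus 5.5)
The genuine gap is the minimality step. You never construct the witness: you flag it as the main obstacle, and your fallback (minimality only over ``realized'' edges) concedes the claim rather than proving it. The mechanism you sketch also points the wrong way. Under causal consistency a read may return any write in its causal past (in your version, any maximal write). Delivering more concurrent writes to the reader therefore \emph{enlarges} the set of admissible outcomes at $H_2$, and with it the candidate refinements of $a$; it cannot pin one down.

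The missing idea, which the paper uses, is isolation. Take an edge $o_1 \Ord_{\mathit{causal}} o_2$ where $o_2$ appends a read $r \mapsto v$. Let $H_1$ contain the operations of $o_1$ together with a write $w(v)$ at $p$ delivered to $q$. Let $H_2$ add $r$ at $q$, causally after that delivery, with no other write in its causal past. Then $\mathit{past}(r,H_2)=\{w(v)\}$, the return value is forced, and $o_1\cdot(r\mapsto v)$ is the \emph{unique} $\Ord_{\mathit{causal}}$-refinement of $o_1$ in $\Obs(H_2)$. Any $\Ord'\subsetneq\Ord_{\mathit{causal}}$ omitting that edge therefore leaves $o_1$ unrefined.

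Uniqueness is what is needed, so your condition ``$\{b\}$ up to $\Ord'$-reachable equivalents'' is too weak: any $c\in\Obs(H_2)$ above $b$ with $(a,c)\in\Ord'$ is already an $\Ord'$-refinement of $a$. Your transitivity worry is misplaced, since if $(a,b)\notin\Ord'$, transitivity already rules out every $\Ord'$-chain from $a$ to $b$. Transitivity does, however, let you assume the missing edge is a covering edge. Such an edge appends a single operation, so the forcing history only has to pin down one new operation.

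Your realizability concern is well founded in your own formalization. With $O$ taken to be all labelled partial orders, consider a covering edge outside $\Ord_{\mathit{lin}}$ whose lower end is admitted at no history. Monotonicity never uses it, and deleting a covering edge preserves transitivity, so it can be removed. You must therefore restrict the outcome space to outcomes admissible at some history before the claim can hold.

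Second, your setup undermines the monotonicity step for the stated order. With linearizations in $\Obs$, the running example already breaks causal-prefix extension as you define it. The outcome $o_1$ at $H_1$ orders the concurrent writes $\mathsf{w}_p,\mathsf{w}_q$. By condition (ii) of Definition~\ref{def:future} these stay unordered in $\rightarrow_2$, so $o_1$ is a sub-order of no causal view at $H_2$. Moreover, $H_2$ admits no linearization at all: both writes precede $r_p \rightarrow r_q$, yet the two reads return different values.

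Your patch declares each linearization refined by any causal view extending its happens-before. That means the order you prove monotone is no longer causal-prefix extension. Each new linearization-to-view edge would also need its own forcing witness in the minimality step, which the plan does not supply.

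You are right that the literal fixed-$\Obs$ reading is problematic: with $\Obs=\Obs_{\mathit{lin}}$ no enlargement of $\Ord$ can help, because $\Obs_{\mathit{lin}}(H_2)=\emptyset$. The paper's full proof resolves this by replacing the interface rather than merging it. It takes $\Obs_{\mathit{causal}}(H)$ to be the outcomes in which every read returns a write in $\mathit{past}(r,H)$, or the initial value when that set is empty. Monotonicity is then exactly your downward-closure argument, $\mathit{past}(r,H_2)=\mathit{past}(r,H_1)$ for every read already in $H_1$, with no extra clause.
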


\begin{proof}[Proof sketch (full proof in Appendix~\ref{app:minimality})]
  $\Ord_{\mathit{causal}}$ enlarges $\Ord_{\mathit{lin}}$:
  if $o_1$ is a prefix of $o_2$ under a single total order
  ($o_1 \Ord_{\mathit{lin}} o_2$), then $o_1$ is certainly a
  causal prefix of $o_2$ ($o_1 \Ord_{\mathit{causal}} o_2$), since
  a total order is a special case of a causal order.
  But $\Ord_{\mathit{causal}}$ also relates outcomes that
  $\Ord_{\mathit{lin}}$ does not: two per-process views that order
  concurrent writes differently are incomparable under
  $\Ord_{\mathit{lin}}$ but both extend a common causal prefix under
  $\Ord_{\mathit{causal}}$.

  \emph{Monotonicity:} Under $\Ord_{\mathit{causal}}$, a read
  returns a value written in its causal past.
  Extending a history adds only causally later events
  (Definition~\ref{def:future}), so no existing read's causal past
  changes---its return value remains valid.

  \emph{Minimality:} If we remove any edge from
  $\Ord_{\mathit{causal}}$ (tighten back toward
  $\Ord_{\mathit{lin}}$), we can construct a history where the
  removed refinement is the only valid extension---monotonicity
  breaks.
  Hence no order strictly smaller than $\Ord_{\mathit{causal}}$ that
  contains $\Ord_{\mathit{lin}}$ is monotone.
\end{proof}

\noindent
This recovers the result of Mahajan et
al.~\cite{mahajan2011consistency}: causal consistency is the strongest
coordination-free consistency model for read/write registers.
Our derivation is purely semantic (from the frontier construction)
rather than operational (from a specific protocol).

\section{Related Work}
\label{sec:related}

\paragraph{CALM and relational transducers.}
The CALM theorem~\cite{ameloot2013relational} establishes that
monotone queries admit coordination-free evaluation in the relational
transducer model.
Complete CALM generalizes this from programs to specifications and
from set inclusion to arbitrary refinement orders
(Section~\ref{sec:calm-instance}); the generalization unlocks
contributions 2--5 in the introduction.
Ameloot et al.~\cite{ameloot2015weaker} and Baccaert and
Ketsman~\cite{baccaert2026spectrum} generalize CALM along an
orthogonal epistemic dimension; our lifting accommodates their
contributions as instances (Appendix~\ref{app:hierarchy}).
Li and Lee~\cite{li2025coordinationfree} develop a CALM-style
equivalence for replicated objects assuming merge as a primitive;
our framework treats replication as one instance of causal history
growth.
An early version of this work appeared as a technical
report~\cite{hellerstein2025coord}; the present paper adds the
operational adequacy theorem, proper coordination and interface
separation, formal CALM subsumption, the bidirectional Complete CAP
theorem, the coordination-free frontier, and database
applications.

\paragraph{Free termination.}
Power et al.~\cite{power2025freetermination} study whether a node can
know its output is final without coordination---strictly stronger than
monotonicity.
The two results are orthogonal: both show that algebraic properties of
the outcome space determine what can be achieved without coordination.

\paragraph{HATs and I-confluence.}
Highly Available Transactions~\cite{bailis2013hat} and
I-confluence~\cite{bailis2014coordination} are instances of Complete
CALM (Section~\ref{sec:applications}).

\paragraph{CRDTs and eventual consistency.}
Shapiro et al.~\cite{shapiro2011crdt} ensure coordination-free replication
via join-semilattices---a well-behaved subspace of monotone
specifications (Section~\ref{sec:crdts}).
Burckhardt~\cite{burckhardt2014principles} provides a comprehensive
framework for eventual consistency; our criterion provides the sharp
boundary between specifications that admit it and those that do not.

\paragraph{CAP and partition tolerance.}
Brewer's conjecture~\cite{brewer2000towards} and the Gilbert--Lynch
proof~\cite{gilbert2002cap} show linearizability cannot coexist with
availability and partition tolerance.
Complete CAP generalizes this to a bidirectional characterization for
arbitrary specifications (Section~\ref{sec:complete-cap}).

\paragraph{Arbitration-free consistency.}
Attiya et al.~\cite{attiya2025arbitration} prove that a storage
specification within a consistency model admits an available
implementation iff it is \emph{arbitration-free}---it does not require
a total arbitration order to resolve visibility or read dependencies.
Their criterion detects one pattern of non-monotonicity (total-order
commitment); ours detects all patterns uniformly.
Conversely, total ordering is one resolution mechanism---and a
universal one (Theorem~\ref{thm:universal})---but the frontier
(Section~\ref{sec:frontier}) shows that weaker resolutions often
suffice without imposing a total order.

\paragraph{Language-level monotonicity.}
LVars~\cite{kuper2013lvars}, Lasp~\cite{meiklejohn2015lasp}, and
Gallifrey~\cite{milano2019tour} provide language-level support for
lattice-based monotonic growth.
The Flo language~\cite{laddad2025flo} uses type-level monotonicity
annotations to conservatively verify coordination-freedom at compile
time.

\section{Conclusion}
\label{sec:conclusion}

Complete CALM provides a universal semantic criterion for
coordination-freedom: a specification admits a coordination-free
implementation if and only if its observable outcomes are monotone.
The criterion subsumes relational-transducer CALM as a formal
instance and operates on a much wider class of specifications.

The expressive power follows from a more fundamental shift: moving
the analysis from programs to specifications.
Prior characterizations were necessarily entangled with their
formalisms; Complete CALM lifts the question out of any single
computational model, exposing monotonicity as the semantic property
of coordination-freedom and the specification triple $(E, \Obs, \Ord)$
as the minimal vocabulary for stating it.

The generality of the framework opens connections to several areas of classical
theory.
Complexity theory invites quantitative questions: given a
non-monotone specification, how much coordination does it require?
Game theory offers a parallel lens on commitment under partial
information: equilibrium concepts distinguish forced moves from
arbitrary tie-breaking, and mechanism design suggests how to
restructure interactions to reduce that depth.
Non-convex stochastic optimization addresses a structurally similar
problem of irreversible decisions under partial information sampled
from a noisy distribution, using gating mechanisms (line search,
restarts) that play a semantic role analogous to coordination.

Complete CALM is an exact criterion: all modeling freedom resides in the
specification.
Once a specification's three components are fixed, monotonicity completely
characterizes coordination---no finer semantic boundary exists without
altering the specification itself.
Disagreements about whether a system ``needs coordination'' reduce to
disagreements about what belongs in $\Obs$ and $\Ord$.

\section*{Acknowledgments}
Thanks to Peter Alvaro, Natacha Crooks, Chris Douglas, Tyler Hou, Martin Kleppmann, Paris
Koutris, Edward Lee, Shulu Li, David Chu McElroy, Sarah Morin and Dan Suciu for helpful feedback.
Generative AI tools (Claude and ChatGPT) were used as interactive
writing and reviewing assistants during the preparation of this work,
including drafting and revising text, checking mathematical arguments,
and suggesting edits.
All content was reviewed, validated, and approved by the authors.

\pagebreak
\appendix

\section{Operational Adequacy: Full Proof}
\label{app:operational}

This appendix provides the formal I/O automaton definitions and the
complete proof of Theorem~\ref{thm:calm-operational}.

\begin{definition}[Process (I/O automaton)]
  \label{def:process}
  Each process $p_i$ is an I/O automaton with:
  \begin{itemize} 
    \item \emph{Input actions:} client invocations
          $\mathsf{inv}(e)_i$ and message arrivals
          $\mathsf{recv}(m)_i$.
    \item \emph{Output actions:} client responses
          $\mathsf{resp}(e, v)_i$ and message sends
          $\mathsf{send}(m)_i$.
    \item \emph{Internal actions:} local computation steps
          $\tau_i$ (state updates, gossip decisions,
          pre-computation).
    \item A state space $\Sigma_i$, initial state $\sigma_i^0$,
          and a transition relation on
          $\Sigma_i \times A_i \times \Sigma_i$, where $A_i$ is the
          set of all input, output, and internal actions of $p_i$.
  \end{itemize}
  Input actions are always enabled (the automaton cannot refuse an
  invocation or message).
  A \emph{distributed implementation} is a composition of $n$ such
  automata communicating via shared send/receive actions.
\end{definition}

\noindent
An \emph{execution} is a (possibly infinite) alternating sequence of
states and actions, starting from initial states, consistent with the
transition relations.
The adversarial asynchronous environment controls when input actions
occur: it determines message delivery order, timing, and the arrival
of client invocations.
Internal and output actions are controlled by the automata themselves.
We assume \emph{fair local scheduling}: enabled internal and output
actions at a process are eventually executed (the environment cannot
indefinitely suppress a process's local computation).
We restrict to deterministic automata (internal and output actions are
determined by the current state); nondeterministic choices can be
modeled as internal events without loss of generality.

\begin{proof}[Proof of Theorem~\ref{thm:calm-operational}]
  \emph{Sufficiency (monotone $\Rightarrow$ implementation exists).}
  Construct the \emph{causal-view protocol}: each process $p_i$
  maintains as part of its state $\sigma_i$ a local history object
  $\hat{H}_i$ representing its causal view.
  On every event (invocation, response, or message receipt), the
  process broadcasts its updated $\hat{H}_i$ to all other processes
  via $\mathsf{send}$.
  On $\mathsf{inv}(e)_i$, the process adds $e$ to $\hat{H}_i$,
  picks any $o \in \Obs(H_i)$ (where $H_i$ is the history
  represented by $\hat{H}_i$),
  immediately performs $\mathsf{resp}(e, o(e))_i$, and gossips the
  updated $\hat{H}_i$.
  The response is enabled without requiring any further input action.

  \emph{Correctness:} Let $H$ be the global history at any execution
  prefix, and let $H_i$ be the abstract causal view of $p_i$---the
  downward-closed subhistory of $H$ consisting of events that have
  causally reached $p_i$.
  By construction, $\hat{H}_i$ faithfully represents $H_i$, and
  $H_i \hext H$.
  Monotonicity guarantees that the chosen $o \in \Obs(H_i)$
  has a refinement $o' \in \Obs(H)$ with $o \Ord o'$.
  Refinement coherence gives $o'(e) = o(e)$ for every invocation
  $e$ in $H_i$, including the one just answered; consistency with
  observations gives $o'(e) = v$ for every response $(e, v)$
  recorded in $H$.
  Hence $o'$ witnesses correctness at this prefix.

  \emph{Joint consistency of independent responses.}
  We show that independently chosen responses at different processes
  are jointly correct without any inter-process agreement.
  Suppose monotonicity holds and two processes respond concurrently:
  $p$ chose $o_p \in \Obs(H_p)$ with $o_p(e_p) = v_p$;
  $q$ chose $o_q \in \Obs(H_q)$ with $o_q(e_q) = v_q$.
  The global history $H$ is a future of both local views
  ($H_p \hext H$ and $H_q \hext H$) and contains both response
  events.
  By monotonicity, both $o_p$ and $o_q$ have refinements in
  $\Obs(H)$.
  By consistency with observations, \emph{every} outcome in $\Obs(H)$
  must agree with all responses recorded in $H$---in particular with
  both $(e_p, v_p)$ and $(e_q, v_q)$.
  Hence any refinement of $o_p$ in $\Obs(H)$ already agrees with
  $q$'s response, and vice versa: a single outcome witnesses
  correctness for both.

  \emph{Necessity (coordination-free $\Rightarrow$ monotone).}
  Contrapositive: if $\Spec$ is not monotone, no correct
  coordination-free implementation of $\Spec$ exists.
  Correctness here is the full demonic interface contract
  (Section~\ref{sec:operational}): every outcome in $\Obs(H)$ is a
  permitted exposure, and safety must hold for all of them.
  An implementation that avoids some permitted outcomes is realizing a
  restricted interface $\Obs' \subset \Obs$, not $\Spec$ itself.

  Let $o \in \Obs(H_1)$ be future-inconsistent, witnessed by
  $H_1 \hext H_2$: no $o' \in \Obs(H_2)$ satisfies $o \Ord o'$.
  By demonic exposure, a correct implementation must be safe under
  every admissible interpretation; in particular, an adversary may
  hold the implementation to $o$ as the operative interpretation at
  $H_1$.
  Once $o$ is so held, correctness must persist as the history
  extends.

  A coordination-free process responds from local state alone; its
  local view is a causal subhistory of $H_1$ and cannot distinguish
  $H_1$ from extensions where $o$ remains valid.
  An admissible continuation reaches $H_2$.
  Because the operational theorem applies to well-formed response
  interfaces, future-inconsistent outcomes manifest as response
  exposures; let $r = \mathsf{resp}(e, v)$ be the response event
  exposing $o$ at $H_1$, and let $H_2^r$ be the history obtained by
  including $r$ in $H_2$.
  Since adding response events can only restrict admissibility
  ($\Obs(H_2^r) \subseteq \Obs(H_2)$), no outcome in $\Obs(H_2^r)$
  refines $o$ either.
  Correctness is violated.

  The only way to escape this argument is to rule $o$ out as a
  permitted interpretation---to refuse the adversary's nomination.
  But that means the implementation does not admit $o$ at all: it is
  realizing a strict subset of $\Obs$, which is a coordinated variant
  (Section~\ref{sec:separation}), not $\Spec$.

  Hence no correct coordination-free implementation of $\Spec$
  exists.
\end{proof}

\section{Complete CAP: Formal Treatment}
\label{app:cap-formal}

We formalize CAP within our framework using the partition-constrained
futures from Section~\ref{sec:complete-cap}.
The appendix provides the availability definition and a detailed proof.

\begin{definition}[Maximal availability under partitions]
  \label{def:maximal-availability}
  Fix $\Spec = (E, \Obs, \Ord)$ and a partition $P = (S, \bar{S})$.
  Let $\mathsf{Ext}_P(H_i)$ be the $P$-constrained extensions of
  $H_i$ (futures with no cross-partition message delivery).
  An implementation $I$ is \emph{maximally available under $P$}
  if whenever some $H \in \mathsf{Ext}_P(H_i)$ contains both an
  invocation $e$ and a matching response $\mathsf{resp}(e, v)$, and
  $\Obs(H) \neq \emptyset$, there exists an execution
  $H^* \in \mathsf{Ext}_P(H_i)$ of $I$ in which a response for $e$
  is emitted.
\end{definition}

\noindent
This is a semantic availability condition: if a valid response is
possible under the partition, the implementation must have an execution
that produces one.
Stronger universal/fairness-based availability notions (``every
request to a non-failing node eventually receives a response'') only
strengthen the impossibility direction of the theorem.

\begin{theorem}[Complete CAP, restated]
  \label{thm:cap-formal}
  Let $\Spec = (E, \Obs, \Ord)$ be a specification with process
  assignment $\pi$ in the asynchronous message-passing model.
  $\Spec$ admits an implementation that is both correct on all
  well-formed histories and maximally available under all
  partitions if and only if $\Spec$ is distributed-monotone.
\end{theorem}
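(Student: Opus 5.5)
The proof has two directions, and each reuses the matching half of the proof of Theorem~\ref{thm:calm-operational}, restricted to $P$-constrained futures.

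\emph{Sufficiency ($\Leftarrow$).} Assume $\Spec$ is distributed-monotone. I would use the causal-view protocol from Appendix~\ref{app:operational}: each process $p_i$ keeps $\hat{H}_i$, answers an invocation $e$ at once with $o(e)$ for some $o \in \Obs(H_i)$, and gossips $\hat{H}_i$. Maximal availability is then immediate, since responses never wait for input.

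Correctness is the real content. Fix a partition $P=(S,\bar S)$ and a response exposed at $p\in S$ from view $H_p$. I would argue in three steps.
\begin{enumerate}
\item Every global history $H$ reachable from $H_p$ while $P$ holds is a $P$-constrained future of $H_p$. The reason is that $p$'s view is downward closed, and every event added under the partition involves no cross-partition receive.
\item Distributed-monotonicity then yields $o'\in\Obs(H)$ with $o\Ord o'$.
\item Refinement coherence and consistency with observations combine exactly as in the joint-consistency argument. This gives a single witness $o'$ for all recorded responses.
\end{enumerate}

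This only covers futures during a partition. Histories outside any partition would, on the face of it, need full monotonicity, which we do not have. I would handle this by reading Definition~\ref{def:dist-monotone} as quantifying over all partitions, including the degenerate split that isolates a single node. I would also allow same-side coordination, for example a mutex or local ordering authority per Theorem~\ref{thm:universal}, applied within a connected component. This discharges non-monotonicity confined to one side, and it never waits on cross-partition input, so availability is preserved.

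\emph{Necessity ($\Rightarrow$).} Assume $\Spec$ is not distributed-monotone. Then there are a partition $P$, a process $p\in S$, an outcome $o\in\Obs(H_1)$ exposed at $p$, and a $P$-constrained future $H_2$ of $H_1$ with no refinement of $o$ in $\Obs(H_2)$.
\begin{enumerate}
\item Maximal availability, applied to the history in $\mathsf{Ext}_P(H_1)$ that contains $p$'s invocation and an admissible response, forces $I$ to have an execution in which $p$ actually emits a response.
\item By demonic exposure, the adversary can nominate $o$ as the operative interpretation, as in the necessity half of Theorem~\ref{thm:calm-operational}.
\item Since $H_2\setminus H_1$ has no delivery into $S$ from $\bar S$, $p$'s local state at the point of responding is the same whether the run continues to $H_2$ or to some future where $o$ survives. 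So the response cannot depend on which of these happens.
\item The adversary then schedules the $\bar S$-side events of $H_2$ and keeps the partition, which yields $H_2^r$.
\item We have $\Obs(H_2^r)\subseteq\Obs(H_2)$ and no element there refines $o$, so correctness fails.
\end{enumerate}
Refusing to respond instead contradicts maximal availability.

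\emph{Main obstacle.} I expect the hardest step to be sufficiency. Distributed-monotonicity guards only $P$-constrained futures, but an execution eventually heals the partition and may reach futures that are not $P$-constrained. I would first try to show that any violating future can be decomposed into a chain of $P$-constrained steps under suitable partitions, each of which preserves a refinement. If that fails, I would restrict correctness to single-node non-monotonicity and resolve it by local coordination, treating that coordination as part of the implementation rather than an availability cost.
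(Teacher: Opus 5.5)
\textbf{Verdict: same approach as the paper, with a genuine gap in sufficiency that the paper shares.}

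Your two directions follow the paper's proof. Necessity is its two-execution indistinguishability argument ($\bar S$ idle versus $\bar S$ producing $H_2$), with demonic exposure doing the work that the existential availability definition cannot. Sufficiency is the causal-view protocol plus ``intra-component coordination.''

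The gap is wider than the healing problem you flag, and your step~1 is false. The protocol's correctness needs, as in the proof of Theorem~\ref{thm:calm-operational}, a refinement of the chosen $o \in \Obs(H_p)$ at \emph{every} later global history $H$. But $H \setminus H_p$ contains everything outside $p$'s causal view at the moment it answered. That can already include deliveries across $P$ made before the partition began, for instance $q \in \bar S$ receiving a message $p$ sent earlier. So $H$ need not be a $P$-constrained future of $H_p$ even while $P$ holds, and in connected runs the relevant futures can cross every partition.

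Distributed-monotonicity is silent about such futures, and same-side coordination cannot discharge them. The uncovered witnesses are exactly those whose new deliveries cross every partition with the exposing process in $S$, single-node splits included, so there is no side to coordinate within. The paper's proof makes the same assertion without argument, so you have not missed an idea from it. But neither of your fallbacks is carried out, and the second (restricting correctness) would change the theorem.

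\textbf{Closing the gap.} Your decomposition idea works for $n \ge 3$ if distributed-monotonicity is read demonically, so that any admissible outcome may count as exposed at $p$. Add the events of $H \setminus H_p$ one at a time along a linear extension of $\rightarrow$; each intermediate history is then a well-formed future of the previous one. A step adding a receive from $p_a$ to $p_b \neq p_a$ is $P$-constrained for $S = \{p_a, p_b\}$ if $p \in \{p_a, p_b\}$, and for $S = \{p\}$ otherwise; both sides are nonempty because $n \ge 3$. Every other step is $P$-constrained for every $P$. Chaining refinements by transitivity of $\Ord$ gives full monotonicity, so for $n \ge 3$ the two notions coincide under this reading. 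Theorem~\ref{thm:calm-operational} then finishes with no same-side coordination at all.

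For $n = 2$ the decomposition fails, since every delivery between the two processes crosses the only partition. A route that works for every $n \ge 2$ is to drop the gossip. If no process ever sends, every reachable history is receive-free and each process's local history is downward-closed in it. The global history is therefore a $P$-constrained future, for $S = \{p\}$, of the view from which $p$ answered. Distributed-monotonicity then refines the exposed outcome directly, and refinement coherence plus consistency with observations finish as in Theorem~\ref{thm:calm-operational}. Responses stay immediate, so maximal availability holds, with the same convention for empty $\Obs$ as the causal-view protocol.
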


\begin{proof}
  ($\Rightarrow$, not distributed-monotone implies CAP obstruction.)
  If $\Spec$ is not distributed-monotone, there exists a partition
  $P = (S, \bar{S})$, a process $p \in S$, an outcome
  $o \in \Obs(H_1)$ at $p$, and a $P$-constrained future $H_2$ with
  no refinement of $o$ in $\Obs(H_2)$.
  As in Theorem~\ref{thm:calm-operational}, exposure is demonic: if
  $o \in \Obs(H_1)$ is permitted at $p$, a CAP-available
  implementation of the full interface must be safe for exposing it.

  Construct two executions indistinguishable to $p$:
  \begin{itemize} 
    \item \emph{Execution $\alpha$}: the system is at $H_1$; no
          further activity occurs in $\bar{S}$ (the partition holds
          and $\bar{S}$ is idle).
          Process $p$'s local view is its projection of $H_1$.
    \item \emph{Execution $\beta$}: the system is at $H_1$;
          processes in $\bar{S}$ advance to produce $H_2$ (a
          $P$-constrained future---no cross-partition delivery).
          Process $p$'s local view is identical to $\alpha$ (it
          receives no messages from $\bar{S}$ in either case).
  \end{itemize}
  In both executions, $p$ sees the same local state at the decision
  point.
  If $p$ exposes $o$ (as required by availability in $\alpha$, where
  $o$ is valid), then in $\beta$ the system reaches $H_2$ where no
  outcome refines $o$---correctness is violated.
  If $p$ withholds $o$ to protect against $\beta$, it blocks
  indefinitely in $\alpha$ (the partition persists, $\bar{S}$ is
  idle, and $p$ never learns which execution it is
  in)---availability is violated.

  Hence at least one of correctness or availability must be
  sacrificed.

  ($\Leftarrow$, distributed-monotone implies no CAP obstruction.)
  If $\Spec$ is distributed-monotone, then for every partition $P$,
  every process $p \in S$, and every $P$-constrained future $H_2$ of
  $H_1$, every outcome $o \in \Obs(H_1)$ at $p$ has a refinement in
  $\Obs(H_2)$.
  We construct an implementation that is both correct and maximally
  available under any partition.
  Each process exposes outcomes using the causal-view protocol of
  Theorem~\ref{thm:calm-operational}.
  Distributed-monotonicity ensures that no $P$-constrained
  future---consisting only of activity invisible to $p$---can
  invalidate an outcome exposed at $p$, so cross-partition
  correctness holds.
  If the specification has same-side non-monotonicities (witnesses
  confined to one component), those are resolved by intra-component
  coordination, which partitions do not block.
  Because no step waits for messages across the partition, availability
  is preserved.
  Hence both correctness and availability hold regardless of
  partition state.
\end{proof}

\noindent
CAP is not tied to linearizability.
It applies to any specification that fails distributed-monotonicity:
one whose outcomes can be invalidated by a partition-constrained
future.
For distributed-monotone specifications, no partition-constrained
future can invalidate any outcome, so CAP is vacuous.

\section{The Coordination-Free Frontier}
\label{app:minimality}

For a given system (fixed $E$ and $\Obs$), different choices of
outcome order $\Ord$ yield different coordination requirements.
A natural question: given a non-monotone specification
$(E, \Obs, \Ord)$, what are the strongest guarantees achievable
without coordination?

Since enlarging $\Ord$ (adding edges) can only help monotonicity
(more outcomes qualify as refinements), the coordination-free
alternatives are the \emph{minimal monotone enlargements} of $\Ord$:
orders $\Ord' \supseteq \Ord$ that are monotone, with no proper
subset of $\Ord'$ that contains $\Ord$ and is also monotone.
These form a \emph{frontier}---a set of incomparable alternatives,
each representing a different coordination-free relaxation of the
original guarantee.

\begin{definition}[Coordination-free frontier (restated from Section~\ref{sec:frontier})]
  For a specification $(E, \Obs, \Ord)$, the \emph{coordination-free
  frontier} is the set of minimal (by edge-set inclusion) partial
  orders $\Ord' \supseteq \Ord$ such that $(E, \Obs, \Ord')$ is
  monotone.
\end{definition}

\noindent
More generally, a coordination-free relaxation may change the
observation function, the refinement order, or both---yielding a
weaker \emph{interface} rather than merely a relaxed order.
The register result below identifies causal consistency as the
strongest coordination-free interface---a minimal relaxation of
linearizability---under this broader comparison.

\noindent
Each point on the frontier is a coordination-free relaxation of the
original spec: it preserves all the original refinement relationships
and adds just enough new ones to achieve monotonicity.
The gap between $\Ord$ and a frontier element $\Ord'$ represents
transitions the client considers incompatible but that the system
cannot prevent without coordination.

We illustrate with three worked instances under their stated
assumptions, showing that well-known coordination-free consistency
models lie on the frontier of their stronger counterparts.
The maximality arguments are relative to the stated interface and
primitive refinement generators; they should be read as frontier
results for those interface choices, not as uniqueness theorems over
all possible relaxations.

\subsection{Register: Causal Consistency}

\paragraph{Specification.}
Fix a replicated register at nodes $p_1, \ldots, p_n$.
Events: write invocations $w(v)$ and read invocations $r$ at each
node (in $E_{\mathit{inv}}$); propagation messages (in
$E_{\mathit{send}}, E_{\mathit{recv}}$).
A history $H$ determines, for each read $r$, the set of writes in
$r$'s causal past: $\mathit{past}(r, H) = \{w \mid w \rightarrow^* r
\text{ in } H\}$.

Under \emph{linearizability}, outcomes are sequences of
committed operations ordered by prefix extension ($\Ord_{\mathit{lin}}$
= prefix).
Under \emph{causal consistency}, define:
\begin{multline*}
  \Obs_{\mathit{causal}}(H) = \bigl\{\, o \mid
  \text{every read } r \text{ in } o \text{ returns some }
  v \text{ with } w(v) \in \mathit{past}(r, H)\\
  \text{or } v = v_0 \text{ (initial value) and no write is in }
  \mathit{past}(r, H) \,\bigr\}.
\end{multline*}
The outcome order is prefix extension:
$o_1 \Ord_{\mathit{causal}} o_2$ iff $o_1$ is a prefix of $o_2$.

\begin{proposition}[Register interface frontier]
  \label{prop:register-frontier}
  Causal consistency is a monotone relaxation of linearizability
  for the replicated register: it preserves read/write histories while
  relaxing total-order outcomes to causal-prefix outcomes.
  Under this interface comparison, causal consistency is on the
  coordination-free frontier.
\end{proposition}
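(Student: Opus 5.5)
The proof has three parts. First, causal consistency is a genuine relaxation of linearizability on the same interface. Second, it is monotone. Third, no interface strictly between the two is monotone.

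\emph{Relaxation.} Both specifications range over the same read/write histories. Take any $o \in \Obs_{\mathit{lin}}(H)$. Each read in $o$ returns the most recent preceding write in a total order that extends happens-before, so the write it returns is in $\mathit{past}(r,H)$, or the read returns $v_0$ when no write precedes it. Hence $o$, projected to its read responses, lies in $\Obs_{\mathit{causal}}(H)$, and prefix extension on total orders maps into $\Ord_{\mathit{causal}}$. This gives $\Ord_{\mathit{lin}} \subseteq \Ord_{\mathit{causal}}$, as in Proposition~\ref{prop:register-frontier-body}.

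\emph{Monotonicity.} Fix $H_1 \hext H_2$ and $o \in \Obs_{\mathit{causal}}(H_1)$. By condition (iii) of Definition~\ref{def:future}, $E_1$ is downward-closed in $H_2$. By condition (ii), the order restricted to $E_1$ is unchanged. Together these give $\mathit{past}(r,H_2) = \mathit{past}(r,H_1)$ for every read $r \in E_1$, so every read response in $o$ remains justified at $H_2$. To extend $o$, assign each new read $r \in E_2 \setminus E_1$ any value justified by its causal past: a write in $\mathit{past}(r,H_2)$ if one exists, and $v_0$ otherwise. Append these assignments to $o$ to get $o'$. Then $o' \in \Obs_{\mathit{causal}}(H_2)$ and $o$ is a prefix of $o'$. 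By Theorem~\ref{thm:complete-calm}, the causal specification is coordination-free. This also confirms that it satisfies refinement coherence and response totality.

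\emph{Minimality (the main obstacle).} Suppose an interface $(\Obs', \Ord')$ lies between the two, is monotone, and is strictly stronger than causal consistency. Strictly stronger means that at some history $H_1$ it forbids a causally admissible outcome, or forbids a refinement edge, that causal consistency allows. The concrete form is: for some read $r$ at $H_1$ it excludes a value $v$ written by some $w(v) \in \mathit{past}(r,H_1)$, or it commits to a relative order of two concurrent writes.

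The plan is to reuse the two-node witness of Section~\ref{sec:example} in general position. I take the order or choice that $\Obs'$ commits to at $H_1$. I then build a future $H_2$ in which a process on the other side receives the concurrent write last and reads it. Every outcome at $H_2$ that agrees with the earlier commitment must then contradict that read. Since the gap between $\Obs'$ and causal consistency rules out the read value at $H_2$, $\Obs'$ is not monotone there, which is a contradiction.

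The delicate part is stating precisely what ``strictly between'' means, since the comparison is at the level of interfaces and not just orders. I would take it to mean that $\Obs'(H) \subseteq \Obs_{\mathit{causal}}(H)$ for every $H$, with proper inclusion at some $H$, and that $\Ord'$ contains $\Ord_{\mathit{lin}}$. I would then argue by cases on what the proper inclusion excludes. In the first case the excluded item is a read value drawn from the causal past, and here I would use an adversarial future that makes that value the only admissible one. In the second case it is a cross-process ordering of concurrent writes, and here I would use the register witness itself. Both cases produce a future-inconsistent outcome, which completes the frontier claim relative to the stated interface.
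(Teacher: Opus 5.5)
\paragraph{Where the proposal breaks.} Your monotonicity argument matches the paper's: conditions (ii)--(iii) freeze $\mathit{past}(r,H)$ for every read already present, and new reads get any causally justified value. The gap is in minimality. You formalize ``strictly between'' as a proper restriction $\Obs'(H) \subseteq \Obs_{\mathit{causal}}(H)$ with $\Ord' \supseteq \Ord_{\mathit{lin}}$. Under that reading the claim is false, so no case analysis can prove it.

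Two counterexamples, both keeping $\Ord' = \Ord_{\mathit{causal}}$:
\begin{itemize}
\item Let $\Obs'(H)$ keep only outcomes in which every read returns a write that is $\rightarrow$-maximal in $\mathit{past}(r,H)$, or $v_0$ if there is none. This forbids $r \mapsto 1$ after $w(1) \rightarrow w(2) \rightarrow r$.
\item Additionally break ties among concurrent maximal writes by a fixed order on write identifiers (last-writer-wins). This commits every process to the same order of concurrent writes.
\end{itemize}
Both are proper restrictions. Both are monotone by exactly your own argument, because each read's admissible value depends only on its frozen causal past. The first even contains every linearizable outcome that lies in $\Obs_{\mathit{causal}}$, since a linearization extending $\rightarrow$ never returns a causally overwritten write. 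So adding a lower bound does not rescue the claim.

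This is also where each of your two cases fails.
\begin{itemize}
\item \emph{Case one (excluded read value).} No future can force the excluded value on a read already in $H_1$, because its past cannot change. For a new read, $\Obs'$ is free to admit the value because that read's past is different. You would need the restriction to exclude a read's \emph{only} causally justified value, and a proper restriction need not do that.
\item \emph{Case two (order of concurrent writes).} The register witness refutes only a single global order that constrains reads at other processes, i.e.\ linearizability itself. It does not refute every commitment about concurrent writes. Under last-writer-wins, both $r_p$ and $r_q$ return the tie-break winner, and every outcome at $H_1$ refines at $H_2$.
\end{itemize}

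\paragraph{What the paper proves instead.} The paper proves minimality in the sense of the frontier definition: $\Obs_{\mathit{causal}}$ is held fixed and only the order shrinks. Given $\Ord' \subsetneq \Ord_{\mathit{causal}}$, it picks a missing edge $o_1 \Ord_{\mathit{causal}} o_2$ in which $o_2$ appends a read $r \mapsto v$. It builds $H_1$ realizing $o_1$ together with a write $w(v)$ delivered to a node $q$. It then builds the future $H_2$ adding $r$ at $q$ with $w(v)$ the sole write in $r$'s causal past. So $v$ is the only causally justified value, $o_1 \cdot (r \mapsto v)$ is the forced extension, and deleting that edge leaves $o_1$ without a refinement. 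This edge-forcing construction, which removes refinement edges rather than outcomes, is the idea you need. An $\Obs$-restriction argument cannot work for this $\Obs_{\mathit{causal}}$. Note also that the paper's proof establishes only this order-level minimality, despite the broader ``interface'' phrasing around it.

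\paragraph{Two smaller points.}
\begin{itemize}
\item In your relaxation step, a total order extending happens-before does not put the returned write in $\mathit{past}(r,H)$. With $w_p(v)$ and $r_q$ concurrent and no messages, $r_q \mapsto v$ is linearizable but $\mathit{past}(r_q,H) = \emptyset$. So $\Obs_{\mathit{lin}}(H) \not\subseteq \Obs_{\mathit{causal}}(H)$ in general. All that is needed is $\Ord_{\mathit{lin}} \subseteq \Ord_{\mathit{causal}}$, which is immediate since both are prefix extension.
\item Monotonicity does not ``confirm'' refinement coherence or response totality. Those conditions hold here for separate reasons: prefix extension keeps earlier read values, and every read is assigned a value.
\end{itemize}
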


\begin{proof}
  \emph{Monotonicity.}
  Let $H_1 \hext H_2$ and $o_1 \in \Obs_{\mathit{causal}}(H_1)$.
  Every read in $o_1$ returns a value written in its causal past at
  $H_1$.
  By Definition~\ref{def:future}(iii), $H_1$ is downward-closed in
  $H_2$: no new event in $H_2$ enters the causal past of any event
  in $H_1$.
  Hence $\mathit{past}(r, H_2) = \mathit{past}(r, H_1)$ for
  every read $r$ already in $H_1$, and the return value chosen in
  $o_1$ remains valid.
  New reads in $H_2$ can be appended with any causally-consistent
  return value.
  Thus $o_1$ extends to some $o_2 \in \Obs_{\mathit{causal}}(H_2)$
  with $o_1 \Ord o_2$.

  \emph{Maximality.}
  We must show that removing \emph{any} edge from
  $\Ord_{\mathit{causal}}$ breaks monotonicity---a single
  counterexample suffices for each removed edge.
  Let $\Ord'$ be any order smaller than $\Ord_{\mathit{causal}}$
  (i.e., $\Ord' \subsetneq \Ord_{\mathit{causal}}$).
  Then there exist outcomes $o_1, o_2$ with
  $o_1 \Ord_{\mathit{causal}} o_2$ but $o_1 \not\Ord' o_2$.
  Since $o_1 \Ord_{\mathit{causal}} o_2$, outcome $o_2$ extends $o_1$
  by appending at least one operation---call it $r \mapsto v$ (a read
  returning $v$).
  (Every edge in the prefix-extension order corresponds to appending
  operations; writes are always appendable, so the critical case is a
  read whose return value must be forced.)
  We construct $H_1, H_2$ witnessing that the removal of this
  refinement edge breaks monotonicity.

  \emph{Construction.}
  Let $H_1$ contain: a single write $w(v)$ at node $p$, delivered to
  node $q$ ($\mathit{recv}(w(v))$ at $q$), and all operations in
  $o_1$ completed.
  The outcome $o_1 \in \Obs_{\mathit{causal}}(H_1)$ by assumption.
  Extend to $H_2$ by adding a read invocation $r$ at $q$ with
  $\mathit{recv}(w(v)) \rightarrow \mathit{inv}(r)$ (the read is
  causally after the delivery of $w(v)$), and no other write in $r$'s
  causal past.
  Then $\mathit{past}(r, H_2) = \{w(v)\}$: the only write in $r$'s
  causal past is $w(v)$, so the unique causally-consistent return
  value is $v$.
  This construction is always possible: for any read $r \mapsto v$
  appended by the removed edge, we can arrange a history where $w(v)$
  is the sole write in $r$'s causal past by placing $r$ at a node
  that has received only $w(v)$ and no other writes.
  Hence the only outcome in $\Obs_{\mathit{causal}}(H_2)$ extending
  $o_1$ is $o_2 = o_1 \cdot (r \mapsto v)$.
  Since $o_1 \not\Ord' o_2$, the outcome $o_1$ has no
  $\Ord'$-refinement at $H_2$: $(E, \Obs, \Ord')$ is not monotone.
  Hence no order smaller than $\Ord_{\mathit{causal}}$ is monotone.
\end{proof}

\noindent
This recovers the result of Mahajan et
al.~\cite{mahajan2011consistency}: causal consistency is the strongest
coordination-free consistency model for read/write registers.
Our derivation is purely semantic (from the frontier construction)
rather than operational (from a specific protocol).

\subsection{Queue: Causal FIFO}

\paragraph{Specification.}
Fix a replicated FIFO queue at nodes $p_1, \ldots, p_n$ with
enqueue($v$) and dequeue operations.
Events: enqueue and dequeue invocations at each node (in
$E_{\mathit{inv}}$); propagation messages (in
$E_{\mathit{send}}, E_{\mathit{recv}}$).
A history $H$ determines a set of enqueued values and a causal order
among enqueue events.

Under \emph{strict FIFO}, outcomes are dequeue sequences consistent
with a single global total order on enqueues:
\begin{multline*}
  \Obs_{\mathit{FIFO}}(H) = \bigl\{\, \sigma \mid
  \sigma \text{ is a prefix of some total order}\\
  \text{on enqueued values
  consistent with } H\text{'s causal order} \,\bigr\}.
\end{multline*}
Under \emph{causal FIFO}, the total-order requirement is relaxed to
causal order only:
\begin{multline*}
  \Obs_{\mathit{causal\text{-}FIFO}}(H) = \bigl\{\, \sigma \mid
  \text{for all } a, b \text{ in } \sigma\text{: if }\\
  \mathit{enq}(a) \rightarrow^* \mathit{enq}(b) \text{ in } H
  \text{ then } a \text{ precedes } b \text{ in } \sigma
  \,\bigr\}.
\end{multline*}
Both use prefix extension as the outcome order.

\begin{proposition}[Queue frontier]
  \label{prop:queue-frontier}
  Causal FIFO is a monotone relaxation of replicated strict FIFO:
  it preserves causal ordering among enqueues while relaxing the
  global total-order requirement.
  Any strengthening that fixes an order among concurrent enqueues
  creates a non-monotonicity witness.
\end{proposition}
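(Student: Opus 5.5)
The proof splits into two parts: monotonicity of causal FIFO, and a witness showing that any strengthening fixing an order among concurrent enqueues is not monotone. Both follow the template of Proposition~\ref{prop:register-frontier}.

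\emph{Monotonicity.} Let $H_1 \hext H_2$ and $\sigma \in \Obs_{\mathit{causal\text{-}FIFO}}(H_1)$. By Definition~\ref{def:future}(ii)--(iii), the causal order among enqueues already in $H_1$ is unchanged in $H_2$. It is $\rightarrow_2$ restricted to $E_1$, and no new event becomes a predecessor of an old one. So every constraint $\mathit{enq}(a) \rightarrow^* \mathit{enq}(b)$ with $a,b \in \sigma$ that holds in $H_2$ already held in $H_1$, and $\sigma$ still satisfies it.

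New enqueues $c$ in $H_2$ cannot causally precede any enqueue in $H_1$, again by downward closure. Appending them to $\sigma$ in any linear extension of $\rightarrow_2$ restricted to the new enqueues therefore gives $\sigma' \in \Obs_{\mathit{causal\text{-}FIFO}}(H_2)$ with $\sigma$ a prefix of $\sigma'$. Taking $\sigma' = \sigma$ also works, since the definition only constrains pairs that appear in $\sigma$. Either way $\sigma \Ord \sigma'$.

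\emph{Witness against strengthening.} Suppose a strengthened specification fixes, for some pair of concurrent enqueues, a mandated relative order. Equivalently, its outcomes must agree with a single total order, as in $\Obs_{\mathit{FIFO}}$ read with a global commitment. The construction mirrors the register witness of Section~\ref{sec:example}:
\begin{enumerate}
  \item Let $p$ perform $\mathit{enq}(a)$ and $q$ perform $\mathit{enq}(b)$, with the two enqueues concurrent.
  \item In $H_1$, $b$'s enqueue is delivered to $p$, and $p$ dequeues $b$. The outcome $\langle b \rangle$ is admissible and commits $b$ before $a$.
  \item In $H_2 \sqsupseteq_h H_1$, $a$'s enqueue reaches $q$ before $q$ learns of $p$'s dequeue, and $q$ dequeues $a$. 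Under a single shared order, this forces $a$ before $b$.
\end{enumerate}
Under prefix extension, any refinement of $\langle b\rangle$ must keep $b$ first. It must also record $q$'s dequeue of $a$ as the head of the global order. No single total order does both, so $\langle b\rangle$ has no refinement at $H_2$. This is the same total-order commitment pattern as linearizability. The argument is symmetric in which order is mandated, so any such choice is broken by the mirror-image history.

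\emph{Main obstacle.} The delicate step is the modeling of dequeues. In a strict FIFO queue, a value is dequeued at most once, and an outcome is a single global sequence. The witness must show that the two processes' dequeue responses cannot be explained by any one total order extending the earlier commitment. Consistency with observations (Section~\ref{sec:operational}) pins both recorded responses into every outcome at $H_2$. I would make this explicit by encoding outcomes as sequences of dequeue responses. The contradiction then reduces to $b$ preceding $a$ and $a$ preceding $b$ in the same sequence.
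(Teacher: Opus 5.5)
Your monotonicity half is sound and is essentially the paper's argument. Appending the new enqueues at the end is, if anything, cleaner than the paper's ``at any position''.

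The gap is in the witness against strengthening. Take the encoding you commit to in your last paragraph: one global sequence of dequeue responses, each value dequeued at most once. Under it, $\langle b\rangle$ \emph{does} have a refinement at $H_2$. The two dequeues are concurrent: downward closure forbids $q$'s new dequeue from preceding $p$'s, and $q$ has not heard of $p$'s. So order the enqueues $b$ before $a$ and linearize $p$'s dequeue first. Then $p$ removes $b$, after which $a$ is legitimately the head when $q$ dequeues. The sequence $\langle b,a\rangle$ extends $\langle b\rangle$ and agrees with both recorded responses, so the contradiction ``$b$ before $a$ and $a$ before $b$ in the same sequence'' never arises. Your step ``$q$'s dequeue of $a$ must be the head of the global order'' silently imports register semantics. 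Reads are non-destructive, so in the register witness both reads must return \emph{the} latest write and therefore contradict each other. A dequeue consumes the head, so two dequeues returning different values are jointly explained by one order.

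Your history does become a witness if you change the model explicitly. Let outcomes be per-replica dequeue sequences, where every replica consumes the whole stream, as in total-order versus causal broadcast. This is the reading behind the paper's closing remark that concurrent enqueues ``may be dequeued in different orders at different replicas''. Strict FIFO then requires each replica's sequence to be a prefix of one global order, and $p$'s $\langle b\rangle$ and $q$'s $\langle a\rangle$ cannot both be prefixes. If you keep at-most-once consumption, you need a duplicate-dequeue history instead, which is an exclusive-consumption conflict rather than an ordering one.

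Also, mandating the order of one concurrent pair is not equivalent to agreeing with some single total order. For a fixed tie-break rule, the natural witness has $p$ dequeue $a$ in an $H_1$ that does not yet contain $b$'s enqueue. $H_2$ then adds a concurrent $\mathit{enq}(b)$ that the rule ranks first. That history is harmless for strict FIFO, which may simply order $a$ first.

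Finally, the paper reads ``strengthening'' differently. Following the frontier definition, it keeps $\Obs$ fixed and takes any $\Ord'$ strictly contained in prefix extension. For a removed edge $\sigma_1 \Ord \sigma_1\cdot b$, it builds $H_1$ (with $a,b$ concurrent and $a$ dequeued) and $H_2$ (adding a later dequeue of $b$), in which $\sigma_1\cdot b$ is the only available extension. Your $\Obs$-level reading is a different claim, so even a repaired witness would not cover the order-level statement the paper proves.
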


\begin{proof}
  \emph{Monotonicity.}
  Let $H_1 \hext H_2$ and
  $\sigma \in \Obs_{\mathit{causal\text{-}FIFO}}(H_1)$.
  The sequence $\sigma$ respects causal order among enqueues at $H_1$.
  By Definition~\ref{def:future}(iii), extending to $H_2$ adds only
  causally-later events: any new enqueue $c$ in $H_2 \setminus H_1$
  is causally after (or concurrent with) all enqueues in $H_1$.
  In particular, $c$ is not causally \emph{before} any element already
  in $\sigma$.
  Hence $\sigma$ remains a valid causal-FIFO prefix at $H_2$, and can
  be extended by appending $c$ at any position after elements that
  causally precede it.
  Thus $\sigma$ extends to some
  $\sigma' \in \Obs_{\mathit{causal\text{-}FIFO}}(H_2)$.

  \emph{Maximality.}
  Let $\Ord'$ be any order smaller than $\Ord_{\mathit{causal\text{-}FIFO}}$
  (i.e., $\Ord' \subsetneq \Ord_{\mathit{causal\text{-}FIFO}}$).
  Then there exist sequences $\sigma_1, \sigma_2$ with
  $\sigma_1 \Ord_{\mathit{causal\text{-}FIFO}} \sigma_2$ (i.e., $\sigma_1$
  is a prefix of $\sigma_2$) but
  $\sigma_1 \not\Ord' \sigma_2$.
  Since $\sigma_2$ extends $\sigma_1$ by appending some value $b$,
  we have $\sigma_1 \not\Ord' \sigma_1 \cdot b$: the order $\Ord'$
  does not consider $\sigma_1 \cdot b$ a valid refinement of
  $\sigma_1$.

  We construct a witness.
  Since every edge in prefix extension corresponds to appending one
  element, the removed edge appends some value $b$ to $\sigma_1$.
  Let $a$ be the last element of $\sigma_1$ and let $a, b$ be
  enqueued concurrently (neither causally precedes the other).
  This is always achievable: enqueue $a$ and $b$ at different nodes
  without causal connection.
  Construct $H_1$ containing both enqueues with $a$ dequeued:
  $\sigma_1 \in \Obs_{\mathit{causal\text{-}FIFO}}(H_1)$.
  Extend to $H_2$ by adding a dequeue of $b$ (causally after the
  dequeue of $a$):
  $\sigma_1 \cdot b \in \Obs_{\mathit{causal\text{-}FIFO}}(H_2)$.
  Since $a$ and $b$ are concurrent, causal FIFO admits both
  $\langle \ldots, a, b \rangle$ and $\langle \ldots, b, a \rangle$
  as valid dequeue orders.
  But at $H_2$, the only extension of $\sigma_1$ is
  $\sigma_1 \cdot b$ (the dequeue event for $b$ is the only new
  completed dequeue in $H_2$).
  Since $\sigma_1 \not\Ord' \sigma_1 \cdot b$, the outcome
  $\sigma_1$ has no $\Ord'$-refinement at $H_2$:
  $(E, \Obs, \Ord')$ is not monotone.
  Thus any strengthening that attempts to preserve a fixed order among
  concurrent enqueues creates a non-monotonicity witness.
\end{proof}

\subsection{Search Structures: Forward-Reachability}

\paragraph{Specification.}
Fix a search structure (tree, skip list, hash directory) supporting
insert($k$) and lookup($k$) operations over a key space $K$.
Events: insert and lookup invocations (in $E_{\mathit{inv}}$);
structural modifications are internal events (in $E_{\mathit{int}}$).
A history $H$ determines a \emph{structure graph} $G(H) = (V, E)$:
nodes are positions in the structure, edges are navigable links.
Each position $p \in V$ holds a set of keys $\mathit{keys}(p, H)$.

Under \emph{exact-location} semantics:
$\Obs_{\mathit{exact}}(H) = \{(k, p) \mid k \in \mathit{keys}(p, H)\}$
---a lookup for $k$ must arrive at the position currently holding $k$.
Under \emph{forward-reachability} semantics:
\[
  \Obs_{\mathit{fwd}}(H) = \bigl\{\, (k, p) \mid
  \exists \text{ path } p \to p_1 \to \cdots \to p_m \text{ in }
  G(H) \text{ with } k \in \mathit{keys}(p_m, H) \,\bigr\}.
\]
A lookup arriving at $p$ is valid if $k$ is reachable from $p$ in the
structure graph.
The outcome order is set inclusion on valid lookup pairs.

\begin{proposition}[Search frontier]
  \label{prop:search-frontier}
  If structural modifications in the search structure satisfy the
  \emph{link invariant}---whenever a key $k$ moves from position $p$
  to position $p'$, a directed edge from $p$ toward $p'$ exists in
  $G(H)$ after the modification---then forward-reachability is a
  monotone relaxation of exact-location lookup.
  Any semantics that invalidates a lookup solely because the key has
  moved along a maintained forwarding path is non-monotone.
\end{proposition}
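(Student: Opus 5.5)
The plan has two parts, mirroring the register and queue frontier results: first monotonicity of $(E, \Obs_{\mathit{fwd}}, \subseteq)$ under the link invariant, then a non-monotonicity witness for exact-location semantics.

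\emph{Monotonicity.} Fix $H_1 \hext H_2$ and a valid lookup pair $(k,p) \in \Obs_{\mathit{fwd}}(H_1)$. It suffices to show $(k,p) \in \Obs_{\mathit{fwd}}(H_2)$, because then $o' = \Obs_{\mathit{fwd}}(H_2) \supseteq o$ is a refinement under set inclusion. By definition there is a path $p \to p_1 \to \cdots \to p_m$ in $G(H_1)$ with $k \in \mathit{keys}(p_m,H_1)$. I will argue by induction on the number of structural modifications in $H_2 \setminus H_1$, taken in any linearization consistent with $\rightarrow_2$, that the following invariant is preserved: there is a path from $p$ to some position currently holding $k$.

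\emph{Inductive step.} For a single modification there are two cases. If it moves $k$ from $p_m$ to $p'$, the link invariant supplies an edge from $p_m$ toward $p'$, so the old path extended by that edge witnesses reachability. If it only adds a key or an edge, then paths and key sets only grow. To make this step go through I will need two auxiliary facts, which I would state as assumptions implicit in the link invariant. First, edges on a forwarding path are never deleted unless a replacement forwarding path exists; this is the B-link discipline~\cite{lehman1981efficient}. Second, inserts never remove keys. With these, the invariant is preserved, so the pair is still valid at $H_2$. The causal-prefix property (Definition~\ref{def:future}(iii)) is needed only to ensure that no modification is inserted into the past of $H_1$, so $G(H_1)$ is a genuine intermediate state.

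\emph{Non-monotonicity of exact location.} This follows from a two-event witness. Let $H_1$ have $k \in \mathit{keys}(p,H_1)$, so $(k,p) \in \Obs_{\mathit{exact}}(H_1)$. Extend to $H_2$ by a concurrent split that moves $k$ to $p'$ and installs the edge $p \to p'$. Then $k \notin \mathit{keys}(p,H_2)$, so no superset of $\{(k,p)\}$ lies in $\Obs_{\mathit{exact}}(H_2)$. By Complete CALM (Theorem~\ref{thm:complete-calm}), exact-location semantics therefore requires coordination. More generally, any semantics that drops $(k,p)$ when $k$ moves along a maintained forwarding path fails on this same witness.

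\emph{Main obstacle.} I expect the hardest step to be the inductive step. The stated link invariant only speaks about key moves, so the argument hinges on formalizing that edge deletions (for example, merges or garbage collection) cannot break forwarding paths. I would add this explicitly as part of the invariant, namely that reachability of moved keys is preserved, rather than try to derive it.
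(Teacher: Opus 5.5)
Your proposal is correct. The monotonicity half uses the paper's argument: extend the witnessing path $p \to \cdots \to p_m$ by the edge the link invariant supplies. You carry it out more carefully. The paper treats a single move and tacitly assumes the old path survives into $G(H_2)$. You instead induct over the modifications in $H_2 \setminus H_1$ along a linear extension of $\rightarrow_2$. You also state explicitly the B-link proviso that forwarding edges are removed only when a forwarding path replaces them. The stated invariant does not give you that proviso, but the paper's proof needs it just as much, so surfacing it is an improvement rather than a gap.

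The second half takes a different route. You give a direct two-history witness: $k$ at $p$, then a split that moves $k$ to $p'$ and installs $p \to p'$. This shows that exact-location, or any semantics that drops $(k,p)$ after a move along a maintained link, has an outcome with no $\subseteq$-refinement. The paper instead frames this half as \emph{maximality}. It fixes $\Obs_{\mathit{fwd}}$ and argues, via a forcing sequence ($H_1$ without the lookup, $H_2$ adding it, $H_3$ moving $k$), that any order strictly contained in set inclusion breaks monotonicity. The exact-location failure appears there only in passing, at $H_3$.

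Your witness is the cleaner proof of the second sentence as literally stated, and it feeds Complete CALM directly. The paper's route targets an additional frontier claim (Definition~\ref{def:frontier}): that set inclusion cannot be tightened while keeping $\Obs_{\mathit{fwd}}$ monotone. Your proposal does not establish that. If you read the proposition as placing forward-reachability on the coordination-free frontier, you would still need an order-minimality argument of that kind.
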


\begin{proof}
  \emph{Monotonicity.}
  Let $H_1 \hext H_2$ and $(k, p) \in \Obs_{\mathit{fwd}}(H_1)$:
  $k$ is reachable from $p$ in $G(H_1)$.
  A structural modification in $H_2$ may move $k$ from its current
  position $p_m$ to a new position $p'$.
  By the link invariant, an edge $p_m \to p'$ exists in $G(H_2)$.
  Hence the path $p \to \cdots \to p_m \to p'$ witnesses
  $(k, p) \in \Obs_{\mathit{fwd}}(H_2)$.
  The outcome $(k, p)$ persists: forward-reachability is monotone.

  \emph{Maximality.}
  Let $\Ord'$ be any order smaller than set inclusion on
  $\Obs_{\mathit{fwd}}$ (i.e., $\Ord' \subsetneq \subseteq$).
  Then there exists a pair $(k, p)$ with
  $(k, p) \in \Obs_{\mathit{fwd}}(H)$ for some $H$ but
  $(k, p) \notin \Ord'$-refinements of some outcome containing it.
  Concretely: there exist outcomes $o_1 \subseteq o_2$ with
  $(k, p) \in o_2 \setminus o_1$, $o_1 \subseteq o_2$ under set
  inclusion, but $o_1 \not\Ord' o_2$.

  \emph{Construction.}
  Every edge in set inclusion corresponds to adding one pair $(k, p)$
  to the outcome.
  For any such pair, we can construct a forcing history:
  let $H_1$ be a history where $k$ is at position $p_m$ reachable
  from $p$ via a path $p \to \cdots \to p_m$, and the outcome
  $o_1 \in \Obs_{\mathit{fwd}}(H_1)$ does not yet include $(k, p)$
  (no lookup for $k$ starting at $p$ has completed).
  Extend to $H_2$ by adding a lookup for $k$ starting at $p$: since
  $k$ is reachable from $p$, the pair $(k, p)$ enters
  $\Obs_{\mathit{fwd}}(H_2)$, giving $o_2 = o_1 \cup \{(k, p)\}$.
  Now extend further to $H_3$ by a structural modification that moves
  $k$ from $p_m$ to a new position $p'$ (with the link invariant
  providing edge $p_m \to p'$).
  Under forward-reachability, $(k, p)$ remains valid at $H_3$ (the
  path extends: $p \to \cdots \to p_m \to p'$).
  Under exact-location, $(k, p)$ would be invalid at $H_3$ ($k$ is no
  longer at $p_m$, and $p \neq p'$).
  Since $\Ord'$ excludes the refinement $o_1 \Ord' o_2$, the outcome
  $o_1$ at $H_1$ has no $\Ord'$-refinement at $H_2$---monotonicity
  fails.
  Thus any semantics that invalidates a lookup solely because the key
  has moved along a maintained forwarding path is non-monotone.
\end{proof}

\noindent
The link invariant is the precise structural condition that makes
forward-reachability achievable.
The Lehman--Yao B-link tree~\cite{lehman1981efficient} satisfies it:
splits install right-pointers before moving keys.
Skip lists, concurrent hash tables with chaining, and other structures
that maintain forwarding links under restructuring likewise satisfy it.
The frontier characterizes \emph{what} coordination-free lookup
requires (forward-reachability under the link invariant); the
implementation determines \emph{how} the invariant is maintained.

\noindent
The three frontier elements share a common structure: each permits
observations that are correct given local information but may differ
from what a globally informed observer would report.
For registers, a read returns a value from its causal past---possibly
stale, but never contradicted by later delivery.
For queues, concurrent enqueues may be dequeued in different orders at
different replicas---each order is locally valid, and no future event
reveals a contradiction.
For search structures, a lookup may arrive at a position the key has
left---but a forwarding path leads to the current location.
In each case, the frontier is the weakest order under which such
imprecision cannot grow into inconsistency.

\section{Universal Construction}
\label{app:universal}

This appendix shows that for any specification, an ordering authority
(membership establishment plus consistent ordered delivery) yields a
monotone residual specification.
The ordering authority is itself ongoing coordination, but it is
generic and specification-independent; downstream consumption of the
ordered stream is coordination-free.

\begin{remark}[Interface residual vs.\ restriction variant]
  The universal construction produces a \emph{new} output interface
  (ordered log prefixes under prefix extension), not a restriction of
  the original specification's outcome domain.
  This is an \emph{interface residual}---distinct from the
  ``properly coordinated variant'' of
  Definition~\ref{def:coord-variant}, which restricts outcomes within
  the same domain.
  Both achieve monotonicity; they differ in whether the client-facing
  interface changes.
\end{remark}

\subsection{Universal Residualization via an Ordering Authority}
\label{sec:universal-construction}

The separation theorem exhibits specifications where coordination
resolves non-monotonicity and produces monotone output.
We now show this is always possible: for \emph{any} specification,
establishing a membership authority and an ordering service yields a
monotone residual specification.
The ordering service is itself an ongoing coordination mechanism
(e.g., total-order broadcast, a consensus-based sequencer); the
contribution is that it isolates \emph{all} coordination into a single
reusable layer, after which downstream consumption is
coordination-free.

The key insight is that membership knowledge converts universal
quantification into existential quantification.
A non-monotone operation like ``all participants have contributed''
requires knowing who the participants \emph{are} before it can be
evaluated.
Once membership is established, the check becomes monotone: the set of
received contributions only grows, and once it equals the known
membership, the condition fires.
This is analogous to the role of order in the Immerman--Vardi
theorem~\cite{immerman1986relational}: on ordered structures,
fixpoint logic captures PTIME; here, a known membership order
converts an inherently non-monotone universal check into a monotone
fixed-point computation.

\begin{theorem}[Universal sufficiency of ordering authority, restated]
  For any specification $\Spec = (E, \Obs, \Ord)$, there exists a
  coordination mechanism $I_{\mathit{ord}}$ consisting of
  (a)~a membership authority (making the participant set common
  knowledge) and (b)~an ordering service (delivering a consistent
  total extension of the causal order to all replicas) such that the
  resolved specification $\Spec|_{I_{\mathit{ord}}}$ is monotone.
\end{theorem}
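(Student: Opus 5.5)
The plan is to construct the resolved specification $\Spec|_{I_{\mathit{ord}}}$ explicitly as an interface residual, in the sense of the preceding remark, and then verify monotonicity directly from Definition~\ref{def:monotone}. The mechanism $I_{\mathit{ord}}$ fixes a membership set $M$ once, by a bootstrap round whose output is common knowledge. It then provides a sequencer that assigns each interface event a position in a single global log $L$, a total order extending $\rightarrow$ restricted to $E_{\mathit{iface}}$ (batches may be modeled as single log entries). The residual history space consists of histories $H$ augmented with the delivered log prefix $L_H$. I will take the outcomes to be pairs $(L_H, o)$ where $o \in \Obs(\mathit{lin}(L_H))$ and $\mathit{lin}(L_H)$ is the history obtained by totally ordering the events of $L_H$. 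The order is prefix extension on the log component, and $o$ must be obtained by deterministic evaluation along the log: fix a canonical choice function $\mathit{ch}$ that picks an outcome at each log position, consistent with the outcome chosen at the previous position whenever such a refinement exists.

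The key steps, in order, are as follows.

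\begin{enumerate}
\item Show that histories of $I_{\mathit{ord}}$ have the property that, whenever $H_1 \hext H_2$, the delivered log $L_{H_1}$ is a prefix of $L_{H_2}$. This follows from the agreement and irrevocability of the ordering service: a quorum decision, once reached, is never retracted. The membership step is what makes ``prefix'' well defined across replicas.
\item Observe that every replica evaluates the same deterministic function of the same prefix. Hence the exposed outcome is a function $f(L_H)$, and the residual $\Obs$ is a singleton (or empty) at each history.
\item Prove monotonicity. Given $H_1 \hext H_2$, the log prefix grows from $L_1$ to $L_2$, and I need $f(L_1) \Ord_{\mathit{res}} f(L_2)$. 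Under the residual order (prefix on logs, with outcomes carried along), this is immediate once $f$ is defined by accumulation along the log.
\end{enumerate}

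The main obstacle is that $\Obs$ itself may be non-monotone even along a single total order. For an arbitrary $\Spec$ there may be no refinement of $\mathit{ch}(L_1)$ at $\mathit{lin}(L_2)$, so a naive ``evaluate $\Obs$ on the prefix'' residual is not monotone in the original $\Ord$. I will handle this by making the residual's outcome the log prefix itself, together with the evaluated value, ordered by log-prefix extension rather than by $\Ord$. Monotonicity then holds by construction, since log prefixes only grow. Correctness relative to $\Spec$ is checked pointwise: each exposed value lies in $\Obs$ of the corresponding linearized prefix. This is exactly why the remark calls it a new interface rather than a variant in the sense of Definition~\ref{def:coord-variant}.

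It remains to argue that clients reading the residual need no further coordination. Every consumer sees a prefix of the same log, so by the operational theorem (Theorem~\ref{thm:calm-operational}) applied to the residual, which is monotone, downstream consumption is coordination-free. All non-monotonicity is confined to the ordering layer and, within it, to the one-time membership establishment.
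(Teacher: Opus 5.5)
Your route is the paper's: a new residual interface whose outcomes are indexed by prefixes of the agreed log and ordered by prefix extension rather than by $\Ord$, so that monotonicity comes from the log only growing. The paper likewise gives up fidelity to the original observation interface. However, step~3 is not ``immediate'' as you have set it up. You require the residual outcome at $H$ to be $(L_H, o)$ with $o \in \Obs(\mathit{lin}(L_H))$, so $\Obs_{\mathit{res}}(H)$ is empty whenever the original specification admits nothing at the sequenced prefix, which Definition~\ref{def:spec} permits. Your ``main obstacle'' paragraph covers the case where $\mathit{ch}(L_1)$ has no $\Ord$-refinement at $\mathit{lin}(L_2)$. It does not cover the case $\Obs(\mathit{lin}(L_1)) \neq \emptyset = \Obs(\mathit{lin}(L_2))$. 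There $(L_1, \mathit{ch}(L_1))$ has no refinement of any kind at $H_2$, and Definition~\ref{def:monotone} fails whatever residual order you pick.

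This case really arises. Take $\Spec_I$ from Section~\ref{sec:i-confluence} with a non-$I$-confluent invariant, and let the log sequence a transaction reaching $D_1$, then one reaching $D_2$. The converged state of the longer prefix is $\mathit{merge}(D_1, D_2)$, which violates $I$, so $\Obs_I$ is empty there. A pure ordering service cannot forbid the second invocation, because invocations are environment-controlled.

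The fix is to drop pointwise membership in $\Obs$. Take the residual outcome to be the log prefix together with its deterministic evaluation, and allow that evaluation to be $\emptyset$ or an ``unrealizable'' marker, for instance $(L_H, \Obs(\mathit{lin}(L_H)))$. The residual is then a singleton at every history of $I_{\mathit{ord}}$, and prefix extension gives monotonicity outright. This is what the paper's ``prefix-indexed evaluation results'' amount to. Pointwise membership in $\Obs$ is exactly what cannot be guaranteed in general, and the theorem does not need it. Weakening to monotonicity over admitted histories (Definition~\ref{def:coord-variant}) would not rescue your version either: it changes the statement, and it would declare unrealizable histories that the ordering service does in fact produce.
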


\begin{proof}
  \emph{Construction.}
  The mechanism has two components.
  First, a membership authority publishes the set of participants
  $\mathsf{All}$ (a one-time coordination act).
  Second, an ordering service places all input events into a total
  order consistent with causality and delivers this order identically
  to every replica.
  The ordering service operates \emph{online}: events are ordered and
  delivered incrementally as they arrive, not batched.
  Any protocol that produces a consistent total extension of the causal
  partial order suffices (e.g., total-order broadcast, replicated state
  machines, or any consensus-based sequencer).
  The ordering service is itself ongoing coordination---it requires
  consensus per batch or entry---but it is a generic, reusable layer
  independent of the specification.

  Different runs may produce different total orders---the choice among
  valid extensions is nondeterministic---but within a run, all replicas
  observe the same sequence.
  This is the standard semantics of replicated state machines:
  consistency is across replicas (within a run), not across runs.
  The specification's admissible set captures exactly this
  nondeterminism---multiple outcomes are valid, and the ordering
  mechanism selects one per run.

  Each replica evaluates the specification deterministically over the
  growing prefix of ordered events.
  Define the resolved specification's outcomes as these prefix-indexed
  evaluation results, ordered by prefix extension:
  $o_1 \Ord o_2$ iff $o_1$ corresponds to a prefix of the sequence
  that $o_2$ corresponds to.
  Note that this defines a \emph{new} residual output interface
  (ordered-prefix results), not the original specification's outcome
  domain.
  The construction does not preserve arbitrary original observation
  interfaces; it constructs a monotone interface exposing the
  deterministic evaluation of each log prefix.

  \emph{Monotonicity.}
  The ordered sequence only grows (events are appended, never
  reordered or retracted).
  Hence the evaluation result at any prefix refines monotonically:
  for $H_1 \hext H_2$, the prefix at $H_2$ extends the prefix at
  $H_1$, so $o_1 \Ord o_2$.
  The resolved specification is monotone.

  \emph{Separation of concerns.}
  The ordering service is the coordination mechanism; it runs
  continuously and requires distributed agreement.
  But downstream of the ordering service, each replica simply
  accumulates an ever-growing prefix of the agreed sequence.
  No replica needs to suppress any admissible future---it simply
  processes events as they arrive in the agreed order.
  By Complete CALM, this consumption phase is coordination-free.
  The construction thus isolates all coordination into the ordering
  layer; no specification-specific coordination is needed beyond it.
\end{proof}

\noindent
The construction does not claim that the ordering service is
coordination-free---it is not.
Rather, it shows that coordination can always be \emph{factored}:
a generic ordering authority (independent of the specification)
plus coordination-free local evaluation (dependent on the
specification but requiring no further distributed agreement).
Once participants are known and inputs are ordered, all remaining
computation is monotone local accumulation.

\paragraph{Comparison with Ameloot et al.}
Ameloot et al.~\cite{ameloot2013relational} prove a related result in
the transducer model: if nodes have access to the system relations
$\mathsf{All}$ (the set of all node identifiers) and $\mathsf{Id}$
(the local identifier)---the \emph{non-oblivious} setting---then
monotone queries can be evaluated coordination-free.
Our construction improves on this in two ways.

First, it requires no syntactic recognition of special relations.
Ameloot's proof depends on the program having access to $\mathsf{All}$
and $\mathsf{Id}$ and on the analysis recognizing these as system
relations with specific semantics.
Our construction is purely semantic: it operates on the specification
$\Spec$ without inspecting any program text.
Membership is a generic coordination mechanism, not a model-specific
assumption about available relations.

Second, it accommodates membership change at the mechanism level.
Ameloot's non-obliviousness assumes a fixed, globally known set of
nodes---the $\mathsf{All}$ relation is static.
If membership changes (nodes join or depart), the preconditions of the
theorem fail and must be re-established.
Our construction has the same formal requirement (a fixed event universe
$E$), but the mechanism that establishes membership can be extended
incrementally: the current authority uses its decision-making power to
admit new members or appoint successors.
This appointment is itself a monotone extension of the authority
chain---each membership-change event extends the history without
retracting prior membership decisions.
The chain of authorization is constructed incrementally from an original
seed authority (typically defined offline in a configuration file at
system setup).
Downstream coordination-freedom is a property of the \emph{output
semantics} (monotone accumulation over known membership), not of any
static membership assumption.

\paragraph{Architectural consequence.}
Theorem~\ref{thm:universal} formally justifies a common systems
pattern: establish membership once (via a configuration service,
bootstrap protocol, or seed authority), then consume the ordered
prefix coordination-free downstream.
Log-based architectures (Kafka~\cite{kreps2011kafka}, event sourcing,
change data capture) are one realization: the log's sequencer is the
ordering authority, and downstream consumers accumulate log entries
monotonically relative to the ordered-prefix interface.
But the theorem is more general---any architecture that establishes
membership and then accumulates inputs monotonically achieves
coordination-freedom downstream of the ordering authority, regardless
of the specification's non-monotone structure.

\begin{remark}[Distributed vs.\ local coordination]
  \label{rem:local-depth}
  The universal construction isolates all \emph{distributed}
  coordination into two components: a one-time membership
  establishment and an ongoing ordering service.
  Once membership is known and inputs are consistently ordered,
  downstream evaluation is coordination-free.
  However, local evaluation may still require sequential commitment
  that does not involve distributed coordination.

  Consider a multi-stratum Datalog program evaluated across $n$ nodes.
  Each stratum must be \emph{sealed} before the next can begin: a node
  must know that all derivations in stratum $k$ are complete before
  evaluating negation in stratum $k{+}1$.
  This sealing requires waiting for end-of-data signals from all
  participants---but that waiting is \emph{monotone}: the set of
  received signals only grows, and once all have arrived, the seal
  commits locally.
  No distributed coordination is needed beyond knowing who the
  participants are (membership).

  The distributed coordination depth for stratified evaluation is
  therefore one act: establishing membership.
  After that, each stratum seal is a local commitment triggered by
  monotone accumulation of signals from known participants.
  The $k$ strata require $k$ local sequential commitments, but zero
  additional rounds of distributed coordination in the sense of
  resolving incompatible futures; the barriers are monotone waits
  over known participants.
  (The ordering service itself requires ongoing consensus, but that is
  a generic reusable layer independent of the stratified program.)
\end{remark}

\section{Subsumption of the CALM Hierarchy}
\label{app:hierarchy}

The subsumption extends beyond the base CALM theorem to the full
hierarchy of Ameloot, Ketsman, Neven, and Zinn~\cite{ameloot2015weaker}
and to Baccaert and Ketsman's generalized
CALM~\cite{baccaert2026spectrum}.

\begin{proposition}[Subsumption of $N_3$-coordination-freedom]
  \label{prop:n3}
  Under the transducer instantiation (Definition~\ref{def:transducer-inst})
  with the admitted history space restricted to histories in which every
  node's system relation contains the full set of node identifiers from
  the start, a \emph{domain-independent} query $Q$ is
  $N_3$-coordination-free in the sense of
  Ameloot et al.~\cite{ameloot2015weaker} iff $\Spec_Q$ is monotone.
\end{proposition}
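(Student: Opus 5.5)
The plan is to factor the equivalence through query monotonicity, following the proof of Theorem~\ref{thm:subsumption}. I will show (a) that $\Spec_Q$, restricted to the admitted histories in which every node's system relation holds the full set of node identifiers from the start, is monotone iff $Q$ is a monotone query. I will then show (b) that, for domain-independent $Q$, $N_3$-coordination-freedom in the sense of Ameloot et al.~\cite{ameloot2015weaker} coincides with monotonicity of $Q$. Chaining (a) and (b) gives the proposition as stated.

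For (a), the restriction constrains only the system relation, which is fixed from the start and never grows. It places no constraint on which input facts arrive, or when, so the restricted history space still realizes every finite input instance as some $I_H$. In the forward direction, suppose $Q$ is monotone and $H_1 \hext H_2$ within the restricted space. Then $I_{H_1} \subseteq I_{H_2}$, so the unique outcome $Q(I_{H_1})$ is refined by $Q(I_{H_2})$ under $\subseteq$. In the converse direction, given instances $I \subseteq J$, I construct $H_1$ by initializing the full system relation and then delivering exactly the facts of $I$. I then extend $H_1$ to $H_2$ by delivering the facts of $J \setminus I$ causally after $H_1$'s events, which satisfies condition~(iii) of Definition~\ref{def:future}. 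Monotonicity of $\Spec_Q$ then forces $Q(I) \subseteq Q(J)$. The one point to verify is that $H_2$ stays inside the restricted space: its system relation is the same full set, and no later event is inserted before the initialization.

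For (b), I would invoke the characterization in~\cite{ameloot2015weaker} of the $N_3$ class. There, for domain-independent queries, knowledge of the full node set with no distribution-policy information does not enlarge the coordination-free class beyond the monotone queries. This is the analogue of Ameloot et al.'s Corollary~13 used in Theorem~\ref{thm:subsumption}. Domain independence is exactly the hypothesis under which their characterization is stated, so it enters here and nowhere else.

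The main obstacle is (b): confirming that the cited $N_3$ result applies to precisely this model and does not quietly need policy awareness. I would check that their $N_3$ model fixes $\mathsf{All}$ at initialization but gives nodes no access to the distribution policy. That is exactly what our history restriction encodes, so no domain-distinct or domain-disjoint relaxation of monotonicity arises. If their statement is phrased with a different but equivalent model variant, I would prove the converse direction of (b) directly. Given a non-monotone $Q$, take $I \subseteq J$ with $Q(I) \not\subseteq Q(J)$. Place the facts of $J \setminus I$ on one node, and argue that the other nodes cannot tell $I$ from $J$ without a message from that node. This is the standard heartbeat-indistinguishability argument, and it is the step where domain independence is needed to rule out inferring missing facts from the active domain.
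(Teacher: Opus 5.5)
Your proposal is correct and follows the paper's proof. Both arguments factor the equivalence through monotonicity of $Q$: they use the argument of Theorem~\ref{thm:subsumption} to identify monotonicity of $\Spec_Q$ with monotonicity of $Q$, and they cite Ameloot et al.'s $N_3$ characterization (their Theorem~20, per the paper) for domain-independent queries. You also check explicitly that the restricted history space still realizes every pair $I \subseteq J$ as $H_1 \hext H_2$, a detail the paper leaves implicit under ``the same argument.''
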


\begin{proof}[Sketch]
  The restriction ensures that every admissible history
  begins with each node knowing all identifiers---the defining property
  of model $N_3$.
  Ameloot et al.~\cite{ameloot2015weaker} (Theorem~20) characterize
  $N_3$-coordination-free queries as exactly the domain-independent
  monotone queries.
  Under our instantiation, domain-independence is inherited from the
  query (it is a property of $Q$, not of the history space), and
  monotonicity of $Q$ coincides with monotonicity of $\Spec_Q$ by the
  same argument as Theorem~\ref{thm:subsumption}.
  The domain-independence requirement is necessary: queries that use
  the $\mathsf{All}$ relation (e.g., ``count the number of nodes'')
  are $N_3$-coordination-free but not monotone in the input---they
  exploit the known-membership assumption rather than input growth.
  Such queries are coordination-free \emph{given} the membership
  knowledge, which our framework captures by restricting the admitted
  history space rather than by relaxing monotonicity.
\end{proof}

\noindent
The $N_1$ and $N_2$ instantiations are analogous, restricting
the admitted history space to histories consistent with the
corresponding information model (no system relations, or
own-identifier only).
Baccaert and Ketsman's $\mathbf{C}$-monotonicity is subsumed
identically: restrict the admitted histories to those consistent with
configuration constraint $\mathbf{C}$.

\section{Explaining Classical Coordination Requirements}
\label{app:classical}

We prove three structural lemmas that classify specifications by their
outcome order, then apply them to classical distributed computing
problems in Table~\ref{tab:classical}.

\begin{lemma}[Accumulating outcomes are monotone]
  \label{lem:set-monotone}
  If outcomes are sets of facts, $\Ord$ is set inclusion, and
  $\Obs(H) \subseteq \Obs(H')$ whenever $H \hext H'$ (admitted
  outcomes only accumulate), then the specification is monotone.
\end{lemma}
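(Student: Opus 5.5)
The plan is to verify Definition~\ref{def:monotone} directly, choosing the refinement witness to be the outcome itself. Fix $H_1 \hext H_2$ and $o \in \Obs(H_1)$; we must exhibit some $o' \in \Obs(H_2)$ with $o \Ord o'$.

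The hypothesis $\Obs(H_1) \subseteq \Obs(H_2)$ gives $o \in \Obs(H_2)$, so we take $o' = o$. Since $\Ord$ is set inclusion, which is reflexive, we have $o \subseteq o$, that is, $o \Ord o'$. This holds for every pair $H_1 \hext H_2$ and every $o \in \Obs(H_1)$, so $\Spec$ is monotone. Complete CALM (Theorem~\ref{thm:complete-calm}) then yields coordination-freedom as a corollary.

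There is no genuine obstacle; the one point to check is interpretive. One might read ``accumulate'' as asking for a strictly larger refinement, for example $o \cup \{\text{new facts}\}$, as in Proposition~\ref{prop:hat-monotone}. Definition~\ref{def:monotone} only requires \emph{some} $\Ord$-refinement, however, and reflexivity of the partial order makes $o$ itself admissible. The argument therefore does not use the specific structure of set inclusion beyond reflexivity, and the lemma holds for any partial order $\Ord$ under the accumulation hypothesis.

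If the intended hypothesis were instead the weaker ``every $o \in \Obs(H)$ is contained in some $o'' \in \Obs(H')$'', the proof would go through identically, taking $o' = o''$. That variant is exactly Definition~\ref{def:monotone} instantiated at $\subseteq$.
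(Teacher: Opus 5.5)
Your proof is correct and is the same as the paper's: the accumulation hypothesis puts $o$ in $\Obs(H_2)$, and reflexivity of $\subseteq$ makes $o' = o$ a valid refinement. Your side remark is also right: only reflexivity of $\Ord$ is used, so the lemma holds for any partial order $\Ord$ under the accumulation hypothesis.
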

\begin{proof}
  Let $o \in \Obs(H)$ and $H \hext H'$.
  Since outcomes accumulate, $o \in \Obs(H')$.
  Take $o' = o$; then $o \subseteq o'$ trivially.
\end{proof}

\begin{lemma}[Total-order commitment is non-monotone]
  \label{lem:total-order-nonmono}
  If outcomes are total orders on events, $\Ord$ is prefix extension,
  the event universe admits at least two concurrent events, and the
  specification admits futures whose observations can force either
  ordering of those events, then the specification is not monotone.
\end{lemma}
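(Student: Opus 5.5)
The plan is to exhibit a single monotonicity-failure witness $(H_1, H_2, o_1)$ in the sense of Definition~\ref{def:monotone}, generalizing the register argument of Section~\ref{sec:example}. Fix two concurrent events $a, b$ supplied by the hypothesis. Let $H_0$ be a history containing both $a$ and $b$, unordered by $\rightarrow$. By hypothesis there are futures $H_{ab}$ and $H_{ba}$ of $H_0$ whose observations force $a$ before $b$ and $b$ before $a$, respectively. I will read ``force'' precisely as: every outcome in $\Obs(H_{ab})$ places $a$ before $b$, every outcome in $\Obs(H_{ba})$ places $b$ before $a$, and both sets are nonempty.

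The first step is to produce a committed outcome. Take $H_1 = H_{ab}$ and pick any $o_1 \in \Obs(H_1)$. This outcome is a total order with $a$ preceding $b$.

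The second step is to build a single future $H_2$ of $H_1$ that also contains the forcing events of $H_{ba}$. I construct $H_2$ as the union of $H_{ab}$ and $H_{ba}$ over their common prefix $H_0$. The events added in each branch are causally later than $H_0$ and are kept mutually concurrent, as in the register example, where $H_2$ adds $q$'s read after $p$'s. I then check conditions (i)--(iii) of Definition~\ref{def:future} for $H_1 \hext H_2$. Downward closure holds because new events only acquire predecessors, never successors in $H_1$. I also need the forcing to survive in $H_2$. This follows because the forcing observations are recorded responses, and consistency with observations gives $\Obs(H_2) \subseteq \Obs(H_{ba})$ restricted appropriately; hence every outcome at $H_2$ orders $b$ before $a$.

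The third step derives the contradiction. Suppose $o_1 \Ord o_2$ for some $o_2 \in \Obs(H_2)$. Since $\Ord$ is prefix extension, $o_2$ preserves the relative order of events already in $o_1$, so $a$ precedes $b$ in $o_2$. But $o_2$ must place $b$ before $a$, which is impossible in a total order. Hence $o_1$ has no refinement at $H_2$, and the specification is not monotone. If $\Obs(H_2)$ happens to be empty, the conclusion still holds vacuously, since the existential in Definition~\ref{def:monotone} fails outright.

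The main obstacle is the second step: stating ``can force either ordering'' so that the two forcing futures can be merged into one future of $H_1$ and the forcing is inherited by that merge. I would first try to build this into the reading of the hypothesis, phrasing it as the existence of $H_1 \hext H_2$ where $H_1$ forces $a \prec b$ and $H_2$ forces $b \prec a$. That is exactly the register witness pattern. I would justify this reading by the irrevocability of responses from Section~\ref{sec:operational}, and keep the gluing construction only as a remark.
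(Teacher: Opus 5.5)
Your main route has a genuine gap in the gluing step, and your first step is what creates it. Because you take $H_1 = H_{ab}$ to be a forcing history, the opposite-forcing future $H_{ba}$ branches from $H_0$ rather than from $H_1$, so you have to merge the two branches. Your justification that the forcing survives the merge misapplies consistency with observations. That property only covers extensions by \emph{response} events, whereas your $H_2$ extends $H_{ba}$ by the entire $H_{ab}$ branch. The responses recorded in $H_{ba}$ do persist in $H_2$, but what they force need not, because the other branch can supply an alternative explanation.

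Here is a concrete failure in the register, with $a = \mathsf{w}_p(\vdog)$ and $b = \mathsf{w}_q(\vcat)$. Let $H_{ab}$ have $p$ receive $\mathsf{w}_q$, read $\vcat$, and then write $\vdog$ again ($\mathsf{w}'_p$). Its unique linearization is $o_1 = \langle \mathsf{w}_p, \mathsf{w}_q, r_p{\mapsto}\vcat, \mathsf{w}'_p \rangle$. Let $H_{ba}$ have $q$ receive $\mathsf{w}_p$ and read $\vdog$, which forces $b$ before $a$. In your glued $H_2$, $r_q$ is concurrent with $\mathsf{w}'_p$, so $o_1 \cdot (r_q{\mapsto}\vdog) \in \Obs_{\mathit{lin}}(H_2)$ is a valid prefix extension of $o_1$. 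The glued history is therefore no witness at all. Disjointness of the two branches' events is also assumed rather than derived.

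The missing idea is how to use the word ``either.'' The paper takes the forcing future from the commitment history itself. Let $H_1$ contain concurrent $e_1, e_2$ and take any $o_1 \in \Obs(H_1)$. Being a total order, $o_1$ commits to one of the two orders, say $e_1 < e_2$. Because the hypothesis lets futures force \emph{either} order, there is a future $H_2$ of $H_1$ in which every outcome has $e_2 < e_1$, for example by adding an event causally after $e_2$, such as a read returning $e_2$'s value. Prefix extension preserves $e_1 < e_2$, so $o_1$ has no refinement at $H_2$. $H_1$ never needs to force anything, and no merging of branches arises.

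Your fallback reading, an $H_1 \hext H_2$ with $H_1$ forcing $a \prec b$ and $H_2$ forcing $b \prec a$, does yield a correct two-line proof. However, it assumes a strictly stronger hypothesis, since every outcome at $H_1$ must commit the same way, and it makes no use of ``either.'' Adopt the paper's reading instead, and drop the gluing remark, which is false in general.
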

\begin{proof}
  Let $e_1, e_2$ be concurrent events in some history $H_1$.
  An outcome $o_1 \in \Obs(H_1)$ commits to an ordering, say
  $e_1 < e_2$.
  Extend to $H_2$ by adding an event $e_3$ causally after $e_2$ that
  is only consistent with $e_2 < e_1$ (e.g., a read returning $e_2$'s
  value, as in the register witness of Section~\ref{sec:example}).
  Any $o_2 \in \Obs(H_2)$ extending $o_1$ as a prefix must preserve
  $e_1 < e_2$, but correctness at $H_2$ requires $e_2 < e_1$---a
  contradiction.
  No refinement of $o_1$ exists at $H_2$.
\end{proof}

\begin{lemma}[Bounded-cardinality commitment is non-monotone]
  \label{lem:bounded-card-nonmono}
  If outcomes are partial assignments from a growing process set to at
  most $k$ distinct values, $\Ord$ is set inclusion on the assignment,
  and the event universe admits a future introducing a $(k{+}1)$-th
  process that must be assigned a value distinct from all prior
  assignments, then the specification is not monotone.
\end{lemma}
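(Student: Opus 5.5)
The plan is to exhibit a single non-monotonicity witness, in the same style as the proof of Lemma~\ref{lem:total-order-nonmono}, and then conclude by Definition~\ref{def:monotone}. We need a history $H_1$ with exactly $k$ distinct values already used, an outcome in $\Obs(H_1)$, and a future $H_2$ at which no outcome refines it.

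First construct $H_1$. It contains $k$ processes $p_1,\dots,p_k$, each holding a completed assignment. Take an outcome $o_1 \in \Obs(H_1)$ that assigns these processes pairwise distinct values $v_1,\dots,v_k$, so that $o_1$ already saturates the bound of $k$ distinct values. If the specification only admits such a saturating outcome in some histories, choose $H_1$ among those. The hypothesis implicitly supplies this, since the forced new process must differ from all prior assignments, and that only bites when $k$ values are in use.

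Next, use the hypothesis to take a future $H_1 \hext H_2$ that introduces a $(k{+}1)$-th process $p_{k+1}$. This process must be assigned a value $v_{k+1} \notin \{v_1,\dots,v_k\}$. The future is legitimate because it only adds causally later events (Definition~\ref{def:future}), for instance a new process whose invocation is concurrent with or after $H_1$.

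Now suppose $o_2 \in \Obs(H_2)$ with $o_1 \Ord o_2$. Since $\Ord$ is set inclusion on assignments, $o_2 \supseteq o_1$. So $o_2$ retains $p_i \mapsto v_i$ for $i \le k$, and by response totality and the hypothesis it also contains $p_{k+1} \mapsto v_{k+1}$. Then $o_2$ uses $k{+}1$ distinct values, contradicting the requirement that outcomes assign at most $k$ distinct values. Hence no refinement exists, and the specification is not monotone.

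The main obstacle is interpretive rather than technical. The hypothesis says ``distinct from all prior assignments,'' and it must be read relative to the committed outcome $o_1$, i.e.\ the values fixed at $H_1$, and not merely relative to some other admissible outcome. I would make this explicit as the reading of the lemma's premise. I would also note that, under the irrevocability condition of Section~\ref{sec:operational}, the values in $o_1$ cannot be reassigned, which is exactly what set inclusion enforces.
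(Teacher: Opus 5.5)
Your proposal is correct and follows the paper's proof essentially step for step. Both commit to $k$ pairwise distinct values in $o_1 \in \Obs(H_1)$, extend to a future $H_2$ that introduces a $(k{+}1)$-th process forced to take a fresh value, and observe that any $o_2 \supseteq o_1$ would then use $k{+}1$ distinct values, violating the bound; your reading of ``distinct from all prior assignments'' as relative to the committed $o_1$ is the paper's own implicit assumption, which you simply state explicitly.
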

\begin{proof}
  At $H_1$, commit to $k$ distinct values in $o_1 \in \Obs(H_1)$.
  Extend to $H_2$ by adding an event that introduces a $(k{+}1)$-th
  candidate requiring a distinct value (e.g., a new process that must
  decide a value not yet chosen).
  Any $o_2 \supseteq o_1$ must include all $k$ prior values plus the
  new one---violating the cardinality bound.
  No valid refinement of $o_1$ exists at $H_2$.
\end{proof}

\noindent
Table~\ref{tab:classical} applies these lemmas to classical problems.
Each row identifies the specification's $(\Obs, \Ord)$ and the
applicable lemma; no problem-specific argument is needed.

\begin{table}[h]
\centering
\small
\begin{tabular}{@{}lllll@{}}
\toprule
\textbf{Specification} & \textbf{Outcomes} & \textbf{$\Ord$} &
\textbf{Lemma} & \textbf{Result} \\
\midrule
Linearizable register & Op.\ sequences & Prefix ext.\ &
  \ref{lem:total-order-nonmono} & Non-monotone \\
Atomic snapshot & Value vectors & Prefix ext.\ &
  \ref{lem:total-order-nonmono} & Non-monotone \\
Global snapshot & Downward-closed cuts & Set incl.\ &
  \ref{lem:set-monotone} & Monotone \\
$k$-set agreement & Decision maps ($\le k$ values) & Set incl.\ &
  \ref{lem:bounded-card-nonmono} & Non-monotone \\
Strong renaming & Injective assignments & Set incl.\ &
  \ref{lem:bounded-card-nonmono} & Non-monotone \\
Consensus (authority) & Authority sequence & Prefix ext.\ &
  \ref{lem:bounded-card-nonmono}\,($k{=}1$) & Non-monotone \\
Consensus (values) & Decision maps & Set incl.\ &
  \ref{lem:set-monotone} & Monotone (fixed auth.) \\
\bottomrule
\end{tabular}
\caption{Classical coordination requirements derived from structural
  lemmas.  Each row instantiates $(E, \Obs, \Ord)$ and cites the
  applicable lemma.}
\label{tab:classical}
\end{table}

\paragraph{Reading a CRDT.}
There is an ongoing debate about whether CRDTs are ``safe'' for
clients.
Shapiro et al.\ and much of the practical CRDT literature treat
convergent state as sufficient~\cite{shapiro2011crdt};
LVars~\cite{kuper2013lvars} argues it is not, and restricts reads to
monotone threshold queries.
In our framework, the disagreement reduces to a choice of $\Obs$ and
$\Ord$.
If $\Obs(H)$ is the CRDT state and $\Ord$ is the lattice order, the
specification is monotone (Proposition~\ref{prop:crdt-monotone}).
If $\Obs(H)$ is the set of read-response events and $\Ord$ is set
inclusion on those events, the specification is \emph{not}
monotone---a read returning an intermediate lattice value $v$ at
$H_1$ may find no outcome at a future $H_2$ whose read set includes
$v$ (because a concurrent merge moved the state past $v$ before the
read was delivered to another replica).
Threshold reads restore monotonicity by restricting $\Obs$ to
responses that, once emitted, cannot be contradicted by further
merges.

\bibliographystyle{ACM-Reference-Format}
\bibliography{references}


\begin{thebibliography}{31}


\ifx \showCODEN    \undefined \def \showCODEN     #1{\unskip}     \fi
\ifx \showISBNx    \undefined \def \showISBNx     #1{\unskip}     \fi
\ifx \showISBNxiii \undefined \def \showISBNxiii  #1{\unskip}     \fi
\ifx \showISSN     \undefined \def \showISSN      #1{\unskip}     \fi
\ifx \showLCCN     \undefined \def \showLCCN      #1{\unskip}     \fi
\ifx \shownote     \undefined \def \shownote      #1{#1}          \fi
\ifx \showarticletitle \undefined \def \showarticletitle #1{#1}   \fi
\ifx \showURL      \undefined \def \showURL       {\relax}        \fi
\providecommand\bibfield[2]{#2}
\providecommand\bibinfo[2]{#2}
\providecommand\natexlab[1]{#1}
\providecommand\showeprint[2][]{arXiv:#2}

\bibitem[Abiteboul et~al\mbox{.}(1995)]%
        {abiteboul1995foundations}
\bibfield{author}{\bibinfo{person}{Serge Abiteboul}, \bibinfo{person}{Richard Hull}, {and} \bibinfo{person}{Victor Vianu}.} \bibinfo{year}{1995}\natexlab{}.
\newblock \bibinfo{booktitle}{\emph{Foundations of Databases}}.
\newblock \bibinfo{publisher}{Addison-Wesley}.
\newblock


\bibitem[Ahamad et~al\mbox{.}(1995)]%
        {ahamad1995causal}
\bibfield{author}{\bibinfo{person}{Mustaque Ahamad}, \bibinfo{person}{Gil Neiger}, \bibinfo{person}{James~E. Burns}, \bibinfo{person}{Prince Kohli}, {and} \bibinfo{person}{Phillip~W. Hutto}.} \bibinfo{year}{1995}\natexlab{}.
\newblock \showarticletitle{Causal Memory: Definitions, Implementation, and Programming}.
\newblock \bibinfo{journal}{\emph{Distributed Computing}} \bibinfo{volume}{9}, \bibinfo{number}{1} (\bibinfo{year}{1995}), \bibinfo{pages}{37--49}.
\newblock


\bibitem[Ameloot et~al\mbox{.}(2015)]%
        {ameloot2015weaker}
\bibfield{author}{\bibinfo{person}{Tom~J Ameloot}, \bibinfo{person}{Bas Ketsman}, \bibinfo{person}{Frank Neven}, {and} \bibinfo{person}{Daniel Zinn}.} \bibinfo{year}{2015}\natexlab{}.
\newblock \showarticletitle{Weaker forms of monotonicity for declarative networking: A more fine-grained answer to the CALM-conjecture}.
\newblock \bibinfo{journal}{\emph{ACM Transactions on Database Systems (TODS)}} \bibinfo{volume}{40}, \bibinfo{number}{4} (\bibinfo{year}{2015}), \bibinfo{pages}{1--45}.
\newblock


\bibitem[Ameloot et~al\mbox{.}(2013)]%
        {ameloot2013relational}
\bibfield{author}{\bibinfo{person}{Tom~J Ameloot}, \bibinfo{person}{Frank Neven}, {and} \bibinfo{person}{Jan Van~den Bussche}.} \bibinfo{year}{2013}\natexlab{}.
\newblock \showarticletitle{Relational transducers for declarative networking}.
\newblock \bibinfo{journal}{\emph{Journal of the ACM (JACM)}} \bibinfo{volume}{60}, \bibinfo{number}{2} (\bibinfo{year}{2013}), \bibinfo{pages}{1--38}.
\newblock


\bibitem[Attiya et~al\mbox{.}(2025)]%
        {attiya2025arbitration}
\bibfield{author}{\bibinfo{person}{Hagit Attiya}, \bibinfo{person}{Constantin Enea}, {and} \bibinfo{person}{Enrique Rom\'{a}n-Calvo}.} \bibinfo{year}{2025}\natexlab{}.
\newblock \showarticletitle{Arbitration-Free Consistency}.
\newblock \bibinfo{journal}{\emph{CoRR}}  \bibinfo{volume}{abs/2510.21304} (\bibinfo{year}{2025}).
\newblock
\urldef\tempurl%
\url{https://arxiv.org/abs/2510.21304}
\showURL{%
\tempurl}
\newblock
\shownote{arXiv preprint}.


\bibitem[Baccaert and Ketsman(2026)]%
        {baccaert2026spectrum}
\bibfield{author}{\bibinfo{person}{Tim Baccaert} {and} \bibinfo{person}{Bas Ketsman}.} \bibinfo{year}{2026}\natexlab{}.
\newblock \showarticletitle{A Generalized {CALM} Theorem for Non-Deterministic Computation in Asynchronous Distributed Systems}.
\newblock \bibinfo{journal}{\emph{Information Systems}}  \bibinfo{volume}{138} (\bibinfo{year}{2026}), \bibinfo{pages}{102691}.
\newblock
\href{https://doi.org/10.1016/j.is.2026.102691}{doi:\nolinkurl{10.1016/j.is.2026.102691}}


\bibitem[Bailis et~al\mbox{.}(2013)]%
        {bailis2013hat}
\bibfield{author}{\bibinfo{person}{Peter Bailis}, \bibinfo{person}{Aaron Davidson}, \bibinfo{person}{Alan Fekete}, \bibinfo{person}{Ali Ghodsi}, \bibinfo{person}{Joseph~M Hellerstein}, {and} \bibinfo{person}{Ion Stoica}.} \bibinfo{year}{2013}\natexlab{}.
\newblock \showarticletitle{Highly available transactions: Virtues and limitations}.
\newblock \bibinfo{journal}{\emph{Proceedings of the VLDB Endowment}} \bibinfo{volume}{7}, \bibinfo{number}{3} (\bibinfo{year}{2013}), \bibinfo{pages}{181--192}.
\newblock


\bibitem[Bailis et~al\mbox{.}(2014)]%
        {bailis2014coordination}
\bibfield{author}{\bibinfo{person}{Peter Bailis}, \bibinfo{person}{Alan Fekete}, \bibinfo{person}{Michael~J Franklin}, \bibinfo{person}{Ali Ghodsi}, \bibinfo{person}{Joseph~M Hellerstein}, {and} \bibinfo{person}{Ion Stoica}.} \bibinfo{year}{2014}\natexlab{}.
\newblock \showarticletitle{Coordination avoidance in database systems}.
\newblock \bibinfo{journal}{\emph{Proceedings of the VLDB Endowment}} \bibinfo{volume}{8}, \bibinfo{number}{3} (\bibinfo{year}{2014}), \bibinfo{pages}{185--196}.
\newblock


\bibitem[Berenson et~al\mbox{.}(1995)]%
        {berenson1995critique}
\bibfield{author}{\bibinfo{person}{Hal Berenson}, \bibinfo{person}{Phil Bernstein}, \bibinfo{person}{Jim Gray}, \bibinfo{person}{Jim Melton}, \bibinfo{person}{Elizabeth O'Neil}, {and} \bibinfo{person}{Patrick O'Neil}.} \bibinfo{year}{1995}\natexlab{}.
\newblock \showarticletitle{A critique of {ANSI SQL} isolation levels}.
\newblock \bibinfo{journal}{\emph{ACM SIGMOD Record}} \bibinfo{volume}{24}, \bibinfo{number}{2} (\bibinfo{year}{1995}), \bibinfo{pages}{1--10}.
\newblock


\bibitem[Brewer(2000)]%
        {brewer2000towards}
\bibfield{author}{\bibinfo{person}{Eric~A Brewer}.} \bibinfo{year}{2000}\natexlab{}.
\newblock \showarticletitle{Towards robust distributed systems}. In \bibinfo{booktitle}{\emph{PODC}}, Vol.~\bibinfo{volume}{7}. Portland, OR, \bibinfo{pages}{343--477}.
\newblock


\bibitem[Burckhardt(2014)]%
        {burckhardt2014principles}
\bibfield{author}{\bibinfo{person}{Sebastian Burckhardt}.} \bibinfo{year}{2014}\natexlab{}.
\newblock \showarticletitle{Principles of Eventual Consistency}.
\newblock \bibinfo{journal}{\emph{Foundations and Trends in Programming Languages}} \bibinfo{volume}{1}, \bibinfo{number}{1--2} (\bibinfo{year}{2014}), \bibinfo{pages}{1--150}.
\newblock


\bibitem[Gilbert and Lynch(2002)]%
        {gilbert2002cap}
\bibfield{author}{\bibinfo{person}{Seth Gilbert} {and} \bibinfo{person}{Nancy~A. Lynch}.} \bibinfo{year}{2002}\natexlab{}.
\newblock \showarticletitle{Brewer's conjecture and the feasibility of consistent, available, partition-tolerant web services}.
\newblock \bibinfo{journal}{\emph{{SIGACT} News}} \bibinfo{volume}{33}, \bibinfo{number}{2} (\bibinfo{year}{2002}), \bibinfo{pages}{51--59}.
\newblock
\href{https://doi.org/10.1145/564585.564601}{doi:\nolinkurl{10.1145/564585.564601}}


\bibitem[Hellerstein(2010)]%
        {hellerstein2010declarative}
\bibfield{author}{\bibinfo{person}{Joseph~M Hellerstein}.} \bibinfo{year}{2010}\natexlab{}.
\newblock \showarticletitle{The declarative imperative: experiences and conjectures in distributed logic}.
\newblock \bibinfo{journal}{\emph{ACM SIGMOD Record}} \bibinfo{volume}{39}, \bibinfo{number}{1} (\bibinfo{year}{2010}), \bibinfo{pages}{5--19}.
\newblock


\bibitem[Hellerstein(2025)]%
        {hellerstein2025coord}
\bibfield{author}{\bibinfo{person}{Joseph~M. Hellerstein}.} \bibinfo{year}{2025}\natexlab{}.
\newblock \bibinfo{title}{The Coordination Criterion}.
\newblock
\showeprint[arxiv]{2602.09435}~[cs.DC]
\urldef\tempurl%
\url{https://arxiv.org/abs/2602.09435}
\showURL{%
\tempurl}
\newblock
\shownote{arXiv preprint, v1}.


\bibitem[Hellerstein and Alvaro(2020)]%
        {hellerstein2020keeping}
\bibfield{author}{\bibinfo{person}{Joseph~M. Hellerstein} {and} \bibinfo{person}{Peter Alvaro}.} \bibinfo{year}{2020}\natexlab{}.
\newblock \showarticletitle{Keeping {CALM:} when distributed consistency is easy}.
\newblock \bibinfo{journal}{\emph{Commun. {ACM}}} \bibinfo{volume}{63}, \bibinfo{number}{9} (\bibinfo{year}{2020}), \bibinfo{pages}{72--81}.
\newblock
\href{https://doi.org/10.1145/3369736}{doi:\nolinkurl{10.1145/3369736}}


\bibitem[Herlihy and Wing(1990)]%
        {herlihy1990linearizability}
\bibfield{author}{\bibinfo{person}{Maurice Herlihy} {and} \bibinfo{person}{Jeannette~M. Wing}.} \bibinfo{year}{1990}\natexlab{}.
\newblock \showarticletitle{Linearizability: {A} Correctness Condition for Concurrent Objects}.
\newblock \bibinfo{journal}{\emph{{ACM} Trans. Program. Lang. Syst.}} \bibinfo{volume}{12}, \bibinfo{number}{3} (\bibinfo{year}{1990}), \bibinfo{pages}{463--492}.
\newblock
\href{https://doi.org/10.1145/78969.78972}{doi:\nolinkurl{10.1145/78969.78972}}


\bibitem[Immerman(1986)]%
        {immerman1986relational}
\bibfield{author}{\bibinfo{person}{Neil Immerman}.} \bibinfo{year}{1986}\natexlab{}.
\newblock \showarticletitle{Relational Queries Computable in Polynomial Time}.
\newblock \bibinfo{journal}{\emph{Information and Control}} \bibinfo{volume}{68}, \bibinfo{number}{1--3} (\bibinfo{year}{1986}), \bibinfo{pages}{86--104}.
\newblock
\href{https://doi.org/10.1016/S0019-9958(86)80029-8}{doi:\nolinkurl{10.1016/S0019-9958(86)80029-8}}


\bibitem[Kreps et~al\mbox{.}(2011)]%
        {kreps2011kafka}
\bibfield{author}{\bibinfo{person}{Jay Kreps}, \bibinfo{person}{Neha Narkhede}, {and} \bibinfo{person}{Jun Rao}.} \bibinfo{year}{2011}\natexlab{}.
\newblock \showarticletitle{Kafka: A Distributed Messaging System for Log Processing}. In \bibinfo{booktitle}{\emph{Proceedings of the NetDB Workshop}}.
\newblock


\bibitem[Kuper and Newton(2013)]%
        {kuper2013lvars}
\bibfield{author}{\bibinfo{person}{Lindsey Kuper} {and} \bibinfo{person}{Ryan~R. Newton}.} \bibinfo{year}{2013}\natexlab{}.
\newblock \showarticletitle{{LVars}: Lattice-Based Data Structures for Deterministic Parallelism}. In \bibinfo{booktitle}{\emph{Proc.\ 2nd ACM SIGPLAN Workshop on Functional High-Performance Computing (FHPC)}}. \bibinfo{pages}{71--84}.
\newblock


\bibitem[Laddad et~al\mbox{.}(2025)]%
        {laddad2025flo}
\bibfield{author}{\bibinfo{person}{Shadaj Laddad}, \bibinfo{person}{Alvin Cheung}, \bibinfo{person}{Joseph~M. Hellerstein}, {and} \bibinfo{person}{Mae Milano}.} \bibinfo{year}{2025}\natexlab{}.
\newblock \showarticletitle{Flo: {A} Semantic Foundation for Progressive Stream Processing}.
\newblock \bibinfo{journal}{\emph{Proc. {ACM} Program. Lang.}} \bibinfo{volume}{9}, \bibinfo{number}{{POPL}}, \bibinfo{pages}{241--270}.
\newblock
\href{https://doi.org/10.1145/3704845}{doi:\nolinkurl{10.1145/3704845}}


\bibitem[Lamport(1978)]%
        {lamport1978time}
\bibfield{author}{\bibinfo{person}{Leslie Lamport}.} \bibinfo{year}{1978}\natexlab{}.
\newblock \showarticletitle{Time, Clocks, and the Ordering of Events in a Distributed System}.
\newblock \bibinfo{journal}{\emph{Commun. {ACM}}} \bibinfo{volume}{21}, \bibinfo{number}{7} (\bibinfo{year}{1978}), \bibinfo{pages}{558--565}.
\newblock
\href{https://doi.org/10.1145/359545.359563}{doi:\nolinkurl{10.1145/359545.359563}}


\bibitem[Lamport(2001)]%
        {lamport2001paxos}
\bibfield{author}{\bibinfo{person}{Leslie Lamport}.} \bibinfo{year}{2001}\natexlab{}.
\newblock \showarticletitle{Paxos Made Simple}.
\newblock \bibinfo{journal}{\emph{ACM SIGACT News}} \bibinfo{volume}{32}, \bibinfo{number}{4} (\bibinfo{year}{2001}), \bibinfo{pages}{18--25}.
\newblock


\bibitem[Lehman and Yao(1981)]%
        {lehman1981efficient}
\bibfield{author}{\bibinfo{person}{Philip~L. Lehman} {and} \bibinfo{person}{S.~Bing Yao}.} \bibinfo{year}{1981}\natexlab{}.
\newblock \showarticletitle{Efficient Locking for Concurrent Operations on {B}-Trees}.
\newblock \bibinfo{journal}{\emph{{ACM} Trans. Database Syst.}} \bibinfo{volume}{6}, \bibinfo{number}{4} (\bibinfo{year}{1981}), \bibinfo{pages}{650--670}.
\newblock


\bibitem[Li and Lee(2025)]%
        {li2025coordinationfree}
\bibfield{author}{\bibinfo{person}{Shulu Li} {and} \bibinfo{person}{Edward~A. Lee}.} \bibinfo{year}{2025}\natexlab{}.
\newblock \showarticletitle{A Preliminary Model of Coordination-free Consistency}.
\newblock \bibinfo{journal}{\emph{arXiv}}  \bibinfo{volume}{abs/2504.01141} (\bibinfo{year}{2025}).
\newblock
\href{https://doi.org/10.48550/arXiv.2504.01141}{doi:\nolinkurl{10.48550/arXiv.2504.01141}}


\bibitem[Lynch(1996)]%
        {lynch1996distributed}
\bibfield{author}{\bibinfo{person}{Nancy~A. Lynch}.} \bibinfo{year}{1996}\natexlab{}.
\newblock \bibinfo{booktitle}{\emph{Distributed Algorithms}}.
\newblock \bibinfo{publisher}{Morgan Kaufmann}.
\newblock
\showISBNx{978-1-55860-348-6}


\bibitem[Lynch and Tuttle(1987)]%
        {lynch1987io}
\bibfield{author}{\bibinfo{person}{Nancy~A. Lynch} {and} \bibinfo{person}{Mark~R. Tuttle}.} \bibinfo{year}{1987}\natexlab{}.
\newblock \showarticletitle{Hierarchical Correctness Proofs for Distributed Algorithms}. In \bibinfo{booktitle}{\emph{Proceedings of the 6th Annual ACM Symposium on Principles of Distributed Computing (PODC)}}. \bibinfo{publisher}{ACM}, \bibinfo{pages}{137--151}.
\newblock


\bibitem[Mahajan et~al\mbox{.}(2011)]%
        {mahajan2011consistency}
\bibfield{author}{\bibinfo{person}{Prince Mahajan}, \bibinfo{person}{Lorenzo Alvisi}, {and} \bibinfo{person}{Mike Dahlin}.} \bibinfo{year}{2011}\natexlab{}.
\newblock \showarticletitle{Consistency, Availability, and Convergence}. In \bibinfo{booktitle}{\emph{Technical Report TR-11-22, UT Austin}}.
\newblock


\bibitem[Meiklejohn and Van~Roy(2015)]%
        {meiklejohn2015lasp}
\bibfield{author}{\bibinfo{person}{Christopher Meiklejohn} {and} \bibinfo{person}{Peter Van~Roy}.} \bibinfo{year}{2015}\natexlab{}.
\newblock \showarticletitle{Lasp: A Language for Distributed, Coordination-Free Programming}. In \bibinfo{booktitle}{\emph{Proceedings of the 17th International Symposium on Principles and Practice of Declarative Programming (PPDP)}}. \bibinfo{pages}{184--195}.
\newblock


\bibitem[Milano et~al\mbox{.}(2019)]%
        {milano2019tour}
\bibfield{author}{\bibinfo{person}{Matthew Milano}, \bibinfo{person}{Rolph Recto}, \bibinfo{person}{Tom Magrino}, {and} \bibinfo{person}{Andrew~C Myers}.} \bibinfo{year}{2019}\natexlab{}.
\newblock \showarticletitle{A tour of gallifrey, a language for geodistributed programming}.
\newblock \bibinfo{journal}{\emph{Summit on Advances in Programming Languages}} (\bibinfo{year}{2019}).
\newblock


\bibitem[Power et~al\mbox{.}(2025)]%
        {power2025freetermination}
\bibfield{author}{\bibinfo{person}{Conor Power}, \bibinfo{person}{Paraschos Koutris}, {and} \bibinfo{person}{Joseph~M. Hellerstein}.} \bibinfo{year}{2025}\natexlab{}.
\newblock \showarticletitle{The Free Termination Property of Queries Over Time}. In \bibinfo{booktitle}{\emph{28th International Conference on Database Theory (ICDT)}}. \bibinfo{address}{Barcelona, Spain}.
\newblock


\bibitem[Shapiro et~al\mbox{.}(2011)]%
        {shapiro2011crdt}
\bibfield{author}{\bibinfo{person}{Marc Shapiro}, \bibinfo{person}{Nuno Pregui{\c{c}}a}, \bibinfo{person}{Carlos Baquero}, {and} \bibinfo{person}{Marek Zawirski}.} \bibinfo{year}{2011}\natexlab{}.
\newblock \showarticletitle{Conflict-free replicated data types}. In \bibinfo{booktitle}{\emph{Symposium on Self-Stabilizing Systems}}. Springer, \bibinfo{pages}{386--400}.
\newblock


\end{thebibliography}

\end{document}